\def\eps{\varepsilon}
\def\N{\bbN}
\newcommand {\skalprod} [3] [] {\ensuremath{ \left\langle #2,#3 \right\rangle_{#1}}}
\def\be{\begin{equation}}
\def\ee{\end{equation}}
\def\ba{\begin{align}}
\def\bm{\begin{multline}}
\def\bfig{\begin{figure}[htb]}
\def\efig{\end{figure}}
\numberwithin{equation}{section}
\newtheorem{theorem}{Theorem}[section]
\newtheorem{proposition}[theorem]{Proposition}
\newtheorem{lemma}[theorem]{Lemma}
\newtheorem{definition}{Definition}
\DeclareMathSymbol{\leqslant}{\mathalpha}{AMSa}{"36}
\DeclareMathSymbol{\geqslant}{\mathalpha}{AMSa}{"3E}
\DeclareMathSymbol{\doteqdot}{\mathalpha}{AMSa}{"2B}
\DeclareMathSymbol{\circlearrowright}{\mathalpha}{AMSa}{"08}
\DeclareMathSymbol{\subsetneq}{\mathalpha}{AMSb}{"28}
\DeclareMathSymbol{\supsetneq}{\mathalpha}{AMSb}{"29}
\renewcommand{\leq}{\;\leqslant\;}
\renewcommand{\geq}{\;\geqslant\;}
\newcommand{\dd}{{\rm d}}
\newcommand{\e}[1]{\,{\rm e}^{#1}\,}
\newcommand{\ii}{{\rm i}}
\def\Tr{{\operatorname{Tr\,}}}
\def\Re{{\operatorname{Re\,}}}
\newcommand{\upchi}{\raise 2pt \hbox{$\chi$}}
\newcommand {\omitting}[1] {\backslash \hspace{-0.2 cm} #1}
\newcommand {\norm} [2] [] {\ensuremath{ \left\Vert  #2  \right\Vert_{#1} } } 
\def\writefig#1 #2 #3 {\rlap{\kern #1 truecm \raise #2 truecm
\hbox{#3}}}
\newcommand{\caC}{{\mathcal C}}
\newcommand{\caF}{{\mathcal F}}
\newcommand{\caH}{{\mathcal H}}
\newcommand{\caP}{{\mathcal P}}
\newcommand{\caS}{{\mathcal S}}
\newcommand{\bbC}{{\mathbb C}}
\newcommand{\bbN}{{\mathbb N}}
\newcommand{\bbR}{{\mathbb R}}
\newcommand{\bbZ}{{\mathbb Z}}
\newcommand{\bsj}{{\boldsymbol j}}
\newcommand{\bsI}{{\boldsymbol I}}
\newcommand{\bsP}{{\boldsymbol P}}
\newcommand{\bspsi}{{\boldsymbol\psi}}
\begin{document}


\title[Quantum oscillator in a photon field] {Effective density of states for
a quantum oscillator coupled to a photon field}

\author{Volker Betz and Domenico Castrigiano}
\address{Volker Betz \hfill\newline
\indent Department of Mathematics \hfill\newline
\indent University of Warwick \hfill\newline
\indent Coventry, CV4 7AL, England \hfill\newline
{\small\rm\indent http://www.maths.warwick.ac.uk/$\sim$betz/} 
}
\email{v.m.betz@warwick.ac.uk}

\address{Domenico Castrigiano \hfill\newline
\indent Fakult\"at f\"ur Mathematik \hfill\newline
\indent TU M\"unchen \hfill\newline
\indent Boltzmannstra\ss e 3, 85748 Garching, Germany \hfill\newline
{\small\rm\indent http://www-m7.ma.tum.de/bin/view/Analysis/DomenicoCastrigiano} 
}
\email{castrig@ma.tum.de}

\maketitle

\begin{quote}
{\small
{\bf Abstract.}
We give an explicit formula for the effective partition function of a 
harmonically bound particle minimally coupled to a photon field in the dipole approximation. 
The effective partition function is shown to be  
the Laplace transform of a positive Borel measure, the effective measure of states.
The absolutely continuous part of the latter allows for an analytic continuation, 
the singularities of which give rise to resonances. 
We give the precise location of these singularities, and show that they are well 
approximated by of first order poles with 
residues equal the multiplicities of the corresponding eigenspaces of the uncoupled quantum 
oscillator. Thus we obtain a complete analytic description of the natural line spectrum of the charged 
oscillator. 
}  

\vspace{1mm}
\noindent
{\footnotesize {\it Keywords:} Resonances, Pauli-Fierz model, density of states, partition function, line broadening}

\vspace{1mm}
\noindent
{\footnotesize {\it 2010 Math.\ Subj.\ Class.:} 81V10, 82B10 }
\end{quote}


\section{Introduction}

The standard model for a non-relativistic quantum particle interacting with the radiation field is the 
Pauli-Fierz model using minimal coupling. The Hamiltonian is  
\be \label{PF}
H = \frac{1}{2 m} (p - e A)^2 + V + H_{\rm f},
\ee
where $H_{\rm f}$ is the Hamiltonian of the free field, $A$ is the transverse quantized field and $e$ the electron charge, $p$ is the 
particle momentum and $V$ the particle potential. We do not introduce form factors. 
Over the last years, there has been much activity and significant progress in understanding the Pauli-Fierz model. 
Notable developments include the proof of existence for a ground state when the infrared cutoff is removed 
\cite{GLL}, a detailed spectral analysis 
of the Hamiltonian and the proof of the existence of resonaces \cite{BFS98}, and a proof of enhanced binding through the 
photon field \cite{HS01}.
We refrain from giving a review of the by now vast literature on the subject, and instead refer to \cite{Sp04} 
and the bibliography therein. 

As is apparent already from the above selection of results, the majority of works on the subject investigates 
the spectral structure of the full system, particle and field. In our work, instead we regard the photon field as an 
environment in which the particle is embedded. A  prominent example of this point of view is Feynman's polaron \cite{F} 
by which our present investigations are inspired. 
The aim is  to treat the coupled particle as a closed system, and to derive, for the particle alone,  
effective equations  that capture the influence of the photon field. The particular effective quantity that we 
investigate (and, to our knowledge, indeed introduce) is the \textbf{effective measure of states}.

We consider a version of (\ref{PF}) that is simplified in two ways: Firstly, we use the dipole approximation, which amounts to replacing $A$ with $A(0)$, and secondly  we omit the  
self-interaction $A^2$ of the field. (The omission of the $A^2$ term is necessary only for a part of our calculations, and we 
discuss the possibility of keeping it in the final remark of Section \ref{model and results}.)
The particle potential $V$ is taken  to be harmonic, so that $H$ is completely quadratic, and we add the well-known terms of mass and energy renormalization. In this setup  we  compute  the ratio $Z$
 of the partition functions
of the full  system and of the free  field. We do this by first suitably discretizing both systems, 
and then taking the continuous-mode limit and the ultraviolet limit. This procedure yields a finite result. We find the explicit expression
\be \label{Z}
Z(\beta) = \left[\frac{\rho}{2\pi} \emph{e}^{-2 \rho \ln (\rho) \sin \varphi } \left\vert\Gamma(\ii \rho\,\emph{e}^{-\ii \varphi})\right\vert ^2   \right] ^3
\ee
with  $\rho \,\sim \beta$ 
inversely proportional to the temperature $T$
and sin$\varphi \sim e^2$ determining the strength of the coupling.  See Theorem \ref{Z(beta) UV singular} for details. 

From  the ratio $Z$  of the partition functions one gets an explicit expression for the difference of the respective free energies. Subtracting from this the free energy $F_0$ of the particle leads to the excess free  energy $F_{\rm ex} = -(\hbar /\beta)\ln(Z/Z_0)$. The latter   is particularly interesting in that  it is supposed to be experimentally accessible. Referring  to  \cite{CK87} we only mention that (\ref{Z}) confirms  the  well-known quadratic low-temperature behavior 
\be
F_{\rm ex}(\beta) \approx -\pi\alpha{\rm k}^2 T^2(3mc^2)^{-1}
\ee
 of the excess free energy.
 
In the present case, however, the significance of $Z$ goes beyond that.  As we will argue in Section \ref{subsys} for general coupled systems, 
\eqref{Z} is the effective partition function (or partition function for the particle subsystem). This means that $Z$  replaces $Z_0$ when the small system is regarded as autonomous while retaining an effective 
influence of the field. 
The central fact is that, as we prove, $Z$ is the 
Laplace transform of a positive Borel measure $\mu$. 
We interpret $\mu$ as the \textbf{effective measure of states} for the charged 
oscillator.
It has the same significance for the effective system that 
the measure of states $\mu_0$ (whose Laplace transform is $Z_0$, see \eqref
{Zbetaclassic} and \eqref{dsm discrete}),  
has for the uncoupled oscillator. It determines the probabilities of energy measurements at the system  in thermal 
equilibrium.

It is worth remembering  that in general the ratio 
of two partition functions is not the Laplace transform of some positive Borel measure. Most probably this is the case 
here  as long as  in the derivation of  \eqref{Z} the ultraviolet cutoff $C$ is  kept  finite (even if the 
continuous-mode limit is  already done).  But a finite cutoff in not desirable in any case, as
any  result we obtain for the effective system  would depend on the arbitrary parameter $C$. Thus it is crucial that the 
ultraviolet limit can be carried out, leading to an expression that \textbf{is} the Laplace transform of a positive 
measure.

Formula \eqref{Z}  allows a detailed study of the effective measure of states. 
First of all, it turns out that $\mu$ has one atom at 
the ground state frequency $\omega_\varphi$, is supported on the half line $[\omega_\varphi, \infty[$, and is otherwise absolutely continuous with respect to Lebesgue measure.  The corresponding 
density $\phi$ will be called \textbf{effective density of states}. We prove that $\phi$ has an analytic continuation  with singularities and cuts along the boundary  of a cone with apex $\omega_\varphi$ and 
opening angle  $2 \varphi$. The conjugate pairs of 
singularities  can be approximated by first order singularities in the interior of the cone, up to 
logarithmic corrections. The position and residue of these first order poles can be determined analytically. 
On the real line, they give rise to Lorentz profiles that replace the Dirac peaks present in $\mu_0$. 
The mass of each Lorentz profile is equal to the mass of the corresponding Dirac peak, 
and its position agrees with predictions of QED time-dependent perturbation theory 
up to first order in the fine structure constant $\alpha$. 
See Theorems \ref{ems structure 1} and \ref{ems structure 2} and the 
remarks following them for details.

From the above properties, a connection of the effective density of states with 
the theory of resonances appears obvious. The latter occur
when a small quantum system is coupled to a quantum field or reservoir, at which point eigenvalues 
of the small system dissolve into the continuous spectrum, and stationary states change into metastable states. 
In the Pauli-Fierz model, resonances have been investigated extensively by Bach, Fr\"ohlich and Sigal in
\cite{BFS98}. These authors employ the well-established complex dilation method.  
They study the spectrum of the full system, particle and field; resonances are then 
defined as the singularities of the analytic continuation of the resolvent to the second Riemann sheet.
In contrast, we work in the context of open quantum systems, and therefore the effective small system 
does not even have a Hamiltonian. While there has been work on resonances and decoherence in open quantum 
systems, such as \cite{BFS00} and \cite{MSB08-1,MSB08-2},
all authors seem to rely on the spectral structure of the full Hamiltonian.

We define  a resonance of an effective system in Definition \ref{effreso}, in terms of the complex structure of the 
effective density of states. For the harmonically bound charged particle, we show that this definition, and the concept 
of the effective density of states in general, produce physically reasonable results. Apart from  
facilitating explicit formulae, we do not expect the harmonic particle potential to be a vital 
part of the above picture; we thus conjecture that also for other confined charged systems, the effective density of 
states exists, and that its complex structure yields an appropriate description of the spectral lines by Lorentz 
profiles. A proof of this conjecture may well turn out to be challenging, and will probably require some insight into the 
connections of our results with the theory of \cite{BFS98}.

Our paper is organized as follows. In Section \ref{model and results}, we present our model in detail and 
state the main rigorous results of this work, along with a short discussion. Proofs are given in Sections
\ref{proofs} and \ref{proofs2}. The following Section \ref{concepts} interprets our results 
in the framework of quantum statistical mechanics, and discusses their connection 
to the theory of resonances.

\medskip
{\footnotesize
{\bf Acknowledgments:} We would like to thank Herbert Spohn for many useful discussions, and Marco Merkli for
helpful comments. 
V.B.\ is supported by the EPSRC fellowship EP/D07181X/1.
}


\section{Partition function of the embedded system}\label{concepts}
We regard the particle as embedded into the environment of the radiation  field. An environment is characterized by the fact that it stays in thermal equilibrium if the joint system is in 
thermal equilibrium. Roughly speaking, the interaction has effects on the embedded system but does not change 
the environment. 
More precisely,  when acting on the embedded system the environment carries out transitions between same states.

A well-known example of a quantum system whose embedding  in the environment  determines decisively 
its properties is the polaron, i.e.\ an electron in an ionic crystal. The computations in \cite[Chapter 8]{F} 
confirm that the electron moving with its accompanying distortion of the lattice behaves as a free particle but 
with an effective mass higher than that of the electron. In the same spirit 
we are going to treat the  problem of a charged oscillator surrounded by its own radiation field.  
In this section we put $\hbar=\rm k=1$.


\subsection{Energy distribution in thermal equilibrium}
Let the positive operator $H$ be the Hamiltonian of some quantum system.   For the moment we assume that 
 $\e{-\beta H}$ is trace class.
Classically, the partition function of the system  is given by
\be \label{Zbetaclassic}
Z(\beta) = \Tr \e{-\beta H}.
\ee
The inverse Laplace transform $\mu$ of $Z$, which we call  the \textbf{measure of states} of the quantum system, is the weighted sum of Dirac measures
\be \label{dsm discrete}
\mu = \sum_j N_j \delta_{\lambda_j},
\ee
where $\lambda_j$ are the eigenvalues of $H$ and $N_j$ are  the corresponding finite
multiplicities. Its physical relevance is due to the fact that the statistical uncertainty of 
states of the system in thermal equilibrium at the temperature $T=1/\beta$ is described by the probability measure $\mu^{(\beta)}:=\frac{1}{Z(\beta)}\e{-\beta(\cdot)}\mu$.
If $E$ denotes the spectral measure of $H$, and $\Delta$ a Borel subset of $\bbR$, then
\be \label{mainprop}
\mu^{(\beta)}(\Delta)=\frac{1}{Z(\beta)}\Tr(\e{-\beta H}E(\Delta))
\ee
is the probability for a measurement
of the energy to yield a value in $\Delta$. It is this interpretation that we 
will retain valid also for the effective entities.


\subsection{Spectral discretization}
From the basic  principle of statistical mechanics, once the partition function is known, 
all thermodynamic properties can be found.
In many interesting cases, however, including our  model Hamiltonian (\ref{basic H}), 
$\e{-\beta H}$ is not trace class. 
This raises the problem of how to attribute a partition function to 
the oscillator embedded in the photon field.
The way we solve this problem uses a slight generalization of methods 
from the theory of random Schr\"odinger operators (cf. \cite{KM07}).
The main idea is to approximate $H$ by operators $H_n$ with discrete spectrum. We define 

\begin{definition} \label{spec discret}
Let $H$ be a self-adjoint operator in a Hilbert space $\caH$, and let $(P_n)$ be a family of orthogonal projections on $\caH$. 
Put $H_n = P_n H P_n$. We say that $H_n$ is a {\bf spectral discretization} of $H$ (associated to $P_n$) if 
\begin{itemize}
\item[(i)] $P_n D(H) \subset D(H)$
\item[(ii)] $\lim_{n \to \infty} P_n = \rm 1$ in the strong topology.
\item[(iii)] The spectrum of $H_n$ acting in $P_n \caH$ consists of eigenvalues with finite multiplicity such that $\e{-\beta H_n}$ is trace class for all $n$. 
\end{itemize}  
\end{definition}


\subsection{Separation of a subsystem.}\label{subsys}
Let us now consider a general model consisting of two interacting quantum subsystems with respective state 
spaces  $\caH_{\rm 1}$ 
and  $\caH_{\rm 2}$. For convenience, the system corresponding to $\caH_1$ will be called the small system, the other one the large system; but note that there is actually no assumption on the dimension of 
$\caH_1$ and $\caH_2$. The Hamiltonian of the joint system, acting on $\caH_{\rm 1} \otimes \caH_{\rm 2}$ is 
\be \label{coup}
H = H_{\rm 1} \otimes 1 + 1 \otimes H_{\rm 2} + H_{\rm I},
\ee
where $H_{\rm I}$ describes the interaction. Again we assume for $H$ the existence of $\Tr \e{-\beta H}$, and 
also for $H_1$ and $H_2$.  Let the large system be  in thermal equilibrium.
Then the  influence of it onto the small system can be  described in a statistical way by averaging the states of the large system. Accordingly, the density operator attributed to the small system is $W(\beta):=(\Tr \e{-\beta H_2})^{-1}\Tr_2 \e{-\beta H}$, where $\Tr_2$ denotes the partial trace  with respect to the second factor. In the case $H_{\rm I}=0$ of non-interacting systems, $W(\beta)$ equals $\e{-\beta H_1}$. Moreover, if the joint system is in the  state $(\Tr \e{-\beta H})^{-1} \,\e{-\beta H}$, any prediction on measurements which concern only the small system is given by  the  thus uniquely determined state 
$(\Tr W(\beta))^{-1}W(\beta)=$
$(\Tr \e{-\beta H})^{-1}\Tr_2 \e{-\beta H}$
of the small system. This is due to the fact that 
$\Tr( \Tr_2 \e{-\beta H}\,E_1)=\Tr(\e{-\beta H}\;E_1 \otimes 1)$
holds for every orthogonal projection $E_1$ on $\caH_1$ and uniquely determines $\Tr_2$.

Consequently, we regard
$Z(\beta):=\Tr W(\beta)$
as the partition function attributed to the small system.  It satisfies
\be \label{Zbeta} 
Z(\beta)=\frac{\Tr \e{-\beta H}}{\Tr \e{-\beta H_2}},
\ee
and we will interpret its inverse Laplace transform as the effective measure of states (cf. \ref{dsm discrete}) 
for the small system when in contact with the large system. 
Of course, for this interpretation to make perfectly sense, 
$Z$ would need to be the Laplace transform of a positive Borel measure. This is not true in general, but 
we will find that it does hold in our model when performing the following limiting process.



\subsection{Infinitely large environment}
The large system in \eqref{PF} is described by $H_2 = H_{\rm f}$, which has absolutely continuous spectrum on 
$[0,\infty]$. Thus $\e{-\beta H_{2}}$ 
is not trace class, and neither  is $\e{-\beta H}$, whence  (\ref{Zbeta}) is not available. We consider 
spectral discretizations  
$H_n$ and $H_{2,n}$ of $H$ and $H_2$, and define 
\be \label{Znbeta}
Z_{n}(\beta) = \frac{\Tr \e{-\beta H_n}}{\Tr \e{-\beta H_{2,n}}}.
\ee
In the case we study 
$\lim_{n \to \infty} Z_n$
exists. After removing an ultraviolet cutoff we get the partition function $Z$, which is 
the Laplace transform of a positive measure $\mu$ with support in $[ 0,\infty [$. It is called the \textbf{effective measure of 
states} of the embedded system.  As in Section 2.1 we introduce the probability measures 
\be \label{pms}
\mu^{(\beta)}=\frac{1}{Z(\beta)}\e{-\beta(\cdot)}\mu
\ee
 for $\beta>0$, which are  fundamental in that they determine the probabilities of energy measurements:
if the charged oscillator is in thermal equilibrium  at the temperature $\frac{1}{\beta}$, then
$\mu^{(\beta)}(\Delta)$ is the probability for a measurement
of the energy to yield a value in $\Delta$.

One can read off $Z$ and $\mu$ the same amount of information about the spectrum of the 
embedded system  as in the case of a trace class operator. We will see that the embedding into its own 
radiation field  changes 
the behaviour of the oscillator qualitatively: instead of a purely discrete spectrum, we now obtain apart from 
the stable ground state an absolutely continuous 
spectrum on the positive  half axis.


\subsection{Resonances}\label{resonances}
An interesting property of the effective measure of states $\mu$ of  the present system is that the analytic continuation of its density function $\phi$
has conjugate pairs of first order singularities, which for small coupling are very close to the real line, cf. 
Theorems \ref{ems structure 1}, \ref{ems structure 2}. They manifest themselves as Lorentz profiles (Breit-Wigner resonance shapes) in the effective density of states. There is a connection with the theory of resonances \cite{H90}, 
which we will elucidate here. 

Generally speaking, bound states, which are perturbed, give rise to resonances. As shown in \cite{H90},  \cite{BFS98}, in case 
of a bounded electron coupled to a photon field
these are
related to singularities $p$ of the resolvent of the 
Hamiltonian $H$
off  the real axis coming from the eigenvalues of the uncoupled electron system $H_0$. 
In accordance with the definition of a \textbf{resonance}  given in \cite[XII.6]{RS}
 this means that there 
are  $p\in \bbC$ with $\operatorname{Im} \,p<0$ and a dense set  $\mathcal{D}$ of state vectors $u$ for which 
the matrix elements $R^{(u)}(z) = \langle u,(H-z)^{-1}u\rangle$ and $R^{(u)}_0(z)=\langle u,(H_0-z)^{-1}u\rangle$ have an 
analytic 
continuation from the upper complex half-plane across the positive real axis into the lower complex half-plane 
such that the points $p$ are singular points of the former but regular for the latter.

There is an equivalent formulation in terms of the scalar spectral measures. 
Let $E$ be the spectral measure of $H$. 
For given state vector $u$, denote by $\mu^{(u)}$ the \textbf{scalar spectral measure} on $[0,\infty[$ given by $\mu^{(u)}
(\Delta):=\langle u,E(\Delta)u\rangle$. Let $V$ be  the maximal open set on which the distribution function $w$ of $\mu^{(u)}$ is real analytic. We consider the complex analytic continuations $\phi^{(u)}$ of $w'\vert V$ on symmetric domains with respect to the real axis so that $\overline{\phi^{(u)}(z)}=\phi^{(u)}(\overline{z})$ holds. Let $\phi_0^{(u)}$ 
denote the respective object for $H_0$.

\begin{definition} \label{reso} 
Let $\mathcal{D}$ be a dense set of state vectors and $p=\hat{\omega} -\ii \frac{\gamma}{2} \in\bbC$ with $
\hat{\omega}>0$, $\gamma>0$. Then $p$ is a \textbf{resonance}   if for every $u\in\mathcal{D}$ there exist 
complex analytic continuations $\phi^{(u)}$ and $\phi_0^{(u)}$ such that $p$ is a singular point of the former and a 
regular point of the latter. 
\end{definition}

In the Appendix it is shown that  the two definitions of a resonance  are equivalent. More precisely,  
$R^{(u)}$ is analytic on $\bbC\setminus [0,\infty[$ and its  analytic continuation across the positive real axis to the second Riemann sheet equals $R^{(u)}+2\pi \ii \phi^{(u)}$. The same holds true for
the respective objects for $H_0$.

The singularities of $\phi^{(u)}$
occur in complex conjugate pairs. They manifest themselves as Lorentz profiles in $\phi^{(u)}$ along the real axis. For an illustration  assume the simplest case that the resonance  $p$ is a pole  of first order and that there are no other 
singularities than $p$ and its conjugate $\overline{p}$
in some open rectangle $U$, which is symmetric with respect to the real axis and the axis through $p,\,
\overline{p}$. Then there is a holomorphic function $\varphi^{(u)}$ on $U$ such that $\phi^{(u)}(z)=
\frac{\varphi^{(u)}(z)}{(z-p)(z-\overline{p})}$ for $z\in U\setminus\{p,\overline{p}\}$. Hence along the real axis not far from  $\hat{\omega}$  one has
\[
\phi^{(u)}(\omega)\simeq {\rm constant} \frac{\gamma /2}{(\omega-\hat{\omega})^2+\gamma^2 /4},
\]
which is a \textbf{Lorentz profile}. A quasi energy eigenstate $u$, which in the energy representation shows a narrow 
frequency band $\Delta$ around $\hat{\omega}$, decays exponentially. Indeed,
the matrix element $\langle u,\e{-\ii t H}u\rangle$ is given by
 the Fourier transform $\hat{\mu}^{(u)}
(t)=\int\e{-\ii t \omega}\dd\mu^{(u)}(\omega)$ 
of the scalar spectral measure, which in the present case 
becomes

\[
\hat{\mu}^{(u)}(t)=\int_\Delta \e{-\ii t\omega}\phi(\omega)\dd \omega\simeq \, {\rm constant} \, \e{-\ii \hat{\omega}t}\e{-t
\gamma/2}.
\]

The matrix  element $R^{(u)}(z)$
of the resolvent equals the  Stieltjes transform of the scalar spectral measure
\[
\tilde{\mu}^{(u)}(z):=\int \frac{\dd\mu^{(u)}(\omega)}{\omega-z}.
\]
The formula
\[
\hat{\mu}(t)=-\operatorname{lim}_{\epsilon\to 0+}\frac{1}{\pi}\int_{-\infty}^\infty\e{-\ii t\omega}\operatorname{Im}\tilde{\mu}(\omega+\ii \epsilon)\dd \omega,
\]
which converts the Stieltjes transform of any finite Borel measure $\mu$ on $[0,\infty[$ into its Fourier  transform,
allows to study the decay of a state $u$ related to a resonance starting from $R^{(u)}$.
Cf. \cite[I.32]{BFS98} and \cite{H90}.

\bigskip
To summarize so far, we have seen that it is quite natural (and equivalent) to determine the resonances by the analytic continuation of the densities of  scalar spectral measures of the Hamiltonian rather than 
 by the analytic continuation of  matrix elements of its resolvent. Now the question is how to apply these considerations to an effective system for which there is no Hamiltonian. Note that once the effective measure of states $\mu$ is determined, the probability distributions $\mu^{(\beta)}$  of the energy are at our disposal for every temperature $1/\beta$ (see Section 2.4). These are the primary  entities in case of an effective system. It appears natural that they have to replace the scalar spectral measures of the Hamiltonian. Hence common singular points (in the lower complex half-plane) of the analytic continuations $\phi^{(\beta)}$ of the respective densities are attributed to resonances.  It is immediate from  (\ref{pms})  that the former are exactly the singular points of the analytic continuation $\phi$ of the density regarding the effective measure of states $\mu$. Thus we are led to  the following
\begin{definition} \label{effreso} 
Resonances of an effective system are the singularities (in the lower complex half-plane) of the analytic continuation of the effective  density of states.
\end{definition} 
   We have seen  that resonances according  Definition \ref{effreso}   give rise to Lorentz profiles along the real axis representing the natural lines of the system. The results in Theorems \ref{ems structure 1} and \ref{ems structure 2}  then confirm  that these resonances actually occur for the charged harmonic oscillator.  What is more, they furnish a  physically  verifiable picture of the spectrum  in its entirety.


\section{Model and main results} \label{model and results}

\subsection{Particle coupled to a photon field}
Our starting point is the minimal coupling without a form factor  
(cf. \cite[II.D.1\,(D.1)]{CDG87}), in the dipole 
approximation 
of a charged particle interacting with a photon field and  in a  quadratic external potential. 
The Hamiltonian is given by 
\be \label{basic H}
H = (H_{\rm p}  + H_{\rm r}) \otimes {\rm 1} + {\rm 1}  \otimes H_{\rm f} + H_{\rm I} 
\ee
acting in $L^2(\bbR^3) \otimes \caF^{\otimes 2}$. Let us explain the symbols above. 
\[
H_{\rm p} = -\hbar^2 \frac{m}{2} \eta^2 \Delta_q + \frac{1}{2m} q^2,
\]
acting in $L^2(\bbR^3)$, is the energy of the harmonically bound particle, $m$ is the particle mass, and $\eta$ is the 
frequency associated to the the harmonic external potential. Note that we represent the 
particle in momentum space, which will later turn out to be convenient.
\[
H_{\rm f} =  H_{{\rm f},1} \otimes \rm1+ \rm1 \otimes H_{{\rm f},2}\;\; \mbox{ and }\;\; H_{{\rm f},\sigma} =\hbar  \int  \omega(k) a^{\ast}(k,\sigma) a(k,\sigma) \, \dd k, \quad \sigma=1,2
\]
is the energy of the free field. 
Each $H_{{\rm f},\sigma}$, $\sigma=1,2$, acts in the symmetric Fock  space 
$\caF = \bigoplus_{n=0}^{\infty} \caF^{(n)}$, where $\caF^{(n)}$ is the set of all   
$f \in L^2(\bbR^{3n})$ such that $f(k_1, \ldots, k_n) = 
f(k_{\pi(1)}, \ldots, k_{\pi(n)})$ for all permutations $\pi$, and all $k_j \in \bbR^3$.
We write an element $F$ of the Fock space as $F=\sum_{n=0}^\infty F^{(n)}$ with  $\sum_{n=0}^\infty \Vert F^{(n)}\Vert^2<\infty$ for $F^{(n)} \in \caF^{(n)}$.
We will need the following well-known fact: 
if $\{ f_m : m \in \bbN \}$ is a orthonormal basis of $L^2(\bbR^3)$ or a subspace $X$ thereof, then the system 
\[
\{ f_{m_1} \hat \otimes \ldots \hat\otimes f_{m_n} : m_1, \ldots, m_n \in \N, n \in \N \}
\]
is a orthonormal basis of the Fock space over $L^2(\bbR^3)$ or $X$, respectively. Above, $\hat\otimes$ denotes the 
symmetric tensor product.

By definition, $H_{\rm f, \sigma}$ acts on $\caF^{(n)}$ as 
\[
 \left(\hbar  \int \omega(k) a^\ast(k,\sigma) a(k,\sigma) \dd k \, F^{(n)} \right) (k_1, \ldots, k_n) = \sum_{j=1}^n\hbar \omega(k_j) F^{(n)}(k_1, \ldots, k_n),
\]
where $\omega(k) = c |k|$  is the photon dispersion relation with $c$ the velocity of light. The interaction between the particle and the field is given by 
\[
H_{\rm I} =  H_{{\rm I},1} \otimes {\rm 1} + {\rm 1} \otimes H_{{\rm I},2}\;\; \mbox{ and }\;\; H_{{\rm I},\sigma} 
= \frac{e}{m}  \int \chi_c(k) \sqrt{\frac{\hbar}{4\pi^2 \omega(k)}} (q \cdot u_\sigma(k)) 
\Big( a^\ast(k,\sigma) + a(k,\sigma) \Big) \dd k,
\]
with an ultraviolet cutoff implemented through 
$\chi_{c} = 1_{\{|k| <C\}}$. It is well known (see e.g.\ \cite{BHLMS02}) that $H_{\rm I}$ is infinitesimally form bounded with respect to $H_{\rm p} \otimes {\rm 1} + {\rm 1} \otimes H_{\rm f}$, so that $H$ is essentially self-adjoint on $D(H_{\rm p} \otimes {\rm 1} + {\rm 1} \otimes H_{\rm f})$. 

The  electron charge $e$ determines the strength of the coupling, and the transverse vector fields $u_\sigma$, 
such that $u_1(k)$, $u_2(k)$, $k/|k|$ are orthonormal, account for photon polarization.
$H_{\rm I}$ acts on $L^2(\dd q)$ as a 
multiplication operator, and on $\caF$ the creation and annihilation operators $a^\ast(g) \equiv \int g(k) a^\ast(k) \, \dd k$ and $a(g) \equiv \int g(k) a(k) \, \dd k$  act in the following way:
The creation operator takes $\caF^{(n)}$ into $\caF^{(n+1)}$ through 
\[
(a^\ast(g) F)^{(n+1)}(k_1, \ldots, k_{n+1}) = \frac{1}{\sqrt{n+1}} \sum_{j=1}^{n+1} g(k_j) F^{(n)}(k_1, \ldots, 
\omitting{k_i}, \ldots, k_{n+1}),
\]
where $\omitting{k_j}$ means that the argument $k_j$ is omitted. The annihilation operator takes $\caF^{(n)}$ into $\caF^{(n-1)}$ 
through 
\[
(a(g) F)^{(n-1)}(k_1, \ldots, k_{n-1}) = \sqrt{n} \int g(k) F^{(n)}(k,k_1,\ldots,k_{n-1}) \dd k.
\]
Finally,
\[
\begin{split}
H_{\rm r}& = \frac{e^2 \eta^2}{2 \pi^2 m} \sum_{\sigma=1,2} \int  \frac{\chi_c(k)}{\omega^2(k)} 
(u_\sigma(k)\cdot q)^2 \, \dd^3 k\, -\, \hbar \frac{e^2 \eta^2}{4\pi^2 m} \int \frac{\chi_c(k)}{\omega^2(k) (\omega(k) + \eta)} \, \dd^3 k\\
        & =\frac{4}{3\pi c^2}(e^2\eta^2/m)\,C\, q^2\,-\,\frac{\hbar}{\pi c^3}(e^2 \eta^2/m)\,\text{ln}(1+\frac{cC}{\eta})
\end{split}
\]

is a renormalization acting nontrivially only on $L^2(\dd q)$. The first term yields the standard mass renormalization 
accounting for the increased mass of the particle due to the coupling to the photon field, and
the second term    
is the well-known energy renormalization, which
compensates the negative shift   of the energy scale also due to the interaction with the photons.  Including these two terms will keep finite results when we remove the ultraviolet cutoff by taking 
$C\to \infty$.

We now give spectral discretizations for $H$ and $H_{\rm f}$. 
For $N \in \bbN$ let $ \Lambda_j,\, j \in \bbN$, be the  half open cubes with volume $V = N^{-3}$ and vertices in the lattice $(\frac{1}{N}\bbZ)^3\subset \bbR^3$. We define the one dimensional orthogonal projections 
\be \label{h_j}
P_N(j): L^2(\bbR^3) \to L^2(\bbR^3), \quad f \mapsto  \skalprod{h_j}{f} h_j \, \, \text{ with } h_j := \frac{1}{\sqrt{V}} 1_{\Lambda_j}.
\ee
Let  $J_N := \{j \in \N: \Lambda_j^{(N)} \cap B(0,N) \neq \emptyset \}$, where $B(0,N)$ is the centered ball with radius $N$. Since $P_N(j)\,P_N(j')=0$ for $j \not=j'$,
\[
P_N := \sum_{j \in J_N} P_N(j)
\]
is a finite dimensional orthogonal projection. Hence so is $\bsP_N$,
the second quantisation of $P_N$, acting on 
$\caF^{(n)}$ as 
\be \label{P_N}
\bsP_N\, f_1\hat \otimes \dots \hat \otimes f_n= (P_N f_1) \hat \otimes \dots \hat \otimes (P_Nf_n).
\ee
We then define 
\[
H_{\rm f, N} :=  (\bsP_N\otimes \bsP_N)\, H_{\rm f} \, (\bsP_N\otimes \bsP_N), \qquad
H_N := (1 \otimes \bsP_N\otimes \bsP_N) \,H\, (1 \otimes \bsP_N\otimes \bsP_N).
\]
\begin{proposition} \label{P_n is spectral discretization}
$H_{\rm f,N}$ and  $H_N$ are spectral discretizations of $H_{\rm f}$ and $H$. 
\end{proposition}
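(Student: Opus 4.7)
The plan is to verify the three conditions of Definition \ref{spec discret} for $H_{\rm f,N}$ and $H_N$ separately, exploiting that $P_N L^2(\bbR^3)$ is the finite-dimensional subspace spanned by the orthonormal family $\{h_j:j\in J_N\}$. The structural observation is that since the supports of $h_i$ and $h_j$ are disjoint for $i\neq j$, one has $\langle h_i,\omega h_j\rangle=\delta_{ij}\omega_j$ with $\omega_j:=V^{-1}\int_{\Lambda_j}\omega(k)\,\dd k>0$, so $P_N\omega P_N$ is diagonal in the basis $\{h_j\}$ with strictly positive eigenvalues, and $\bsP_N$ preserves every photon-number sector of $\caF$.

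Conditions (i) and (ii) are routine. For (ii), $P_N\to 1$ strongly on $L^2(\bbR^3)$ by density of compactly supported piecewise constant functions, which promotes sector-wise to $\bsP_N\otimes\bsP_N\to 1$ strongly on $\caF\otimes\caF$. For (i), on the dense subspace of $D(H)$ consisting of vectors with finitely many photons and a smooth particle profile the inclusion is transparent, since $H_{\rm f}$ acts as multiplication by $\sum\omega(k_j)$ and is bounded on the compactly $k$-supported image of $\bsP_N$; it extends to all of $D(H)$ by a standard closure argument, using also that $H_{\rm p}$ commutes with the photon projection and that $H_{\rm I}$ is infinitesimally form-bounded by the free Hamiltonian.

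For condition (iii) on $H_{\rm f,N}$, the restriction $\bsP_N H_{\rm f}\bsP_N$ on $\bsP_N\caF$ is the second quantization $d\Gamma(P_N\omega P_N)$; its eigenvectors are the symmetric tensor products $h_{j_1}\hat\otimes\cdots\hat\otimes h_{j_n}$ with eigenvalues $\sum_i\omega_{j_i}$. Including both polarizations one obtains
\[
\Tr\,\e{-\beta H_{\rm f,N}}=\prod_{j\in J_N}(1-\e{-\beta\omega_j})^{-2},
\]
finite since $|J_N|<\infty$ and every $\omega_j>0$. For $H_N$, the photon sector is the Fock space over the $2|J_N|$-dimensional space $P_N L^2\oplus P_N L^2$; introducing canonical modes $b_{j,\sigma}^*:=a^*(h_j,\sigma)$, the restriction of $H_N$ to $L^2(\bbR^3)\otimes\bsP_N(\caF\otimes\caF)$ becomes a quadratic polynomial in the $3+2|J_N|$ canonical pairs $(q_i,p_{q_i})$ and $(b_{j,\sigma},b_{j,\sigma}^*)$. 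The mass-renormalization term in $H_{\rm r}$ is tuned so that, after completing the square in the linear photon couplings, the effective quadratic form in $q$ remains positive definite; a symplectic Bogoliubov diagonalization then yields $3+2|J_N|$ independent one-dimensional harmonic oscillators with positive frequencies $\Omega_k$, so that $\sigma(H_N)=\{\sum_k n_k\Omega_k:n_k\in\bbN_0\}$ has finite multiplicities and $\Tr\,\e{-\beta H_N}=\prod_k(1-\e{-\beta\Omega_k})^{-1}<\infty$. I expect the main technical obstacle to be this positive-definiteness check, which amounts to verifying that the $C$-dependent counterterm in $H_{\rm r}$ dominates the negative $q^2$ contribution arising from the linear photon coupling; once it is in place, the remaining spectral analysis reduces to the standard normal-mode decomposition of a quadratic bosonic Hamiltonian.
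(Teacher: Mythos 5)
Your argument is correct and follows essentially the same route as the paper: conditions (i) and (ii) via strong convergence of $P_N$ and its second quantization (the paper's Lemma \ref{L^2 convergence}), and condition (iii) by recognizing $H_{\rm f,N}$ as a finite family of independent oscillators with frequencies $\bar\omega_j>0$ (your trace formula agrees with \eqref{tradisfie}) and $H_N$ as a finite-dimensional coupled quadratic oscillator diagonalized into positive normal modes, exactly as in the paper's unitary transfer to \eqref{oscillator} and Lemma \ref{traceformula}. The positive-definiteness of the quadratic form, which you flag but defer, is likewise left implicit at this point in the paper (it is an assumption of Lemma \ref{traceformula}, substantiated only by the explicit determinant computations of Section \ref{proofs2}), so your proposal matches the paper's level of detail there.
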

The proof will be given in Section \ref{proofs}.


\subsection{Main results}

Our main results are about the partition function $Z(\beta)$ of the oscillator embedded in the photon field, and about
its inverse Laplace transform, the effective measure of states.  
It is convenient to introduce dimensionless entities. The  strength of the coupling is given 
by an angle $\varphi$ such that 
\be\label{cosphi}
\mbox{sin}\varphi:=\frac{\eta}{3\,\omega_{CF}}\,\alpha=\frac{\eta}{3mc^3}\,e^2
\ee
holds, where
$\alpha = \frac{e^2}{\hbar c}$ is  the Sommerfeld fine structure constant and $\omega_{CF}=\frac{mc^2}{\hbar}$  the Compton frequency. As to the cutoff $C$ we introduce  $\gamma:=cC/\eta$, and  $\beta=\frac{\hbar}{\rm{k} T}$ is replaced by the variable
$\rho:=\frac{\eta\beta}{2\pi}$.
\begin{theorem} \label{Z(beta) UV regular}
$\Tr \e{- \frac{\beta}{\hbar} H_N}$ and $\Tr \e{- \frac{\beta}{\hbar} H_{\rm f, N}}$ exist for each $N \in \bbN$ and each 
$\beta > 0$. Moreover, $Z(\beta;\gamma) := \lim_{N \to \infty} (\Tr \e{- \frac{\beta}{\hbar} H_N}) / ( \Tr \e{- \frac
{\beta}{\hbar} H_{\rm f,N}})$ exists for each $\beta >0$, and 
\be \label{UVreg}
Z(\beta;\gamma) = 
\left[2\pi\rho\; \emph{e}^{ -2 \rho \ln (1+\gamma) \sin\varphi}
\prod_{l=1}^{\infty}\left(1+\frac{\rho^2}{l^2}+\frac{4}{\pi}\frac{\rho}{l} \sin (\varphi) \arctan \left(\gamma\frac
{\rho}{l}\right)\right) \right] ^{-3}.
\ee
\end{theorem}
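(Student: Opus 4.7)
The proof proceeds in three stages: reducing to a finite-dimensional diagonalisation, recasting the trace ratio as a Matsubara product, and evaluating the $N\to\infty$ limit.

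\emph{Existence and diagonalisation.} On the invariant subspace $L^2(\bbR^3)\otimes \bsP_N\caF\otimes \bsP_N\caF$, the Hamiltonian $H_N$ is a positive quadratic form in the three oscillator modes together with the $2\dim \bsP_N$ discretised photon modes; likewise $H_{\rm f,N}$ in the photon modes only. A Bogoliubov-type linear symplectic diagonalisation yields purely discrete spectra with finitely many positive normal mode frequencies $\Omega_k^{(N)}$ and $\omega_j^{(N)}$, so $\Tr\e{-\beta H_N/\hbar}$ and $\Tr\e{-\beta H_{\rm f,N}/\hbar}$ are finite products of Bose-Einstein factors. Applying the Weierstrass factorisation $(2\sinh(\beta\omega/2))^{-1} = (\beta\omega)^{-1}\prod_{\ell\geq 1}(1+\omega^2/\nu_\ell^2)^{-1}$ at Matsubara frequencies $\nu_\ell = 2\pi\ell/\beta$, the trace ratio rearranges to
\[
\frac{\Tr\e{-\beta H_N/\hbar}}{\Tr\e{-\beta H_{\rm f,N}/\hbar}} = \e{-\beta\Delta_N/\hbar}\cdot\frac{\prod_j\omega_j^{(N)}}{\prod_k\Omega_k^{(N)}}\cdot\frac{1}{\beta^3}\prod_{\ell=1}^{\infty}\frac{\prod_j(1+(\omega_j^{(N)})^2/\nu_\ell^2)}{\prod_k(1+(\Omega_k^{(N)})^2/\nu_\ell^2)},
\]
where $\Delta_N$ is the ground-state shift caused by the renormalisation $H_{\rm r}$.

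\emph{Isotropy reduction.} A Schur-complement expansion of $\det(\lambda I + K)$ with respect to the three-dimensional oscillator block of the stiffness matrix $K$ of the full coupled quadratic form shows that each Matsubara factor equals $[\det M_N(\nu_\ell)]^{-1}$, where $M_N(\nu)$ is a $3\times 3$ matrix encoding the renormalised oscillator stiffness corrected by the static self-energy obtained from integrating out the photons. The cubic symmetry of the lattice discretisation combined with the polarisation sum $\sum_\sigma u_\sigma^i(k)u_\sigma^j(k) = \delta^{ij}-\hat k^i\hat k^j$, whose angular average is $\tfrac23\delta^{ij}$, force $M_N(\nu) = \hat d_N(\nu)\,I_3$ for a single scalar $\hat d_N(\nu)$; the cubic exponent appearing in \eqref{UVreg} is then immediate.

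\emph{Continuum limit and closed form.} The scalar $\hat d_N(\nu)$ is a Riemann sum over the discretised photon momenta; passing $N\to\infty$ it converges to an integral over $\bbR^3$ whose angular part contributes the factor $8\pi/3$ while the radial part reduces to the elementary integral
\[
\int_0^C \frac{\dd k}{c^2k^2+\nu^2} = \frac{1}{c\nu}\arctan\!\Bigl(\frac{cC}{\nu}\Bigr),
\]
from which the $\arctan$ in \eqref{UVreg} originates. The linearly $C$-divergent contribution of the static self-energy is cancelled \emph{exactly} by the mass-renormalisation summand in $H_{\rm r}$ (the latter being designed precisely for this purpose), leaving
\[
\hat d(\nu) = 1 + \frac{\eta^2}{\nu^2} + \frac{4\sin\varphi}{\pi}\,\frac{\eta}{\nu}\,\arctan\!\Bigl(\frac{cC}{\nu}\Bigr).
\]
The ratio $\beta^{-3}\prod_j\omega_j^{(N)}/\prod_k\Omega_k^{(N)}$ simplifies, via the static identity $d(0)=\eta^2$, to $(\eta\beta)^{-3} = (2\pi\rho)^{-3}$, while the constant-energy piece of $H_{\rm r}$ contributes the prefactor $\e{6\rho\sin\varphi\ln(1+\gamma)}$ through $\Delta_N$. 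Substituting $\nu_\ell = \eta\ell/\rho$, $\sin\varphi = e^2\eta/(3mc^3)$ and $\gamma = cC/\eta$ then reproduces \eqref{UVreg}.

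\emph{Main obstacle.} The principal rigorous difficulty is the justification of the interchange of the $N\to\infty$ limit with the infinite Matsubara product. One needs $N$-uniform bounds of the form $|\ln \hat d_N(\nu_\ell)| = O(\ell^{-2})$ at large $\ell$, so that $\sum_\ell \ln \hat d_N(\nu_\ell)$ converges to $\sum_\ell \ln\hat d(\nu_\ell)$ by dominated convergence. A secondary subtlety is that the coupling $(q\cdot u_\sigma)(a+a^{\ast})$ is of momentum-position rather than of position-position type, so the Bogoliubov diagonalisation and the corresponding Schur complement take a slightly non-standard form; in particular, the precise algebraic cancellation of the $C$-linear UV divergence between the mass-renormalisation term in $H_{\rm r}$ and the static self-energy must be tracked carefully to recover the finite limit.
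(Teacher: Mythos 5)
Your proposal is correct and follows essentially the same route as the paper's proof: normal-mode diagonalisation of the finite-dimensional quadratic Hamiltonian plus the Weierstrass/Matsubara factorisation of $\sinh$ (Lemma \ref{traceformula}), the Schur-complement determinant identity reducing each Matsubara factor to a $3\times 3$ block (Lemma \ref{detformula}), the polarisation sum and isotropy yielding the cube, the Riemann-sum-to-integral continuum limit with the $\arctan$ radial integral (Proposition \ref{riemann convergence}), and dominated convergence with $N$-uniform $O(l^{-2})$ bounds to interchange the limit with the infinite product. You also correctly identify the two genuine technical points; note only that the paper sidesteps your "momentum--position coupling" worry by representing the particle in momentum space, which renders the whole Hamiltonian a standard real quadratic form \eqref{generalosc}.
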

Formula \eqref{UVreg} already appears in \cite{CK87}, where it is derived non-rigorously using path integrals. 
We present a conceptually simpler, and mathematically rigorous, proof of Theorem \ref{Z(beta) UV regular} in 
Sections \ref{sec trace formula} -- \ref{sec continuum limit}. 

Removing the ultraviolet cutoff in \eqref{UVreg} results in a significantly simpler expression.
\begin{theorem} \label{Z(beta) UV singular}
For all $\beta>0$, $Z(\beta) := \lim_{\gamma \to \infty} Z(\beta;\gamma)$ exists, namely 
\be \label{UVsing}
Z(\beta) = \left[\frac{\rho}{2\pi} \e{-2 \rho \ln (\rho) \sin \varphi} \left\vert\Gamma(\ii \rho\,\e{-\ii \varphi})\right\vert ^2   \right] ^3.
\ee
Moreover, $\beta \mapsto Z(\beta)$ is the Laplace transform of a positive measure. 
\end{theorem}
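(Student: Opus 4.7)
The plan is to compute $\lim_{\gamma\to\infty}Z(\beta;\gamma)$ by unraveling the infinite product in \eqref{UVreg} with the Weierstrass factorization of $\Gamma$, and then to establish positivity via an explicit integral representation. Set $\sigma:=\sin\varphi$, and note that $\arctan(\gamma\rho/l)=\pi/2-\arctan(l/(\gamma\rho))$ allows us to rewrite each factor in \eqref{UVreg} as
\[
a_l(\gamma) \;=\; b_l \,-\, \frac{4\sigma\rho}{\pi l}\arctan\!\Bigl(\frac{l}{\gamma\rho}\Bigr), \qquad b_l \;:=\; 1 + \frac{2\sigma\rho}{l} + \frac{\rho^2}{l^2}.
\]
The crucial algebraic observation is that $b_l=(1+\ii\rho\e{-\ii\varphi}/l)(1-\ii\rho\e{\ii\varphi}/l)$, so applying the Weierstrass product $\Gamma(z)^{-1}=z\,\e{\gamma_E z}\prod_l(1+z/l)\e{-z/l}$ to the conjugate pair $z=\ii\rho\e{-\ii\varphi}$ and $\bar z=-\ii\rho\e{\ii\varphi}$ yields the regularized identity
\[
\prod_{l=1}^\infty b_l\,\e{-2\sigma\rho/l} \;=\; \frac{1}{\rho^2\,\e{2\gamma_E\rho\sigma}\,|\Gamma(\ii\rho\e{-\ii\varphi})|^2},
\]
which will produce the $\Gamma$-factor in \eqref{UVsing}.

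To pass to the ultraviolet limit, the divergent harmonic piece $2\sigma\rho\sum_l 1/l$ has to be absorbed by the renormalization $\e{-2\sigma\rho\ln(1+\gamma)}$. Splitting the sum at $l\sim\gamma\rho$, I would use $\arctan(l/(\gamma\rho))\sim l/(\gamma\rho)$ for $l\ll\gamma\rho$ (so that $a_l(\gamma)$ and $b_l$ are uniformly close there) and the expansion $\arctan(l/(\gamma\rho))=\pi/2-\gamma\rho/l+O((\gamma\rho/l)^3)$ for $l\gg\gamma\rho$ (giving quantitative summable tail bounds). Combined with $\sum_{l\le N}1/l=\ln N+\gamma_E+o(1)$, this yields
\[
\sum_{l=1}^\infty \log a_l(\gamma) - 2\sigma\rho\ln(1+\gamma) \;\longrightarrow\; 2\sigma\rho(\ln\rho+\gamma_E) + \sum_{l=1}^\infty\bigl[\log b_l - 2\sigma\rho/l\bigr]
\]
as $\gamma\to\infty$, from which \eqref{UVsing} follows after substituting the Weierstrass identity above and inverting the cube.

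For the positivity statement, write $Z=Z_1^3$ with $Z_1(\beta):=(\rho/(2\pi))\e{-2\sigma\rho\ln\rho}|\Gamma(\ii\rho\e{-\ii\varphi})|^2$; since convolutions of positive measures remain positive, it suffices to show that $Z_1$ is the Laplace transform of one. A natural strategy is to use the Beta-integral representation
\[
|\Gamma(\ii p)|^2 \;=\; \Gamma(2\sigma\rho)\int_{-\infty}^\infty \e{-\ii v\rho\cos\varphi}\,(2\cosh(v/2))^{-2\sigma\rho}\,\dd v, \qquad p=\rho\e{-\ii\varphi},
\]
(coming from $\Gamma(\ii p)\Gamma(-\ii\bar p)=\Gamma(2\sigma\rho)B(\ii p,-\ii\bar p)$ after $t=1/(1+u)$, $u=\e{v}$), rescale $v=2\pi w/\eta$ so that the oscillatory factor reads $\e{-\ii\beta w\cos\varphi}$, and then deform the contour in the $w$-plane into the lower half-plane. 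The pole structure should produce an atom at the ground-state frequency $\omega_\varphi$, while the boundary values of the analytic continuation should furnish a positive density on $(\omega_\varphi,\infty)$.

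The positivity claim is the main obstacle. The authors stress that the finite-cutoff approximants $Z(\beta;\gamma)$ are \emph{not} Laplace transforms of positive measures, so positivity is a genuinely new feature produced only by the UV limit. Making the contour deformation rigorous requires the detailed analytic structure of $|\Gamma(\ii\rho\e{-\ii\varphi})|^2$---the location of its singularities, the branch behaviour, and uniform decay in the appropriate complex sectors---which is precisely the content of the later Theorems \ref{ems structure 1} and \ref{ems structure 2}. I therefore expect the positivity part of Theorem \ref{Z(beta) UV singular} to be proved in tandem with (or as a corollary of) those structural results.
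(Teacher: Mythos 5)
Your treatment of the limit $\gamma\to\infty$ is essentially the paper's argument and is correct in outline: the factorization $b_l=\bigl(1+\tfrac{\rho}{l}\ii\e{-\ii\varphi}\bigr)\bigl(1-\tfrac{\rho}{l}\ii\e{\ii\varphi}\bigr)$, the Weierstrass product for the conjugate pair, and the cancellation of the divergent harmonic sum against $2\sigma\rho\ln(1+\gamma)$ are exactly what the paper does. (The paper packages the divergence via the identity $\caC=\lim_{s\to\infty}\bigl[\tfrac{2}{\pi}\sum_l \tfrac1l\arctan(s/l)-\ln s\bigr]$ together with monotone convergence for $\sum(\ln(1+h_l(\gamma))-h_l(\gamma))$, rather than your hand-split at $l\sim\gamma\rho$, but these are the same computation.)

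The positivity statement, however, is a genuine gap in your proposal, and it is the harder half of the theorem. Your plan — the Beta-integral representation $|\Gamma(\ii p)|^2=\Gamma(2\sigma\rho)\int\e{-\ii v\rho\cos\varphi}(2\cosh(v/2))^{-2\sigma\rho}\,\dd v$ followed by a contour deformation — does not obviously produce a \emph{positive} density even if the deformation can be justified: the $\rho$-dependence sits simultaneously in $\Gamma(2\sigma\rho)$, in the exponent of the $\cosh$, and in the oscillatory factor, so the deformed integral is not manifestly of the form $\int\e{-\beta\omega}\,\dd\mu(\omega)$ with $\mu\geq 0$, and you give no mechanism for verifying positivity of the boundary values. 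Moreover, your fallback — deducing positivity from Theorems \ref{ems structure 1} and \ref{ems structure 2} — inverts the paper's logic: those theorems are themselves \emph{consequences} of the representation established in the positivity proof. The paper's actual argument (Section \ref{sec complete monotonicity}) is entirely different and self-contained: by Bernstein's theorem it suffices to show $Z$ is completely monotone; Binet's formula for $\ln\Gamma$ gives $\ln Z(\beta)=\int_0^\infty\e{-t\rho}g(t)\,\dd t-t_\varphi\rho$ with $g(t)=\tfrac{6}{t}\Re\bigl(\tfrac{1}{1-\e{-\tau}}-\tfrac{1}{\tau}-\tfrac12\bigr)$, $\tau=\ii\e{-\ii\varphi}t$; an elementary inequality (Lemma \ref{pos}) shows $g\geq 0$; and complete monotonicity is closed under products and under composition with $\exp$. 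That nonnegative $g$ is then the seed for the density $\varrho=\sum_n g^{\ast n}/n!$ in the later structural theorems. This key idea — passing to $\ln Z$ and exhibiting it as (a linear term plus) the Laplace transform of a nonnegative function — is absent from your proposal and would need to be supplied. (A minor point: the paper only conjectures, and does not prove, that the finite-cutoff $Z(\beta;\gamma)$ fails to be a Laplace transform of a positive measure.)
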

The statements of Theorem \ref{Z(beta) UV singular} also appear in \cite{CK87}. But the derivation of 
\eqref{UVsing} is not fully rigorous there and the proof of the final statement contains a small error.
Both results are improved in  Sections
\ref{sec remove uv cutoff}  and \ref{sec complete monotonicity}, respectively.

We now turn to the effective measure of states $\mu_\varphi$ of the charged 
oscillator, i.e.\ the Laplace inverse of $Z_\varphi$. 
We display the dependence on the strength of the coupling by the 
suffix $\varphi \in [0,\frac{\pi}{2}]$. For $\varphi =0$, we are in the 
case of zero coupling, $Z_0(\beta)=(2\sinh\frac{\beta\eta}{2})^{-3}$ is the partition function of the oscillator, and
\be
\mu_0= \sum_{j=0}^\infty 
\left(
\begin{matrix}
j+2\\
2
\end{matrix}
\right)
\delta_{(j+3/2)\eta}.
\ee
By (\ref{pms}), $\mu_\varphi^{(\beta)}=\frac{1}{Z(\beta)}\e{-\beta(\cdot)}\mu_\varphi$ is the probability 
distribution of the energy  at the temperature $T=\frac{\hbar}{\rm k \beta}$. An easy first step is the following 
convergence result. 
\begin{proposition}\label{cont}
As $\varphi \to 0$,  convergence
$\mu_\varphi^{(\beta)} \to \mu_0^{(\beta)}$ and $\mu_\varphi \to \mu_0$
holds in the sense of the vague topology.
\end{proposition}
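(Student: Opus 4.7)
The plan is to exploit the explicit form \eqref{UVsing} of $Z_\varphi(\beta)$ and invoke the classical continuity theorem for Laplace transforms of probability measures on $[0,\infty)$. First I would note that \eqref{UVsing} exhibits $\varphi \mapsto Z_\varphi(\beta)$ as a continuous function on $[0,\pi/2]$ for each fixed $\beta > 0$: the exponential factor $\e{-2\rho \ln(\rho)\sin\varphi}$ is obviously continuous, and so is $|\Gamma(\ii\rho\,\e{-\ii\varphi})|^2$, since the argument $\ii\rho\,\e{-\ii\varphi}$ lies in the closed first quadrant for $\rho>0$ and $\varphi \in [0,\pi/2]$, hence stays bounded away from the poles of $\Gamma$. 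At $\varphi=0$, the reflection formula $|\Gamma(\ii y)|^2 = \pi/(y\sinh \pi y)$ recovers $Z_0(\beta) = (2\sinh(\beta\eta/2))^{-3}$, and in particular $Z_\varphi(\beta) \to Z_0(\beta)$ as $\varphi \to 0$ for every $\beta > 0$.

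Next I would fix $\beta > 0$ and consider the probability measures $\mu_\varphi^{(\beta)}$ from \eqref{pms}. Their Laplace transform at $s \geq 0$ is $\int_0^\infty \e{-sx} \dd\mu_\varphi^{(\beta)}(x) = Z_\varphi(\beta+s)/Z_\varphi(\beta)$, which by the previous step converges pointwise, as $\varphi\to 0$, to $Z_0(\beta+s)/Z_0(\beta)$. This limit is the Laplace transform of the probability measure $\mu_0^{(\beta)}$ and is continuous at $s=0$ with value $1$. The classical continuity theorem for Laplace transforms on the half-line (cf.\ Feller, Vol.~II, Ch.~XIII) therefore yields weak convergence $\mu_\varphi^{(\beta)} \to \mu_0^{(\beta)}$, which in particular implies the claimed vague convergence.

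To obtain $\mu_\varphi \to \mu_0$ itself, I would pick any $\beta>0$ and any $f \in C_c([0,\infty))$. Since $f$ has compact support, $g(x) := \e{\beta x} f(x)$ is bounded and continuous, so the weak convergence just established, combined with $Z_\varphi(\beta) \to Z_0(\beta)$, yields
\[
\int f \, \dd\mu_\varphi \;=\; Z_\varphi(\beta) \int g \, \dd\mu_\varphi^{(\beta)} \;\longrightarrow\; Z_0(\beta) \int g \, \dd\mu_0^{(\beta)} \;=\; \int f \, \dd\mu_0,
\]
which is the desired vague convergence.

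I do not anticipate a serious obstacle. Apart from the explicit formula of Theorem \ref{Z(beta) UV singular}, the only ingredient is the standard continuity theorem for Laplace transforms of sub-probability measures on $[0,\infty)$, and its two hypotheses --- pointwise convergence of Laplace transforms on $(0,\infty)$ and continuity of the limit at $0$ --- are both transparent from \eqref{UVsing}. The small detail worth double-checking is that $\mu_0^{(\beta)}$ is a genuine probability measure, which however is clear from $Z_0(\beta)^{-1}\sum_j \binom{j+2}{2}\e{-(j+3/2)\eta\beta} = 1$.
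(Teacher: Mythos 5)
Your argument is correct and follows essentially the same route as the paper's: both pass to the tilted probability measures $\mu_\varphi^{(\beta)}$, deduce their weak convergence from a continuity theorem applied to the ratio $Z_\varphi(\beta+\cdot)/Z_\varphi(\beta)$, and then untilt with a compactly supported test function exactly as you do. The only difference is that you invoke the Laplace-transform continuity theorem on the real half-line where the paper uses the Fourier transform $t\mapsto Z_\varphi(\beta+\ii t)/Z_\varphi(\beta)$ and L\'evy's theorem --- a cosmetic variation, with the minor advantage that your version needs \eqref{UVsing} only for real arguments.
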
 

\begin{proof}
One finds $\hat{\mu}^{(\beta)}_\varphi(t) =Z_\varphi(\beta +\ii t)/Z_\varphi(\beta)$ for the Fourier transform of $\mu^
{(\beta)}_\varphi$ at $t\in \bbR$.   Since $Z_\varphi(\beta +\ii t)/Z_\varphi(\beta)\to Z_0(\beta +\ii t)/Z_0(\beta)= \hat{\mu}
_0^{(\beta)}(t)$ as $\varphi\to 0$, the vague convergence of $(\mu_\varphi^{(\beta)})_\varphi$  follows from the 
continuity theorem. This implies 
the vague convergence of $(\mu_\varphi)_\varphi$ simply noting  that  if $\psi$ is a continuous function on $\bbR$ with
compact support then so is $\operatorname{exp}(\beta\, (\cdot))\psi$.
\end{proof}
The foregoing result implies that the probability mass of $\mu_\varphi$ concentrates near the lines at $(j+3/2)\eta$ and vanishes in between as the coupling strength tends to zero. This is exactly what is expected. In the following  we study the structure of $\mu_\varphi$ more closely. We define 
\be
\omega_\varphi:= \frac{3}{\pi}
\Big(\sin \varphi + (\frac{\pi}{2}-\varphi) \cos\varphi\Big)\eta,
\ee
and write $\delta_{\omega_\varphi}$ for the Dirac measure at $\omega_\varphi$, and 
$\phi \lambda^1$ for the Lebesgue measure on $\bbR$ with density $\phi$. 

\begin{theorem} \label{ems structure 1} \hspace*{\fill} 
\begin{itemize}
\item[i)]
There exists a  function $\phi:\bbR\to\bbR$ being zero for $\omega< \omega_\varphi$ and positive continuous  for $\omega\ge \omega_\varphi$ with  $\phi(\omega_\varphi) = \frac{\pi}{\eta} \sin \varphi$ such that
\[
\mu_\varphi = \delta_{\omega_\varphi}  + \phi \lambda^1.
\]
\item[ii)]
 $\phi$ is real analytic for $\omega > \omega_\varphi$, and extends to an analytic function on $\bbC$ up to singularities at
\[
p_j=\omega_\varphi +  j\eta \e{- \ii\varphi}\mbox{\;and\; } \overline{p_j}=\omega_\varphi +  j\eta \e{ \ii\varphi},\qquad j \in \bbZ \setminus \{0\},
\]
and cuts along 
$z= \omega_\varphi + s\e{\pm \ii\varphi}$ for $s \in \bbR$, $|s| \geq \eta$. 
\end{itemize}
\end{theorem}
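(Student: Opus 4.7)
My plan uses the explicit expression for $Z_\varphi(\beta)$ from Theorem~\ref{Z(beta) UV singular} together with Stirling's asymptotic for the Gamma function; both the atom/density decomposition and the analytic continuation of $\phi$ can then be extracted by standard (but delicate) complex analysis.

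For part~(i), I would first analyze the asymptotic behavior of $Z_\varphi(\beta)$ as $\beta\to +\infty$. Writing $|\Gamma(\ii\rho\,\e{-\ii\varphi})|^{2}=\Gamma(\ii\rho\,\e{-\ii\varphi})\,\Gamma(-\ii\rho\,\e{\ii\varphi})$ and applying Stirling to each factor (using $\arg(\ii\rho\,\e{\mp\ii\varphi})=\pi/2\mp\varphi$), a careful cancellation shows that the leading contributions---namely the $\rho\ln\rho\sin\varphi$ and $\rho(\pi/2-\varphi)\cos\varphi$ terms---combine to reproduce exactly $\beta\omega_\varphi/3$ in $\ln Z_\varphi^{1/3}$; the definition of $\omega_\varphi$ is tuned precisely to this cancellation. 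The resulting estimate
\[
\e{\beta\omega_\varphi}Z_\varphi(\beta) \;=\; 1 + \frac{\pi\sin\varphi}{\eta\beta} + O(1/\beta^{2})
\]
combined with positivity of $\mu_\varphi$ from the complete-monotonicity proof of Section~\ref{sec complete monotonicity} identifies $\omega_\varphi$ with $\inf\operatorname{supp}\mu_\varphi$ and forces $\mu_\varphi(\{\omega_\varphi\})=1$. Moreover the next-order term $\pi\sin\varphi/(\eta\beta)$ is the Laplace transform of $(\pi\sin\varphi/\eta)\,\indicator{\omega\ge\omega_\varphi}$, so an Abelian argument yields the boundary value $\phi(\omega_\varphi)=\pi\sin\varphi/\eta$.

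To promote this to a continuous density on $[\omega_\varphi,\infty)$, I would extend the Stirling analysis to vertical lines $\beta=c+\ii t$. The estimate $|\Gamma(\ii r\,\e{\ii\theta})|\sim\sqrt{2\pi/r}\,\e{-(\pi/2-|\theta|)r}$ as $r\to\infty$ for $|\theta|<\pi/2$, applied to $\rho=\eta(c+\ii t)/(2\pi)$, gives exponential decay of $\tilde Z(\beta):=Z_\varphi(\beta)-\e{-\beta\omega_\varphi}$ in $|t|$ at rate proportional to $(\pi/2-\varphi)\eta/\pi$, which is positive since $\varphi<\pi/2$. Absolute integrability along the vertical line then yields a continuous density $\phi$ via Fourier inversion; positivity on $[\omega_\varphi,\infty)$ is inherited from $\mu_\varphi-\delta_{\omega_\varphi}$.

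For part~(ii), I would start from the Bromwich representation $\phi(\omega)=\tfrac{1}{2\pi\ii}\int_{c-\ii\infty}^{c+\ii\infty}\e{\omega\beta}\tilde Z(\beta)\,\dd\beta$ and deform the contour. The integrand $\tilde Z$ has triple poles at $\beta_{n}^{\pm}=\pm 2\pi\ii n\,\e{\pm\ii\varphi}/\eta$ ($n\ge 1$) from the Gamma factors, together with a branch cut inherited from $\ln\rho$ which I place along $\rho\in(-\infty,0]$. Deforming the contour around these features expresses $\phi(\omega)$ as a sum of residue contributions (each entire in $\omega$) plus a cut integral. Parametrised as a Laplace-type integral along complex rays in the $\beta$-plane, the cut contribution is analytic in $\omega$ exactly when $\e{\omega\beta}$ decays along the cut; the obstructing rays in the $\omega$-plane are precisely $\omega=\omega_\varphi+s\,\e{\pm\ii\varphi}$. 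The discrete singularities $p_{j}=\omega_\varphi+j\eta\,\e{-\ii\varphi}$ emerge from pinches between Gamma-pole residues and the branch cut, with spacing $\eta$ forced by the integer spacing of the Gamma poles; the conjugate singularities $\overline{p_{j}}$ follow from the symmetry $\tilde Z(\bar\beta)=\overline{\tilde Z(\beta)}$.

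The main technical obstacle will be the pinching analysis in this last step: tracking all residue and cut contributions as the deformation is parametrised by $\omega$, verifying that pinch singularities occur exactly at the $p_{j}$ and $\overline{p_{j}}$ and nowhere else, and that the cuts consist precisely of the subsets $|s|\ge\eta$ of the rays (i.e. the inner segments $|s|<\eta$ are genuinely analytic) rather than the entire rays. A secondary subtlety is the rigorous Abelian matching yielding the precise value $\phi(\omega_\varphi)=\pi\sin\varphi/\eta$.
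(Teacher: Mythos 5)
Your proposal has two genuine gaps. The first is in part (i): the claimed exponential decay of $\tilde Z(\beta)=Z_\varphi(\beta)-\e{-\beta\omega_\varphi}$ along vertical lines $\beta=c+\ii t$ is false. Along such lines the two Gamma factors $\Gamma(\ii\rho\e{-\ii\varphi})\Gamma(-\ii\rho\e{\ii\varphi})$ have arguments tending to $\pi-\varphi$ and $\varphi$, and in Stirling's formula the $|\rho|\ln|\rho|$ terms cancel between the two factors while the remaining exponential terms cancel exactly against the prefactor $\e{-2\rho\ln\rho\sin\varphi}$; this is forced, since $Z_\varphi$ is the transform of a measure with a unit atom and so cannot decay along vertical lines. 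The difference $\tilde Z$ then decays only like $1/|t|$ — necessarily so, because the density you are trying to construct has a jump of size $\frac{\pi}{\eta}\sin\varphi$ at $\omega_\varphi$, and the Fourier transform of a function with a jump is not absolutely integrable. Hence your Fourier-inversion step does not produce a continuous density as stated (at best $\hat\mu\in L^2$ gives an $L^2$ density), and strict positivity of $\phi$ on $[\omega_\varphi,\infty)$ is nowhere addressed. The second gap is in part (ii): the Bromwich deformation is only a heuristic. The singularities of $Z_\varphi$ in the $\beta$-plane are triple poles at $\frac{2\pi n}{\eta}\e{\pm\ii(\pi/2+\varphi)}$ plus a logarithmic cut, all yielding residue contributions that are \emph{entire} in $\omega$, while the cut integral along negative real $\beta$ is analytic for $\Re\omega>0$; the singularities $p_j$ of $\phi$ can therefore only arise from the divergence and resummation of the infinite residue series at the boundary rays $\arg(\omega-\omega_\varphi)=\mp\varphi$, not from any ``pinching''. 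Establishing that this resummation has branch points precisely at $p_{\pm1}$ (so that $|s|<\eta$ is analytic), and justifying the deformation at all given the merely $O(1/|t|)$ decay on vertical lines and the superexponential behaviour of $\e{-2\rho\ln\rho\sin\varphi}$ in parts of the left half-plane, is the entire content of the theorem and is missing.

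For comparison, the paper avoids all of this by working with $\ln Z_\varphi$ instead of $Z_\varphi$: Binet's formula exhibits $\ln Z_\varphi(\beta)+t_\varphi\rho$ as the Laplace transform of an explicit, strictly positive function $g$ with $g(0^+)=\frac12\sin\varphi$ and simple poles at $q_j=2\pi j\e{\mp\ii\varphi}$. Exponentiating gives $\mu_\varphi=\delta_{\omega_\varphi}+\varrho\,\lambda^1$ with $\varrho=\sum_{n\ge1}\frac{1}{n!}g^{\ast n}$ \emph{by construction} — positivity, continuity and the boundary value are immediate — and the analytic continuation, poles and cuts of $\phi$ follow from an elementary analysis of convolutions of functions with poles (Lemma \ref{anal cont}). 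You should either adopt that route or supply the missing resummation analysis; the asymptotic identification of the atom at $\omega_\varphi$ via $\e{\beta\omega_\varphi}Z_\varphi(\beta)\to1$ is the one part of your argument that is correct and complete as it stands.
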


So the zero-point frequency $\frac{3}{2}\eta$ of the  oscillator is shifted down to $\omega_\varphi$ when the oscillator 
couples to the radiation field, and 
$\phi$ is the \textbf{effective density of states} of the charged oscillator. Part i) above is already shown in 
\cite{CK87}, but  in Section \ref{sec complex structure 1}  an elementary proof is given. According to i)
there is a stable ground state at the energy $\hbar \omega_\varphi$, while 
at the same place the absolutely continuous part of the spectrum jumps to $\frac{\pi}{\eta} \sin \varphi$. 
Hence the ground state is no longer isolated. It also implies that there is no other stable and no singular part of the spectrum. Part ii) goes beyond these 
qualitative statements, and already shows the existence of \textbf{resonances} (cf.\ Definition \ref{reso}) at 
$p_j, \,j \in \bbN$. Its proof will be given in Section \ref{sec complex structure 1}. However, it still contains
no information on the nature of the singularities; in order to observe unadulterated \textbf{Lorentz profiles} on the real line, these need 
to be simple poles. This, and more, is true. To formulate the corresponding result, let us define
\[
\ell_j(z):= \frac{-1}{2 \pi \ii (z - p_j)} + \frac{1}{2 \pi \ii (z - \overline{p_j})},
\]
for $z \in \bbC$. When restricted to the real line,
\[
\ell_j(\omega)  = \frac{1}{\pi} \frac{j \eta \sin \varphi}
{(\omega-\omega_\varphi - j \eta \cos \varphi)^2 + (j \eta \sin \varphi)^2}
\]
is a Lorentz profile with total mass $\int \ell_j \, \dd \omega = 1$.

\begin{theorem} \label{ems structure 2}
Let $N \in \bbN$ and $z=\omega_\varphi +s\e{\ii \chi}$ for \,$0< s < N\eta$ and $\vert \chi\vert <\frac{\pi}{2}$.  Then
\[
\phi(z) = \left\{ \begin{array}{ll}
\displaystyle \phantom{-} \sum_{j=1}^N \binom{j+2}{2} \ell_j(z) + h_N(z) & \text{if } \vert \chi\vert  < \varphi,\\[4mm]
\displaystyle  - \sum_{j=1}^3  (-1)^{j} \binom{3}{j} \ell_j(z) + \tilde{h}_N(z) & \text{if } \vert \chi\vert > \varphi.
\end{array} \right.
\]
Above, $h_N$ and $\tilde{h}_N$ are analytic up to singularities at $p_j,\,\overline{p_j}$ and cuts along $z = \omega_\varphi +s \e{\pm \ii \varphi}$, $s \geq \eta$.
Moreover, there exist a constant  $C$ depending on $N$  such that $h_N$ (and similarly $\tilde{h}_N$) satisfies
$ |h_N(z)| \leq C(1+ \vert \ln (\varphi-\vert\chi\vert)\vert^{\eta N})$ for $\vert\chi\vert<\varphi$.
\end{theorem}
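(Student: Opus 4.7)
The plan is to obtain the claimed decomposition by starting from an integral representation of $\phi$ derived, during the proof of Theorem \ref{ems structure 1}, from the explicit formula \eqref{UVsing} for $Z(\beta)$. Writing $Z(\beta) = [f(\rho)]^3$ with
\[
f(\rho) = \frac{\rho}{2\pi}\,\e{-2\rho\ln(\rho)\sin\varphi}\,\Gamma(\ii\rho\,\e{-\ii\varphi})\,\Gamma(-\ii\rho\,\e{\ii\varphi}),\qquad \rho = \eta\beta/(2\pi),
\]
the two Gamma factors give simple poles of $f$ at $\rho = \ii j\,\e{\ii\varphi}$ and at $\rho = -\ii j\,\e{-\ii\varphi}$ for $j\in\bbN$, and hence $[f]^3$ has triple poles there. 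The inverse-Laplace integral representing $\phi(z)$ becomes a $\rho$-contour integral of $[f(\rho)]^3$ times $\e{(2\pi \rho/\eta)(z-\omega_\varphi)}$, and the choice of contour is dictated by $\arg(z-\omega_\varphi)$.

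For $z = \omega_\varphi + s\e{\ii\chi}$ with $|\chi|<\varphi$ and $s<N\eta$, I would shift the contour past the first $N$ pairs of conjugate triple poles $\ii j\,\e{\ii\varphi}$ and $-\ii j\,\e{-\ii\varphi}$, $1\le j\le N$, so that the remaining contour integral defines $h_N(z)$. A careful Laurent expansion of $[f(\rho)]^3$ around each triple pole, combined with a Taylor expansion of $\e{(2\pi\rho/\eta)(z-\omega_\varphi)}$, extracts three singular contributions, among which only the simple-pole piece reassembles into a Lorentzian. Adding the contribution at the conjugate pole and using
\[
\ell_j(z) = \frac{-1}{2\pi\ii(z-p_j)} + \frac{1}{2\pi\ii(z-\overline{p_j})},
\]
I would identify this simple-pole piece as $\binom{j+2}{2}\ell_j(z)$. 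The combinatorial weight $\binom{j+2}{2}$ coincides with the degeneracy of the $j$-th 3D-oscillator level appearing in the uncoupled measure $\mu_0$; this identification is essentially forced by the continuity statement in Proposition \ref{cont}, and algebraically reflects the fact that $Z=f^3$ is the threefold product of a single-mode partition function.

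When $|\chi|>\varphi$, the contour cannot be deformed past the triple poles without first crossing one of the rays that carry the cuts of the analytic continuation of $\phi$ (Theorem \ref{ems structure 1}~(ii)). Computing the jump across these rays and inserting the result into $[f]^3$ via the binomial expansion shows that the long series $\sum_{j=1}^N\binom{j+2}{2}\ell_j$ is replaced by the three-term sum $-\sum_{j=1}^3(-1)^j\binom{3}{j}\ell_j$. The coefficients $\binom{3}{j}$ are precisely those of the binomial expansion of the difference between the cut-crossed and the original $f^3$, and the expansion terminates at $j=3$ because $\binom{3}{j}=0$ for $j\ge 4$. The remaining tail of singular contributions is absorbed into $\tilde{h}_N$.

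For the remainder estimate, I would bound the contour integral defining $h_N$ using Stirling's asymptotics for the two Gamma factors together with an explicit estimate of $|\e{-2\rho\ln(\rho)\sin\varphi}|$. The dominant contribution as $|\chi|\to\varphi^-$ comes from the part of the contour forced closest to the rays $\arg\rho=\pm(\pi/2+\varphi)$; the gap between the contour and the ray is of order $\varphi-|\chi|$, producing the $|\ln(\varphi-|\chi|)|$ factor, while the power grows with $N$ because the contour must negotiate the $N$ neighbourhoods of poles that have been enclosed. The main obstacle I anticipate lies in the residue extraction: precisely isolating the first-order pole piece in $z$ from the triple-pole Laurent expansion in $\rho$, and verifying that the coefficient $\binom{j+2}{2}$ emerges correctly after combining the contributions from the conjugate pairs — with the higher-order singularities in $z$ being swept into $h_N$ without disturbing the leading Lorentzian weight.
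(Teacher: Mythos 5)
Your proposal rests on a misconception about how poles of a Laplace transform relate to singularities of its inverse, and this breaks the central mechanism of your argument. If you shift the Bromwich contour past a triple pole $\rho_0$ of $[f(\rho)]^3$, the residue you pick up has the form $(a+bz+cz^2)\,\e{c_0\rho_0 (z-\omega_\varphi)}$ — a polynomial times an exponential, i.e.\ an \emph{entire} function of $z$. No Laurent expansion at a finite pole in the $\rho$-plane can ever produce a pole of $\phi$ in the $z$-plane; the singularities $p_j$ of $\phi$ are encoded not in the poles of $Z$ but in the exponentially small corrections $\e{-\beta p_j}$ to its large-$|\beta|$ asymptotics (just as for the uncoupled oscillator, whose partition function has triple poles at $\beta=2\pi\ii k/\eta$ while $\mu_0$ consists of Dirac peaks at the real points $(j+3/2)\eta$ — the relation is of Poisson-summation type, not residue-to-Lorentzian). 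Compounding this, the Bromwich integral for the density does not converge before the atom $\delta_{\omega_\varphi}$ is subtracted, and the factor $\e{-2\rho\ln(\rho)\sin\varphi}$ introduces a branch cut in the $\rho$-plane that your contour deformation never confronts. The $|\chi|>\varphi$ case and the error bound are then built on this flawed foundation; in particular the "binomial expansion of the cut-crossed $f^3$" does not correspond to any identifiable computation.

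The paper's route is entirely different and avoids the inversion integral altogether: from Binet's formula one writes $\ln Z(\beta)=\mathfrak{L}g(\rho)-t_\varphi\rho$ with $g\ge 0$ meromorphic, having \emph{simple} poles at $q_j=2\pi j\e{-\ii\varphi}$ and $\overline{q_j}$ with residues $\pm 3\ii/(2\pi j)$; hence $\varrho=\sum_{n\ge1}g^{\ast n}/n!$ and the whole analysis reduces to tracking how simple poles propagate under iterated convolution. The key computation is the explicit formula for $\int_0^z(\zeta-q_j)^{-1}(z-\zeta-q_k)^{-1}\,\dd\zeta$, whose residue at $q_{j+k}$ is $-2\pi\ii$ inside the cone $|\chi|<\varphi$ but $+2\pi\ii$ outside it (because of the $2\pi\ii$ jumps of the two logarithms); this sign is what produces the two different coefficient recursions $c_{jn}$ and $\tilde c_{jn}$, whose generating functions $\e{-3\ln(1-x)}=(1-x)^{-3}$ and $-\e{3\ln(1-x)}=-(1-x)^3$ yield $\binom{j+2}{2}$ and $(-1)^{j+1}\binom{3}{j}$ respectively — so the appearance of $(1-x)^{\pm3}$, not a contour-crossing argument, is the true source of both coefficient patterns. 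The logarithmic error bound comes from estimating the logarithmic (not polar) singularities left over in each convolution, by induction on $n$. If you want to pursue a transform-side proof you would need genuinely different machinery (asymptotics of $Z(\beta)$ beyond all orders), so I would recommend redoing the argument along the convolution route.
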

The proof of this theorem is given in Section \ref{sec complex structure 2}. The theorem shows that inside the cone $\vert \chi\vert <\varphi$,
the singularities of $\phi$ look like resonance poles of first order,
but their overall structure is more complicated as is apparent from the behaviour for $\vert\chi\vert >\varphi$. Nevertheless, 
they lead to Lorentz profiles $\binom{j+2}{2} \ell_j$ on the real axis. The total mass of the $j$-th 
Lorentz profile thus corresponds precisely to the multiplicity of the $j$-th eigenvalue of the uncoupled harmonic 
oscillator, i.e.\ the mass of the $j$-th delta peak of $\mu_0$, for all $\varphi$. 
Moreover, with Proposition \ref{cont} it follows that 
$h_N \, \dd \omega$ converges vaguely to zero as $\varphi \to 0$. 

To summarize, the following complete picture of $\mu_\varphi$ emerged: There is a stable ground state 
(Dirac peak of strength one) at $\hbar \omega_\varphi$. Above, there is absolutely continuous spectrum, consisting of
Lorentz profiles with widths 
\be\label{hw}
\gamma_{ j}:=2j \eta \operatorname{sin}\varphi=j \eta\frac{2\alpha}{3}\left(\frac{\hbar \eta}{mc^2}\right),
\ee
centered at  
\be\label{center}
\omega_{ j}:= \omega_\varphi +j\eta \operatorname{cos}\varphi= \omega_\varphi+j\eta\left[1-\frac{\alpha^2}
{18}\left(\frac{\hbar \eta}{mc^2}\right)^2+\dots\right],
\ee
and with total mass equal to that of the corresponding unperturbed state.

\bfig
\centerline{a) \includegraphics[width=70mm]{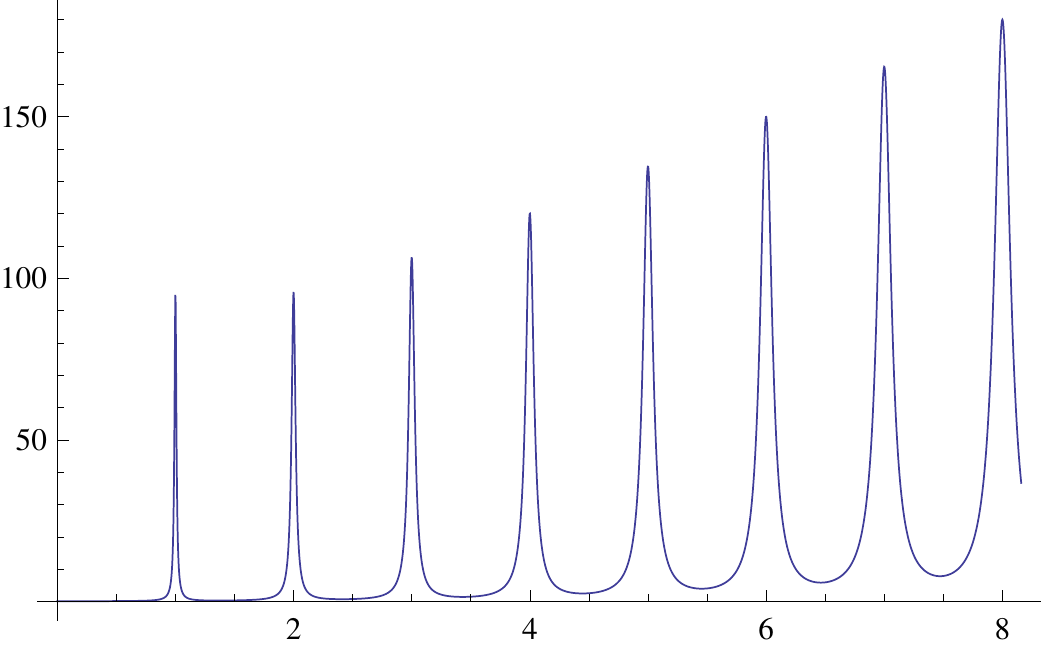}\, b) \includegraphics[width=70mm]{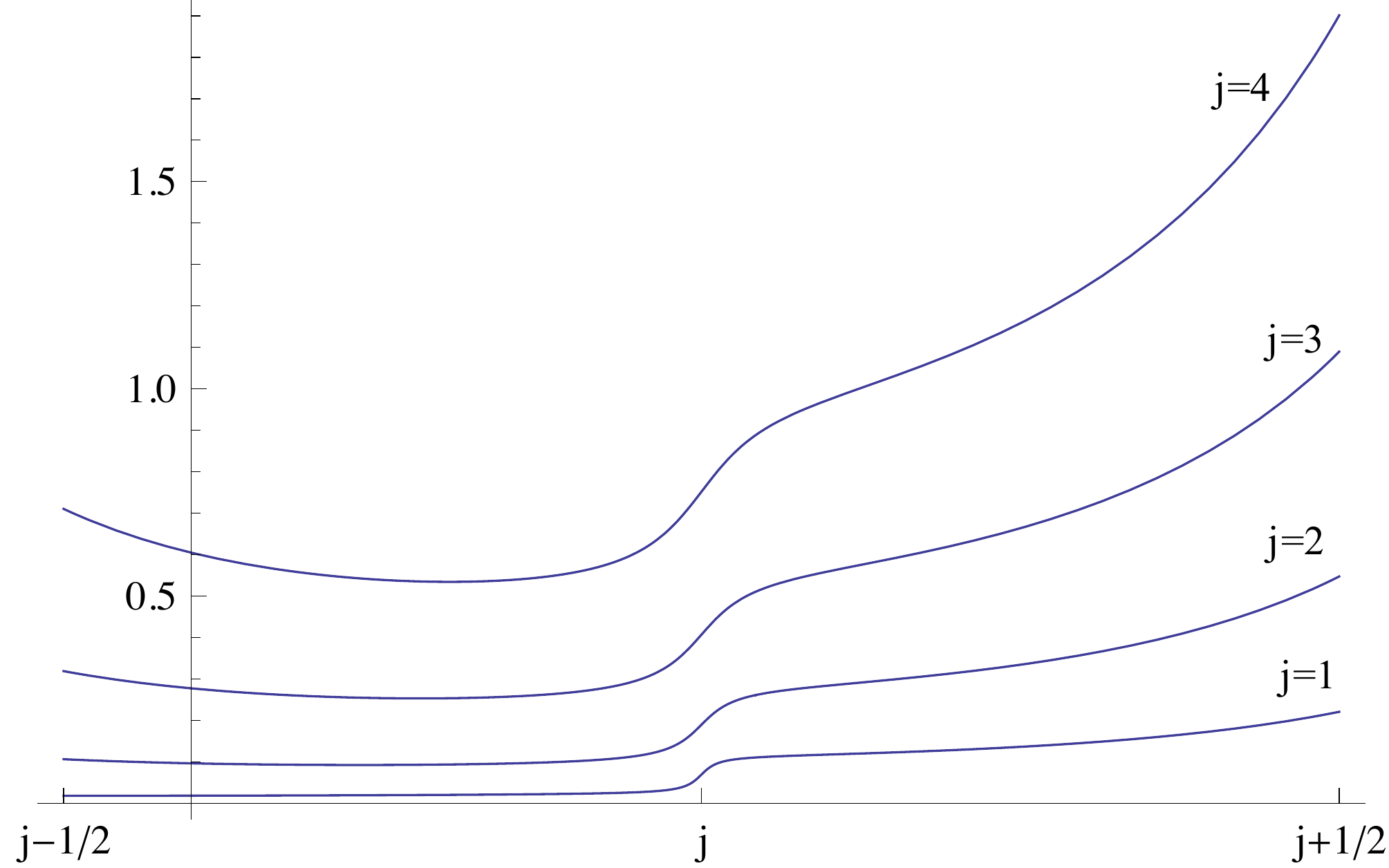}}
\caption{a) shows the effective density of states, computed by summing up the first 15 terms of \eqref{dens}, 
for $\varphi=1/100$. 
All terms of \eqref{dens} are positive, so the numerical approximation is very good.
b) shows the difference 
between the effective density of states and its approximation by the appropriate Lorentz-profiles 
$(j+1)(j+2) \ell_j/2$, in the intervals $[j-1/2,j
+1/2]$, for $j=1,2,3,4$. 
In both graphs, the units of energy on the $x$-axis are chosen such that the eigenvalues of the uncoupled 
harmonic oscillator occur at the elements of $\bbN_0$.}  
\label{fig 1}
\efig

Figure \ref{fig 1} illustrates the 
excellent approximation to the effective density of states obtained by the above description, even for the 
relatively large value $\varphi = 1/100$; the physical value for $\varphi$ is about $\alpha^{-3}\approx 0,5\times 10^{-6}$.

 With respect to a frequency scale starting at $\omega_\varphi$ the peaks  at $\omega_{ j}$ for $j=1,2,3,\dots$  
represent the \textbf{natural lines} of the charged oscillator due to transitions to the ground state (Lyman series). 
According 
to (\ref{center}) the distance of two subsequent levels is $\eta\operatorname{cos}\varphi$, 
which is smaller than $\eta$ in the unperturbed case. The radiative shift (Lamb shift) 
$\eta -\eta\operatorname{cos}\varphi$ of the 
first excited state  is of 
second order in $\alpha$.

For an observation of the natural  lines the charged oscillator may be  brought into contact with  black body radiation  at the temperature $T=\frac{\hbar}{\rm k \beta}$. This causes  additional negative shifts of the excited states depending on $T$ which  have to be taken into account.
They  are determined by the positions of  the peaks of $\mu_\varphi^{(\beta)}$. For $2T>(\hbar/ \rm k)\gamma_{ j}$ one readily finds $-\beta \gamma^2_{ j}/4 <
\Delta\, \omega_{ j}<-\beta \gamma^2_{ j}/8$.

Our results are in  accordance with calculations  within QED time--dependent 
perturbation theory  up to first order in the 
fine-structure constant $\alpha$ (cf. e.g. \cite[sec. 5]{CKS93}). 
The calculations on the radiative cascade of a harmonic oscillator in \cite[Exercises 15]{CDG98} confirm also 
that the shifted levels of the oscillator remain equidistant.

\vspace{1mm}
\noindent \textbf{Remark.} In our analysis, we have left out the self-interaction term of the field ($A^2$)-term). At 
least Theorem \ref{Z(beta) UV regular} can be achieved without this simplification. 
Actually, in the G\"oppert-Mayer description  
this term is absorbed by the coupling, compare \cite[ IV (B.34)]{CDG87} or  \cite[(13.124)]{Sp04} and \cite[V.15 eq. 
(15.1),\,(15.2)]{Si79}. One is left again with an oscillator model which can be treated in the same way as we do here.
The analogous formula to \eqref{UVreg}, without adding any extra renormalization term, is
\[
Z(\beta;\gamma) =
\left[2\pi\rho\; 
\prod_{l=1}^{\infty}\left(1+\frac{\rho^2}{l^2}+\frac{4}{\pi}\sin\varphi\;\left(\gamma - \frac{l}{\rho}\arctan\left(\gamma
\frac{\rho}{l}\right)\right)\right)\right]^{-3}.
\]
Again the limit $\gamma \to \infty$ does not exist, but this time we do not know how to suitably renormalize  this system.

On the other hand, from the well-tested perturbation theory in QED one
knows that for many problems concerning the interaction of bound electrons
with radiation (e.g. spontaneous and stimulated emission and absorption) the
linear part $-epA/m$ of the interaction Hamiltonian yields the main effects
(see e.g. \cite{LM}).



\section{Discretisation of the photon field} \label{proofs}

Let us now give the proof of Proposition \ref{P_n is spectral discretization}, and at the same time prepare the proof of Theorem \ref{Z(beta) UV regular}. In effect, what we give is the reverse procedure to the one that, in the early days of QED, led from the description of 
the photon field as a collection of independent harmonic oscillators to the Fock space description. So in principle, what we are going to present is well known. However, we are not aware of any place where it is done carefully and in a mathematically satisfactory way, and thus present it in some detail. 

Let us start with some properties of the projections $P_N$ introduced in (\ref{P_N}).
For locally integrable $f: \bbR^3 \to \bbC$, and given $j \in J_N$, we write 
\[
\bar f_j:=\frac{1}{V}\langle 1_{\Lambda_j},f \rangle, 
\]
for the average value of $f$ on $\Lambda_j$. Thus, if $f \in L^2$ we have 
\[
P_N f (x)= \sum_{j \in J_N} \bar f_j 1_{\Lambda_j}(x), \quad x\in \bbR^3.
\]

\begin{lemma} \label{L^2 convergence}
Assume $f \in L^2(\bbR^3)$, $F \in \caF$. Then we have: 
\begin{itemize}
\item[(i)] As $N \to \infty$, $P_N f \to f$ in $L^2(\bbR^3)$, and $\bsP_N F \to F$ in $\caF$.
\item[(ii)] If $f$ is continuous on an open set $B$, then $P_N f(x) \to f(x)$ for every $x \in B$.
\end{itemize}
\end{lemma}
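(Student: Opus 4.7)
The plan is a density-plus-uniform-bound argument. Each $P_N$ is an orthogonal projection on $L^2(\bbR^3)$ (its range is spanned by the orthonormal system $\{h_j : j \in J_N\}$), so $\|P_N\| \leq 1$; consequently $\bsP_N$, which on each sector $\caF^{(n)}$ is the restriction of $P_N^{\otimes n}$ to the symmetric subspace, is also a contraction on $\caF$. Given these uniform bounds, it suffices to establish convergence on dense subsets. For (i) on $L^2(\bbR^3)$ I would test on $f \in C_c(\bbR^3)$: if $\operatorname{supp} f \subset B(0,R)$ and $N > R$, then the support is covered by cubes $\Lambda_j$ with $j \in J_N$, and uniform continuity of $f$ together with $\operatorname{diam}(\Lambda_j) = \sqrt{3}/N$ yields $\sup_{x \in \Lambda_j}|f(x)-\bar f_j| \to 0$ uniformly in $j$; integrating over the fixed support gives $\|P_N f - f\|_{L^2} \to 0$. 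Density of $C_c(\bbR^3)$ in $L^2(\bbR^3)$ extends this to all $f$.

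For the Fock space part of (i), I would first show sectorwise convergence $P_N^{\otimes n} G \to G$ in $L^2(\bbR^{3n})$ for each fixed $n$, via the telescoping identity
\[
P_N^{\otimes n} - I^{\otimes n} = \sum_{k=1}^n I^{\otimes(k-1)} \otimes (P_N - I) \otimes P_N^{\otimes(n-k)}
\]
applied to pure tensors $f_1 \otimes \cdots \otimes f_n$, combined with density of their finite linear combinations in $L^2(\bbR^{3n})$ and the uniform bound $\|P_N^{\otimes n}\| \leq 1$. Restricting to the symmetric subspace gives $\bsP_N F^{(n)} \to F^{(n)}$ in $\caF^{(n)}$. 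To pass to a general $F = \sum_n F^{(n)}$, a standard cutoff argument suffices: given $\varepsilon > 0$, pick $M$ with $\sum_{n > M}\|F^{(n)}\|^2 < \varepsilon$, bound the tail of $\|\bsP_N F - F\|^2$ by $4\sum_{n>M}\|F^{(n)}\|^2 < 4\varepsilon$ using $\|\bsP_N\| \leq 1$ and the triangle inequality, and let $N \to \infty$ in the finite initial segment $\sum_{n \leq M}\|\bsP_N F^{(n)} - F^{(n)}\|^2$.

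For (ii), fix $x \in B$. For all sufficiently large $N$ the unique $\Lambda_j$ containing $x$ lies inside $B$ and satisfies $j \in J_N$ (since $|x| < N$). Then $(P_N f)(x) = \bar f_j = V^{-1}\int_{\Lambda_j} f$, and continuity of $f$ at $x$ together with $\operatorname{diam}(\Lambda_j) = \sqrt{3}/N \to 0$ gives $\bar f_j \to f(x)$. The only step with any real content is the sectorwise tensor convergence for the Fock space, where the telescoping identity reduces the $n$-factor problem to the single-factor convergence already established; everything else amounts to tight $L^2$ approximation estimates and elementary contraction arguments.
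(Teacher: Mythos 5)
Your proof is correct, and the overall strategy --- uniform bound $\Vert P_N\Vert\leq 1$, $\Vert \bsP_N\Vert\leq 1$ plus convergence on a dense set --- is the same as the paper's. The differences are in the routing. The paper proves (ii) first and then deduces (i) from it: for a continuous compactly supported $g$ it gets $P_Ng\to g$ pointwise from (ii) and upgrades to $L^2$ by dominated convergence, passing through indicators of bounded measurable sets as the total family; your argument instead establishes $\Vert P_Ng-g\Vert_{L^2}\to 0$ for $g\in C_c(\bbR^3)$ directly from uniform continuity, so (i) and (ii) are logically independent in your version. (One pedantic point there: $P_Ng$ is supported not in $\supp g$ but in the union of the cubes meeting $\supp g$; since that union sits inside a fixed slightly larger ball for all $N$, your ``integrate over the fixed support'' step survives with that ball in place of the support.) For the Fock-space half of (i) the paper simply invokes the known fact that second quantisation preserves strong convergence, whereas you prove it via the telescoping identity on pure tensors plus a tail cutoff in $n$; your write-up is therefore more self-contained on exactly the one point the paper leaves to the reader, at the cost of a little extra length.
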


\begin{proof}
As to (ii), assume without restriction that $f$ is real-valued. Choose $x_0 \in B$. For all $N$ large enough, we have $\overline{\Lambda}_{j_0} \subset B$ for the unique $j_0$ such that $x_0 \in \Lambda_{j_0}$. By continuity there are $x_N,\,x'_N$ in $\overline{\Lambda}_{j_0}$ satisfying $f(x_N)=\inf f\vert_{\overline{\Lambda}_{j_0}}$ and 
$f(x'_N)=\sup f\vert_{\overline{\Lambda}_{j_0}}$. Then 
$P_Nf(x_0)=\bar f_{j_0}\in [f(x_N),f(x'_N)]$, whence the assertion.

Now (i) is proved in the standard fashion.  Choose $f$ to be the indicator function of a  bounded measurable  set $B$  at first. Then for each $\eps > 0$ 
there is a continuous functions $g$ with compact support 
such that $\norm[L^2]{g - 1_B} < \eps$. Now 
\[ 
\norm{P_N 1_B - 1_B} \leq   \norm{P_N (g - 1_B)} + \norm{P_N g -g} + \norm{g - 1_B}.
\]
The middle term converges to zero by (ii) and dominated convergence, and so the left hand side  is bounded by $2 \eps$ for all large $N$. This proves the claim for indicators of  bounded measurable sets.
As the latter are total in  $L^2$, this proves also the general result  about $ (P_N)_N$.
Finally, the result about $(\bsP_N)_N$ follows from the fact that second quantisation preserves strong convergence. 
\end{proof}

It can be checked easily that 
$\bsP_N \caF = \caF( P_N L^2(\bbR^3))$, the latter being the Fock space over the image of $P_N$. 
Since $\{h_j:\, j \in J_N\}$ is a ortonormal basis of $P_N L^2(\bbR^3)$, we conclude that $\{ h_{j_1} \hat \otimes  \ldots \hat \otimes h_{j_n} : {j_k} \in J_N, \,n \in \bbN_0 \}$ is a orthonormal basis of $\bsP_N \caF$. These basis elements are eigenfunctions of 
$\bsP_N H_{\rm f,\sigma} \bsP_N$, as 
\[
\bsP_N H_{\rm f, \sigma} \, \, h_{j_1} \hat \otimes  \ldots \hat \otimes h_{j_n}  = \left( \hbar \sum_{i=1}^n \bar \omega_{j_i}\right) \, h_{j_1} \hat \otimes  \ldots \hat \otimes h_{j_n}.
\]
This allows us to transfer the operator $H_{\rm f,N}$ unitarily to an $L^2$ space. For $n \in \N$ we write  for any positive mass parameter $\mu$ 
\[
\psi_n(x) = \sqrt{\frac{1}{2^n n!}} \left( \frac{\mu \bar \omega_j}{\pi \hbar}\right)^{1/4} 
\e{-\frac{\mu \bar \omega_j}{2 \hbar} x^2} H_n\left( \sqrt{\frac{\mu \bar\omega_j}{\hbar}} x \right),
\]
for the $n$-th Hermite function on $\bbR$. For  $j \in J_N$ and  $\bsj := (j_1, \ldots, j_n) \in J_N^n,\; n\in \bbN$  let $n(j,\bsj)$ denote  the number of occurrences of the value 
$j$ in  $\bsj$; for $n=0$ set $\bsj :=\emptyset,\, h_\emptyset :=1\in \bbC$ and  $n(j,\emptyset) :=0$ for all $j\in J_N$. We put 
\[
h_{\bsj} := h_{j_1} \hat \otimes \ldots \hat \otimes h_{j_n} \in \caF^{(n)} \text{ and } 
\bspsi_{\bsj}:=\bigotimes_{j\in J_N}\psi_{n(j,\bsj)} \,\in L^2(\bbR^{J_N}).
\]
Due to symmetrisation, $h_{\bsj}$ actually only depends on the occupation numbers $n(j,\bsj),\,j\in J_N$, and thus it is easy to check
that the asignment  $\theta(h_{\bsj}):=  \bspsi_{\bsj}$ defines an isomorphism from $P_N \caF$ onto $L^2(\bbR^{J_N})$, 
and that 
\be \label{free field oscillator}
\theta \bsP_N H_{\rm f,\sigma} \theta^{-1} = \sum_{j \in J_N}  \left( - \frac{\hbar^2}{2 \mu} \partial_{x_{\sigma j}}^2 + 
\frac{\mu}{2} \bar \omega_j^2 x_{\sigma j}^2 - \frac{\hbar \bar\omega_j}{2} \right),
\ee
where $x_{1 j}$ and $x_{2 j}$ are independent variables. Due to the relation $2 x H_n(x) = H_{n+1}(x) + 2 n H_{n-1}(x)$ we have 
\[
\sqrt{n+1} \psi_{n+1}(x) + \sqrt{n} \psi_{n-1}(x) = \sqrt{\frac{2 \mu \bar \omega_j}{\hbar}} x \psi_n(x),
\]
and using the definitions of $a^\ast(g)$ and $a(g)$, it is tedious but straightforward to check that
\[
\theta \bsP_N H_{\rm I,\sigma} \theta^{-1} = \sqrt{\frac{V\mu}{2\pi^2}} \frac{e}{m} \sum_{j \in J_N }
\sqrt{ \bar \omega_j}(\bar y_{\sigma j}\cdot q) x_{\sigma j} \quad \text{with }\; y_\sigma  := u_\sigma \chi_c / \sqrt{\omega}.
\]
Thus we find that $H_N$ is unitarily equivalent to 
\be \label{oscillator}
\tilde H_N = H_{\rm p} + H_{\rm r, N} + \sum_{j \in J_N} \sum_{\sigma = 1,2} \left( - \frac{\hbar^2}{2\mu} \partial_{x_{\sigma j}}^2 + 
\frac{\mu}{2} \bar \omega_j^2 x_{\sigma j}^2 - \frac{\hbar \bar\omega_j}{2} + \sqrt{\frac{V\mu}{2\pi^2}} \frac{e}{m}  
\sqrt{ \bar \omega_j}(\bar y_{\sigma j}\cdot q) x_{\sigma j}\right),
\ee
acting in $L^2(\bbR^{3 + 2 J_N})$.  Here we discretized  also the renormalisztion terms
$$
H_{\rm r ,N} = \sum_{j \in J_N} \sum_{\sigma = 1,2}V\frac{e^2 \eta^2}{2 \pi^2 m}\; \frac{1}{\bar \omega_j}(\bar y_{\sigma j} \cdot q)^2\;-\,\sum_{j \in J_N}V\hbar \frac{e^2\eta^2}{4\pi^2 m}\,\frac{(\overline{\chi_c})_j}{\bar \omega_j^2(\bar \omega_j+\eta)}
$$
displaying the contribution to the effective oscillator mass and energy shift by each single photon.

Since $\tilde H_N$ is the Hamiltonian of a finite dimensional quantum oscillator, its spectrum is discrete and the eigenvalues are of finite multiplicity. Using Lemma \ref{L^2 convergence}, it is immediate to check (i) and (ii) of Definition \ref{spec discret} for $\rm1\otimes \bsP_N\otimes \bsP_N$, 
and Proposition \ref{P_n is spectral discretization} is proved. 

\section{The partition function} \label{proofs2}
In this section we will prove Theorems \ref{Z(beta) UV regular} and \ref{Z(beta) UV singular}.
We start with a general, useful observation. 

\subsection{A trace formula for a quantum oscillator} \label{sec trace formula}
Consider an operator 
\be \label{generalosc}
H=\sum_{i=1}^n \frac{-\hbar^2}{2m_i}\,\partial^2_i + \frac{1}{2}\sum_{i,Úi'=1}^nA_{ii'}\,X_iX_{i'}
\ee
with strictly positive real $n\times n$-matrix $A=(A_{ii'})$.
$H$ acts in $L^2(\bbR^{n})$, $\partial_i$ are derivative operators with respect to $x_i$, and $X_i$ are multiplication operators with $x_i$. We assume that the mass parameters $m_i$ are all positive, and we define the diagonal  mass matrix $M:=\mbox{diag}(m_1,\dots ,m_n)$. 

\begin{lemma}\label{traceformula}
Let $\lambda_1,\dots ,\lambda_n$ be the eigenvalues of
$M^{-1/2}\,A\,M^{-1/2}$.
Then
$$
\Tr \e{- \frac{\beta}{\hbar} H}=\prod_{i=1}^n\left(2\sinh(\frac{\beta}{2}\sqrt{\lambda_i})\right)^{-1}=\det \left(2\,\sinh\left(\frac{\beta}{2}(M^{-1/2}A\,M^{-1/2})^{\frac{1}{2}}\right)\right)^{-1}=
$$
$$
\beta^{-n}\det (M^{-1/2}A\,M^{-1/2})^{-\frac{1}{2}}\prod_{l=1}^\infty\det \left(I_n+(\frac{\beta}{2\pi l})^2M^{-1/2}A\,M^{-1/2}\right)^{-1}.
$$
\end{lemma}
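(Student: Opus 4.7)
The plan is to diagonalize the quadratic Hamiltonian via two successive unitary transformations, then use the standard 1D oscillator trace and the Weierstrass product for $\sinh$.

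First, I would remove the mass-dependence by a unitary rescaling. Define $U : L^2(\bbR^n, \dd x) \to L^2(\bbR^n, \dd y)$ by $(U\psi)(y) = (\det M)^{-1/4} \psi(M^{-1/2}y)$, which is unitary and implements the change of variables $y = M^{1/2}x$. Under this, $-\frac{\hbar^2}{2m_i}\partial_{x_i}^2$ becomes $-\frac{\hbar^2}{2}\partial_{y_i}^2$, and $\frac{1}{2}\sum A_{ii'}x_ix_{i'}$ becomes $\frac{1}{2} y^T B y$ with $B := M^{-1/2} A M^{-1/2}$. Thus $UHU^{-1} = -\frac{\hbar^2}{2}\Delta_y + \frac{1}{2} y^T B y$.

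Since $A$ is strictly positive and $M$ is diagonal positive, $B$ is symmetric strictly positive, so there is an orthogonal $O$ with $O^T B O = D = \mathrm{diag}(\lambda_1,\dots,\lambda_n)$ and $\lambda_i > 0$. The induced unitary $y \mapsto O^T y$ preserves the Laplacian (orthogonal invariance) and turns the potential into $\frac{1}{2}\sum_i \lambda_i z_i^2$. Hence $H$ is unitarily equivalent to the decoupled oscillator $\sum_{i=1}^n H_i$ on $\bigotimes_{i=1}^n L^2(\bbR)$, where $H_i = -\frac{\hbar^2}{2}\partial^2 + \frac{\lambda_i}{2}z^2$ is a unit-mass 1D oscillator of angular frequency $\sqrt{\lambda_i}$.

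Next, I invoke the classical spectrum $\hbar\sqrt{\lambda_i}(k+\tfrac12)$, $k\in \bbN_0$, of each $H_i$, which gives
\[
\Tr\, \e{-\frac{\beta}{\hbar}H_i} = \sum_{k=0}^\infty \e{-\beta\sqrt{\lambda_i}(k+1/2)} = \frac{1}{2\sinh(\tfrac{\beta}{2}\sqrt{\lambda_i})}.
\]
Multiplicativity of the trace across the tensor product (all factors are trace-class positive with discrete spectrum, no convergence issue) yields the first equality. The second equality is just the statement that for a positive operator, $\det(2\sinh(\frac{\beta}{2}B^{1/2})) = \prod_i 2\sinh(\tfrac{\beta}{2}\sqrt{\lambda_i})$, since $B^{1/2}$ has eigenvalues $\sqrt{\lambda_i}$.

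For the third equality I would apply the Weierstrass product
\[
\sinh(x) = x \prod_{l=1}^\infty \Bigl(1 + \frac{x^2}{\pi^2 l^2}\Bigr)
\]
with $x = \tfrac{\beta}{2}\sqrt{\lambda_i}$ and take the product over $i$, recognizing $\prod_i \sqrt{\lambda_i} = \det(B)^{1/2}$ and $\prod_i (1 + (\tfrac{\beta}{2\pi l})^2 \lambda_i) = \det(I_n + (\tfrac{\beta}{2\pi l})^2 B)$. The only thing to watch is absolute convergence of the double product $\prod_{l,i}$, which is clear because for each $i$ the factors are $1 + O(l^{-2})$.

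There is no real obstacle here; the entire lemma is bookkeeping for standard facts about quadratic Hamiltonians. The only mildly delicate point is making sure the two unitaries are correctly normalized so as to be isometries on $L^2$, and that we may freely commute traces with the tensor decomposition — both handled once we note that $H$ is essentially self-adjoint on Schwartz functions with pure point spectrum of finite multiplicity, making $\e{-\frac{\beta}{\hbar}H}$ a positive trace-class operator.
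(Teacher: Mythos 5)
Your proof is correct and follows essentially the same route as the paper's: a unitary change of variables reducing $H$ to a decoupled oscillator with frequencies $\sqrt{\lambda_i}$ (the paper packages your two unitaries into a single one built from $S=O(\tfrac{1}{m_0}M)^{1/2}$), followed by the standard one-dimensional oscillator trace (the paper cites Mehler's formula where you sum the spectrum) and the Weierstrass product for $\sinh$.
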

\noindent{\it Proof.} We are going to show that $H$ is unitarily equivalent to
\be \label{unit equiv H}
\tilde H := \sum_{i=1}^n\left( \frac{-\hbar^2}{2m_0}\,\partial^2_i +\frac{m_0}{2}\lambda_i\,X^2_i\right)
\ee
with $m_0>0$ any mass parameter. Then by Mehler's formula the first equality holds. The second equality is obvious.  The last equality holds true due to the factorization
$$
\sinh z = z\prod_{l=1}^\infty\left(1+(\frac{z}{\pi l})^2\right) 
$$
valid for all $z \in \bbC$. To show \eqref{unit equiv H}, we note that any invertible real $n\times n$-matrix $S$ gives rise to a unitary transformation $U$ on  $L^2(\bbR^{n})$ by $Uf:=\vert \mbox{det}S\vert^{\frac{1}{2}}\,f \circ S$. We choose $S$ in the following way: Since $M^{-1/2} A M^{-1/2}$ is real symmetric, there exists an orthogonal matrix $O$ such that $OM^{-\frac{1}{2}}AM^{-\frac{1}{2}}O^T =\mbox{diag}(\lambda_1,\dots ,\lambda_n)$. We put $S:=O(\frac{1}{m_0}M)^{\frac{1}{2}}$.
Then $S$ satisfies
\[
SM^{-1}S^T=\frac{1}{m_0}I_n,   \quad  \quad S^{-1T}AS^{-1}=m_0\,\mbox{diag}(\lambda_1,\dots ,\lambda_n),
\]
and from these relations $\tilde H=U^{-1}\,H\,U$ follows straightforwardly. \hfill{$\Box$}

\subsection{The Partition Functions for $H_{{\rm f},N}$ and $H_N$}

We will now use Lemma \ref{traceformula} in order to compute the partition functions for the discretized systems. Note that from \eqref{free field oscillator} and Lemma \ref{traceformula} we immediately get 
\be \label{tradisfie}
\Tr \e{- \frac{\beta}{\hbar} H_{{\rm f}, N}} = \e{\beta \sum_{j\in J_N}\bar \omega_j} \prod_{j\in J_N}\left(2\sinh(\frac{\beta}{2}\bar \omega_j)\right)^{-2}.
\ee
Now by \eqref{oscillator}, $\tilde H_N$ has the form \eqref{generalosc} up to the constant term 
\[
-\hbar\sum_{j \in J_N}\left(\bar \omega_j+V \frac{e^2\eta^2}{4\pi^2 m}\,\frac{(\overline{\chi}_c)_j}{\bar \omega_j^2(\bar \omega_j+\eta)}\right).\]
Precisely, the mass matrix $M$ is the diagonal matrix of size $2|J_N| + 3$, 
with $\frac{1}{m \eta^2}$ as the first three diagonal elements, and $\mu$ as the remaining ones. 
$A$ is a block matrix of the form 
\be \label{blockmatrix}
A = \left(  
\begin{matrix}
B 		& B_1 		& B_2	 	& \cdots 	& B_{|J_N|} \\
B_1^T 	& s_1 I_2  	& 0 		 	& \cdots 	& 0		  \\
B_2^T  & 0 			& s_2 I_2 	& \ddots 	& \vdots  \\
\vdots  & \vdots		&  \ddots	& \ddots		& 0 		  \\
B_{|J_N|}^T  & 0		& \cdots		& 0		& s_{|J_N|} I_2
\end{matrix}
\right)
\ee
with $B = \frac{1}{m} I_3$ ($I_j$ denotes the $j$-dimensional unit matrix), $s_j = \mu \bar \omega_{j}^2$
for $1 \leq j \leq |J_N|$, and $B_j = (v_{1j},v_{2j}) \in \bbR^{2 \times 3}$, with 
$v_{\sigma j} = \sqrt{\frac{V \mu}{2 \pi^2}} \frac{e}{m} \sqrt{\bar \omega_j} \bar y_{\sigma j}$ 
for $\sigma = 1,2$. It is clear that both of the matrices whose determinant appears in the second line of
Lemma \ref{traceformula} are of the above form, too. We thus need a formula for determinants of matrices of the type \eqref{blockmatrix}. 
\begin{lemma}\label{detformula}
Let $Q$ be a real symmetric  $(2L+3)$-dimensional matrix 
of the form \eqref{blockmatrix},
where $B$ is 3-dimensional and all $ s_j\not=0$. 
Then
$$
\emph{det}\,Q=s_1^2\,\dots \,s_L^2\;\emph{det}\left(\,B-\sum_{j=1}^L\frac{1}{s_j}\,B_j\,B_j^T\,\right).
$$
\end{lemma}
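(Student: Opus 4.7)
The plan is to apply the standard Schur complement determinant formula to the block matrix $Q$. First I would regroup $Q$ as a $2\times 2$ block matrix
\[
Q \;=\; \begin{pmatrix} B & C \\ C^T & D \end{pmatrix},
\]
where $B$ is the given $3\times 3$ block, $C := (B_1, B_2, \ldots, B_L) \in \bbR^{3 \times 2L}$ collects the off-diagonal blocks, and $D := \mathrm{diag}(s_1 I_2, \ldots, s_L I_2)$ is the $2L \times 2L$ block-diagonal piece. This just renames the entries of the matrix displayed in \eqref{blockmatrix}; nothing changes.

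Next, since every $s_j \neq 0$, the block $D$ is invertible with $D^{-1} = \mathrm{diag}(s_1^{-1} I_2, \ldots, s_L^{-1} I_2)$. The Schur complement identity then gives
\[
\det Q \;=\; \det(D)\,\det\!\bigl(B - C D^{-1} C^T\bigr).
\]
The first factor is $\det(D) = \prod_{j=1}^L \det(s_j I_2) = s_1^2 \cdots s_L^2$. For the second, a direct block multiplication yields $C D^{-1} C^T = \sum_{j=1}^L s_j^{-1} B_j B_j^T$, and substituting gives exactly the asserted formula.

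The only step that needs a line of justification is the Schur complement identity itself; this follows from the block LDU factorisation
\[
\begin{pmatrix} B & C \\ C^T & D \end{pmatrix}
= \begin{pmatrix} I_3 & C D^{-1} \\ 0 & I_{2L} \end{pmatrix}
\begin{pmatrix} B - C D^{-1} C^T & 0 \\ 0 & D \end{pmatrix}
\begin{pmatrix} I_3 & 0 \\ D^{-1} C^T & I_{2L} \end{pmatrix},
\]
whose outer triangular factors have determinant one. Thus the entire lemma is essentially a bookkeeping exercise, and I do not foresee a genuine obstacle; the only care needed is to verify that the product $C D^{-1} C^T$ indeed expands as $\sum_j s_j^{-1} B_j B_j^T$, which is transparent from the block-diagonal structure of $D^{-1}$.
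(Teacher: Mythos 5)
Your proof is correct, but it takes a genuinely different route from the paper. You regroup $Q$ into a $2\times 2$ block structure and apply the Schur complement identity via the block LDU factorisation, which reduces the whole lemma to the observation that $\det D = \prod_j s_j^2$ and $C D^{-1} C^T = \sum_j s_j^{-1} B_j B_j^T$. The paper instead proceeds by an iterative elimination: it writes $B_L = (a,b)$, assumes $a$ and $b$ linearly independent, builds an auxiliary $3\times 3$ matrix $X$ out of cross products of $a$ and $b$ to zero out the last off-diagonal block by row operations, expands the determinant along the last two columns to extract the factor $s_L^2$, replaces $B$ by $B - s_L^{-1} B_L B_L^T$, and iterates over $L$; the linear-independence assumption is then removed by continuity. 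Your approach is cleaner and more standard: it handles all the $B_j$ simultaneously, needs no genericity assumption or continuity argument, and makes transparent where the factor $s_1^2\cdots s_L^2$ and the correction term $\sum_j s_j^{-1} B_j B_j^T$ come from. The paper's argument is more elementary in the sense that it uses only explicit row operations and cofactor expansion, but at the cost of extra bookkeeping. There is no gap in your argument.
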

\noindent{\it Proof.} We write $B_L=(a,b)$, where $a$ and $b$ are 3-dimensional column vectors. 
We assume for the moment that $a$ and $b$ are linearly independent. At the end this assumption can be dropped by continuity. Define a $3\times 3$-matrix $X$ by $X^T:=(x,y,z)$ with $x:=a\times b, \; y:=b\times x$ and $z:=x\times a$. Then $XB_L=\vert a\times b\vert^2(e_2,e_3)$ with $I_3=(e_1,e_2,e_3)$ and $X^{-1}=\vert a\times b\vert^{-2}(x,a,b)$ hold.
The following operations do not change det$\,Q$.

First pass to the equivalent matrix  diag$(X,I_{2L})\,Q$\,diag$(X^{-1},I_{2L})$. Then the last block in the first block line becomes zero adding  $-\vert a\times b\vert^2/s_L$ times the last two lines to the second and third line of the matrix. Expanding now the determinant along the last two columns one obtains the factor $s_L^2$ times  the determinant of a $(2L+1)$-dimensional matrix $Q'$. One finds that diag$(X^{-1},I_{2L-2})\,Q'$\,diag$(X,I_{2L-2})$ arises from $Q$ by canceling the last block line and last block column and replacing $B$ with $B-\frac{1}{s_L}B_LB_L^T$.
Iterating these operations the result follows.\hfill{$\Box$}

Thus by  elementary computations, Lemma \ref{traceformula} and Lemma \ref{detformula} yield the following formula for the ratio of the partition functions of the discretised systems:
\be \label{tradissys}
\begin{split}
Z_N(\beta) := &  
\frac{\Tr \e{- \frac{\beta}{\hbar} H_ N}}{ \Tr \e{- \frac{\beta}{\hbar} H_{\rm f, N}}} =
(\beta\eta)^{-3}\e{\beta\sum_{j \in J_N}V \frac{e^2\eta^2}{4\pi^2 m}\,\frac{(\overline{\chi}_c)_j}{\bar \omega_j^2(\bar \omega_j+\eta)}} \times \\
 & \times
\prod_{l=1}^\infty \det \left( (1 + \frac{\eta^2}{\nu_l^2}) \bsI_3 + 
\frac{e^2 \eta^2}{2 \pi^2 m} V \sum_{j \in J_N} \sum_{\sigma=1,2} \frac{\bar y_{\sigma j} \bar y^T_{\sigma j}}{\bar \omega_j (\bar \omega_j^2 + \nu_l^2)} \right)^{-1},
\end{split}
\ee
with
\[
y_\sigma := \chi_c \frac{1}{\sqrt\omega} u_\sigma, \qquad \nu_l = \frac{\pi l}{\beta/2}.
\]

\subsection{The continuum limit} \label{sec continuum limit}
We will now show that in formula \eqref{tradissys}, the limit $N \to \infty$ exists, yielding (\ref{UVreg}). We use the abbreviation
\[
S_{l, N} := \frac{e^2 \eta^2}{2 \pi^2 m} V \sum_{j \in J_N} \sum_{\sigma=1,2} \frac{\bar y_{\sigma j} \bar y^T_{\sigma j}}{\bar \omega_j (\bar \omega_j^2 + \nu_l^2)}.
\]
Let $\lambda$ denote the Lebesgue measure on $\bbR^3$  and put
\be \label{I_n}
S_l := \frac{e^2 \eta^2}{2 \pi^2 m} \sum_{\sigma=1,2} \int \frac{u_{\sigma} u^T_{\sigma}}
{\omega^2 (\omega^2 + \nu_l^2)} \chi_c \,  \dd \lambda.
\ee

\begin{proposition} \label{riemann convergence}
$\lim_{N \to \infty} S_{l,N} = S_l$ for each $l \in \N$.  
\end{proposition}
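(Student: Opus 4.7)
My plan is to view $S_{l,N}$ as the integral of step-function approximants to the matrix-valued integrand of $S_l$, and to establish convergence by isolating the integrable singularity at $k=0$. Writing out the definitions and using $h_j = V^{-1/2}1_{\Lambda_j}$, one has
\[
S_{l,N} = \frac{e^2\eta^2}{2\pi^2 m}\sum_{\sigma=1,2} \int_{B_N} (P_N y_\sigma)(k)\,(P_N y_\sigma)(k)^T\,\kappa_N(k)\,\dd\lambda(k),
\]
where $B_N := \bigcup_{j \in J_N}\Lambda_j$ and $\kappa_N := \sum_{j \in J_N}[\bar\omega_j(\bar\omega_j^2+\nu_l^2)]^{-1}\,1_{\Lambda_j}$ is a step-function approximant to $\kappa := [\omega(\omega^2+\nu_l^2)]^{-1}$. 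Because $|u_\sigma| \leq 1$ and $\omega = c|k|$, the target integrand $F_\sigma := \chi_c \, u_\sigma u_\sigma^T /[\omega^2(\omega^2+\nu_l^2)]$ is compactly supported and has only a $|k|^{-2}$-singularity at the origin, so $F_\sigma \in L^1(\bbR^3)$ and $S_l$ is well-defined. For $N$ large enough $\supp \chi_c \subset B_N$, so the integration domain causes no issue.

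I would then split $\bbR^3 = B(0,\delta) \cup B(0,\delta)^c$ for a parameter $\delta$ that is sent to $0$ at the end. Outside $B(0,\delta)$ the integrand is bounded and continuous (with $\omega \geq c\delta$), and Lemma \ref{L^2 convergence}(ii) yields pointwise convergence $P_N y_\sigma \to y_\sigma$ and $\kappa_N \to \kappa$ there. Since the integrands are uniformly bounded on the compact set $\{\delta \leq |k| \leq C\}$, dominated convergence delivers convergence of the contribution from $B_N \setminus B(0,\delta)$ to the corresponding contribution to $S_l$.

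The main obstacle, and the heart of the argument, is a uniform-in-$N$ bound on the contribution of the cubes near the origin. For cubes $\Lambda_j$ meeting $B(0,\delta)$ (hence, for $N$ large, contained in $\{\chi_c \equiv 1\}$), Cauchy--Schwarz gives $V|\bar y_{\sigma j}|^2 \leq \int_{\Lambda_j}|y_\sigma|^2\,\dd\lambda = \int_{\Lambda_j}\omega^{-1}\,\dd\lambda$, while Jensen's inequality applied to $t \mapsto 1/t$ gives $\bar\omega_j^{-1} \leq V^{-1}\int_{\Lambda_j}\omega^{-1}\,\dd\lambda$. Combining with $(\bar\omega_j^2+\nu_l^2)^{-1} \leq \nu_l^{-2}$ and a further Cauchy--Schwarz $\bigl(\int_{\Lambda_j}\omega^{-1}\,\dd\lambda\bigr)^2 \leq V\int_{\Lambda_j}\omega^{-2}\,\dd\lambda$ produces the termwise bound
\[
\frac{V|\bar y_{\sigma j}|^2}{\bar\omega_j(\bar\omega_j^2+\nu_l^2)} \leq \frac{1}{\nu_l^2}\int_{\Lambda_j}\omega^{-2}\,\dd\lambda.
\]
Summing over the cubes meeting $B(0,\delta)$ bounds the near-origin contribution to $S_{l,N}$ by $\nu_l^{-2}\int_{B(0,2\delta)}\omega^{-2}\,\dd\lambda = O(\delta)$, uniformly in $N$; the analogous estimate for $F_\sigma$ itself gives $\int_{B(0,\delta)}|F_\sigma|\,\dd\lambda = O(\delta)$. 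A standard $\varepsilon/\delta$ combination of the two regimes --- choose $\delta$ to kill the singular contributions uniformly, then $N$ large to handle the regular part --- then yields $S_{l,N} \to S_l$ entry by entry.
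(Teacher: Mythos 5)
Your proof is correct, but it controls the singularity at $k=0$ by a genuinely different device than the paper. The paper runs a single global dominated-convergence argument: it bounds each entry of the summand pointwise by $(\overline{1/\sqrt{\omega}})_j^{\,4}\,\nu_l^{-2}\,1_{\Lambda_j}$ and then proves, via the explicit geometric estimate \eqref{upperbound} (computed separately for cubes touching the origin and for the others, with the worst constant attained at the cube adjacent to the origin), that the cube average of $1/\sqrt{\omega}$ is pointwise at most $4/\sqrt{\omega}$ on each cube; this produces the fixed locally integrable dominating function $(16/\nu_l)^2\omega^{-2}$. You instead split the domain at radius $\delta$, treat the region $|k|\geq\delta$ by uniform boundedness plus a.e.\ convergence, and control the inner region by the termwise \emph{integrated} bound $V|\bar y_{\sigma j}|^2/[\bar\omega_j(\bar\omega_j^2+\nu_l^2)]\leq \nu_l^{-2}\int_{\Lambda_j}\omega^{-2}\,\dd\lambda$, obtained from two applications of Cauchy--Schwarz and Jensen for $t\mapsto 1/t$; summing gives an $O(\delta)$ contribution uniformly in $N$, and an $\varepsilon/\delta$ argument finishes. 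Your route trades the paper's explicit pointwise domination for softer integral inequalities (essentially an equi-integrability version of the same control), which avoids the computation in \eqref{upperbound} entirely. One small imprecision: the integrand is not continuous everywhere on $\{|k|\geq\delta\}$ --- the transverse fields $u_\sigma$ admit no global continuous choice away from the origin and, as the paper notes, can only be taken continuous outside a closed Lebesgue null set --- so Lemma \ref{L^2 convergence}(ii) yields convergence only almost everywhere there; since your dominated-convergence step needs no more than that, nothing breaks.
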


\begin{proof}
We will use dominated convergence. Put 
\be \label{g_N}
g_{l,N} := \sum_{j \in J_N} \frac{\bar y_{\sigma j} \bar y^T_{\sigma j}}{\bar \omega_j (\bar \omega_j^2 + \nu_l^2)}  1_{\Lambda_j}.
\ee
By the definition of $V$, we have 
\be \label{e1}
S_{l,N} = \frac{e^2 \eta^2}{2 \pi^2 m} \sum_{\sigma=1,2} \int g_{l,N} \dd \lambda.
\ee
The functions $y_\sigma$, which contain the transverse vector fields, can be chosen to be continuous outside a closed  Lebesgue null subset of $\bbR^3$ containing the origin.
Thus applying Lemma \ref{L^2 convergence} (ii) on every factor
on the right hand side of (\ref{g_N}) one deduces that $g_{l,N}$ converges to the integrand in (\ref{I_n}) almost everywhere. 

Now each component of $g_{l,N}$ is bounded by 
\[
\sum_{j \in J_N} \left(\overline{1/\sqrt{\omega}}\right)_j^{\,\,2} \frac{1}{\bar\omega_j \nu_l^2} 1_{\Lambda_j} \leq \sum_{j \in J_N} \left(\overline{1/\sqrt{\omega}}\right)_j^{\,\,4}
\frac{1}{\nu_l^2} 1_{\Lambda_j},
\]
where the inequality is obtained by applying Jensen's inequality to the convex function $\phi: (0,\infty) \to \bbR,\, x \mapsto 1/\sqrt{x}$, yielding $1/\sqrt{\bar\omega_j} \leq (\,\overline{1/\sqrt{\omega}}\,)_j$ for each $j$. Furthermore, we claim that 
\be \label{upperbound}
\left(\overline{1/\sqrt{\omega}}\right)_j 1_{\Lambda_j} \leq 4 \frac{1}{\sqrt{\omega}} 1_{\Lambda_j}
\ee

\noindent Indeed, in the case $0 \in \bar \Lambda_j$, we replace $\Lambda_j$ with the portion of the ball of radius $\sqrt{3}/N$ 
that lies in the same sector. Then $\Lambda_j$ is contained in that set, and 
\[
\sqrt{c}\left(\overline{1/\sqrt{\omega}}\right)_j = \frac{1}{V} \frac{4 \pi}{8} \int_{0}^{\sqrt{3}/N} r^{3/2} \, \dd r =
\frac{3^{5/4}\pi}{5} N^{1/2} \leq \frac{3^{3/2} \pi}{5 \sqrt{r}} \mbox{\quad if }r\le \sqrt{3}/N,
\] 
whence the claim in this case. If $0 \not\in \bar \Lambda_j$, we let $(k/N,l/N,m/N)$ be the corner of $\Lambda_j$ 
closest to the origin, and find 
\[
\sqrt{c}\left(\overline{1/\sqrt{\omega}}\right)_j \leq \frac{\sup_{\Lambda_j} (1/ \sqrt{r})}{\sqrt{r} \inf_{\Lambda_j} (1/\sqrt{r})} = \frac{1}{\sqrt{r}}
\frac{\sqrt{(k+1)^2/N^2 + (l+1)^2/N^2 +(m+1)^2/N^2 }}{\sqrt{k^2/N^2 + l^2/N^2 +m^2/N^2}}.
\]
The last expression is maximal for $(k,l,m) = (0,0,1)$ and takes the value $\sqrt{6}$ there. Thus (\ref{upperbound}) holds. 

Hence every component of $g_{l,N}$ is bounded by $(16/\nu_l)^2\,\omega^{-2}$, which is locally integrable. This finishes the proof.
\end{proof}

A similar, but easier proof shows that 
\be \label{exponent cont lim}
\lim_{N \to \infty} \sum_{j \in J_N} V \frac{e^2\eta^2}{4\pi^2 m}\,\frac{(\overline{\chi}_c)_j}{\bar \omega_j^2(\bar \omega_j+\eta)} = \frac{e^2\eta^2}{4 \pi^2 m} \int \frac{\chi_c}{\omega^2 (\omega + \eta)} \dd\lambda.
\ee

The integrals appearing in \eqref{exponent cont lim} and \eqref{I_n} can be calculated. 
We  have 
\[
\int \frac{\chi_c}{\omega^2 (\omega + \eta)} \dd\lambda 
= \frac{4 \pi}{ c^3} \ln(1 + \tfrac{C c}{\eta}).
\]
Together with the prefactor, and comparing with the definitions of $\sin \varphi$, $\rho$ and $\gamma$, we obtain the 
exponent in \eqref{UVreg}.
For \eqref{I_n}, note first that 
$\sum_{\sigma=1,2} u_\sigma u_\sigma^T = \bsI_3 - \frac{1}{|k|^2} k k^T$ follows from the orthonormality of $u_1$, $u_2$ and $k/|k|$. Thus for each component of $\sum_{\sigma=1,2} \int \frac{u_{\sigma} u^T_{\sigma}}
{\omega^2 (\omega^2 + \nu_l^2)} \chi_c \,  \dd \lambda$, 
the angular part can be calculated, and is found to be equal to $\frac{8 \pi}{3}$. Then 
\[
\sum_{\sigma=1,2} \int \frac{u_{\sigma} u^T_{\sigma}}
{\omega^2 (\omega^2 + \nu_l^2)} \chi_c \,  \dd \lambda = 
\frac{8 \pi}{3 c^3 \nu_l}  \arctan \left( \tfrac{ c C}{ \nu_l} \right) \bsI_3.
\]
Taking the prefactor into account and computing the now trivial determinant gives the $l$-th factor of the infinite product in \eqref{UVreg}.

The final step is to justify the exchange of the infinite product with the limit $N \to \infty$. To this end, we write,
for each $l \in \N$, 
\[
M_{l,N} :=  \frac{\eta^2}{\nu_l^2} \bsI_3 + S_{l,N}, \qquad M_l := \lim_{N \to \infty} M_{l,N} = 
\frac{\eta^2}{\nu_l^2} \bsI_3 + S_{l}.
\]
\begin{lemma}
We have 
\[
\lim_{N \to \infty} \prod_{l=1}^\infty  \det (\bsI_3 + M_{l,N}) =  \prod_{l=1}^\infty  \det (\bsI_3 + M_{l}).
\]
\end{lemma}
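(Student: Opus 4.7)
The natural approach is dominated convergence applied to the series of logarithms $\sum_{l \geq 1} \log\det(\bsI_3 + M_{l,N})$. Because $S_{l,N}$ is a sum of rank-one matrices $\bar y_{\sigma j}\bar y_{\sigma j}^T$ weighted by positive scalars, and $(\eta/\nu_l)^2 \bsI_3 \geq 0$, every $M_{l,N}$ is positive semidefinite. Writing its eigenvalues as $\lambda_1,\lambda_2,\lambda_3 \geq 0$, the inequality $\log(1+x) \leq x$ gives
\[
0 \;\leq\; \log\det(\bsI_3 + M_{l,N}) \;=\; \sum_{i=1}^{3} \log(1+\lambda_i) \;\leq\; \mathrm{tr}(M_{l,N}).
\]

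The crux is therefore an $N$-uniform bound of the form $\mathrm{tr}(M_{l,N}) \leq C/l^2$. The identity-part contribution $3\eta^2/\nu_l^2$ is already of this form since $\nu_l \sim l$. For the $S_{l,N}$-part, the elementary lower bound $\bar\omega_j^2 + \nu_l^2 \geq \nu_l^2$ yields
\[
\mathrm{tr}(S_{l,N}) \;\leq\; \frac{e^2\eta^2}{2\pi^2 m\,\nu_l^2}\,V \sum_{j \in J_N} \sum_{\sigma=1,2} \frac{|\bar y_{\sigma j}|^2}{\bar\omega_j},
\]
and it remains to bound the Riemann-type sum on the right uniformly in $N$. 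Using Cauchy--Schwarz on the average $\bar y_{\sigma j}$ gives $V |\bar y_{\sigma j}|^2 \leq \int_{\Lambda_j} |y_\sigma|^2 \dd \lambda = \int_{\Lambda_j} \chi_c/\omega \, \dd \lambda$, and splitting cubes into those which touch the origin (of which there are only finitely many, with a contribution of order $1/N$ that tends to zero) and those bounded away from it (on which $\bar\omega_j^{-1} \leq \mathrm{const} \cdot \inf_{\Lambda_j} \omega^{-1}$) reduces the sum to a constant multiple of $\int \chi_c/\omega^2 \dd\lambda < \infty$, independent of $N$. Combining, $\mathrm{tr}(S_{l,N}) \leq C'/l^2$ with $C'$ depending only on $\beta$, $C$, $\eta$, $e$, $m$.

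With the summable dominating sequence $c_l = C''/l^2$ in hand, Proposition~\ref{riemann convergence} and continuity of $\log\det$ on the cone of positive semidefinite matrices supply the termwise convergence $\log\det(\bsI_3 + M_{l,N}) \to \log\det(\bsI_3 + M_l)$ as $N \to \infty$ for each $l$. Dominated convergence on $(\N, \text{counting measure})$ then gives
\[
\lim_{N \to \infty} \sum_{l=1}^{\infty} \log\det(\bsI_3 + M_{l,N}) \;=\; \sum_{l=1}^{\infty} \log\det(\bsI_3 + M_l),
\]
and exponentiation delivers the lemma.

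The main technical point I anticipate is the uniform estimate on the Riemann sum for $\mathrm{tr}(S_{l,N})$: handling the $O(1)$ cubes touching the origin of $k$-space, where the integrand $\chi_c/\omega^2$ has a mild (and locally integrable) singularity. However, the geometric argument already used in the proof of Proposition~\ref{riemann convergence}, in particular the inequality \eqref{upperbound}, is exactly what is needed, so this step is not expected to be genuinely difficult, only slightly delicate.
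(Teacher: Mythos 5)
Your proof is correct and follows essentially the same route as the paper: pass to logarithms, establish termwise convergence from Proposition \ref{riemann convergence}, and apply dominated convergence with a uniform-in-$N$ bound of order $1/l^2$, which the paper extracts from the domination $g_{l,N}\leq (16/\nu_l)^2\omega^{-2}$ already obtained in the proof of that proposition. Your rederivation of the bound via $\log\det(\bsI_3+M_{l,N})\leq \mathrm{tr}(M_{l,N})$ and Cauchy--Schwarz is a minor variation, not a different argument.
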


\begin{proof}
By the continuity of the exponential function, it will suffice to prove the result for the logarithms. Then 
the quantity of interest is given by 
\[
\ln \left( \prod_{l=1}^\infty \det (\bsI_3 + M_{l,N}) \right) = \sum_{l=1}^\infty
\ln ( \det (\bsI_3 + M_{l,N}) ).
\]
Since by continuity
\[
\lim_{N \to \infty} \ln ( \det (\bsI_3 + M_{l,N}) ) = \ln ( \det (\bsI_3 + M_{l}) )
\]
for each $l \in \N$, the result will follow by dominated convergence.

At the end of the proof of Proposition \ref{riemann convergence} we have seen that $S_{l,N}$ is bounded by 
${\rm constant} / l^2$ uniformly in $N$. Thus so is $\Vert M_{l,N}\Vert$ and hence its three eigenvalues.
It follows that $|\det (\bsI_3 + M_{l,N}) ) - 1| \leq {\rm constant}/ l^2$ and therefore
\[
|\ln ( \det (\bsI_3 + M_{l,N}) )| \leq  {\rm constant}/ l^{2}
\] 
uniformly in $N$.
\end{proof}
We have thus proved Theorem \ref{Z(beta) UV regular}.


\subsection{Removing the ultraviolet limit} \label{sec remove uv cutoff}
We now investigate the limit $\gamma \to \infty$ in \eqref{UVreg}. Clearly, what we need to show 
is the convergence of 
\[
f(\gamma) := -2\sin \varphi \;\rho \ln (1+\gamma) + \sum_{l=1}^{\infty} \ln (1 + h_l(\gamma)),
\]
where 
\[
h_l(\gamma) = \frac{\rho^2}{l^2} + \frac{4}{\pi} \sin \varphi \, \arctan \left( \frac{\gamma \rho}{l}\right).
\]
We first note that 
\begin{align}
&0< h_l(\gamma) < \frac{\rho^2}{l^2} \left( 1+\frac{4}{\pi}\gamma \sin\varphi\right) \label{c}\\
&0< h_l(\gamma)  \nearrow  h_l:= \frac{\rho^2}{l^2} + 2 \frac{\rho}{l}\sin\varphi \text{ \;monotonically for } \gamma \nearrow \infty. \label{d}
\end{align}
By \eqref{c},  $(h_l(\gamma))$ is summable for all $\gamma$, and thus 
\be \label{f decomp}
f(\gamma) = \left( \sum_{l=1}^\infty h_l(\gamma) - 2 \sin\varphi \;\rho \ln (1+\gamma) \right) + \sum_{l=1}^\infty \left( \ln \big(1 + h_l(\gamma) \big) - h_l(\gamma)\right).
\ee
We now use the formula 
\be \label{eulerM}
\mathcal{C}= \lim_{s\to \infty}  \left[ \frac{2}{\pi}\sum_{l=1}^\infty \frac{1}{l}\arctan (\frac{s}{l})\,-\ln s\right]
\ee
for the Euler/Mascheroni constant $\mathcal{C}$. The first bracket above is then equal to 
\[
\sum_{l=1}^\infty \frac{\rho^2}{l^2} 
+ 2 \sin\varphi\; \rho \left( \ln \rho + \ln \left(\tfrac{\gamma}{1+\gamma}\right) + 
\frac{2}{\pi} \sum_{l=1}^\infty \frac{1}{l} \arctan\left(\tfrac{\rho \gamma}{l}\right) - \ln(\rho \gamma) \right),
\]
and by \eqref{eulerM} converges to 
$ 
\sum_{l=1}^\infty \frac{\rho^2}{l^2} + 2 \sin(\varphi) \rho (\caC + \ln(\rho))$.
The last sum in \eqref{f decomp} converges to $\sum_{l=1}^\infty \left( \ln \big(1 + h_l \big) - h_l\right)$ by \eqref{d} and monotone convergence, since $\ln(1+x)-x$ is monotone decreasing. Combining these two results, we obtain 
\[
\begin{split}
f & := \lim_{\gamma \to \infty} f(\gamma) =  2 \sin\varphi \; \rho (\caC + \ln(\rho)) + \sum_{l=1}^\infty\left( \ln \big(1 + h_l \big) - 2  \frac{\rho}{l} \sin\varphi \right)\\
& = 2 \sin\varphi \; \rho (\caC + \ln(\rho)) + \ln \left( \prod_{l=1}^\infty (1+ h_l) \e{- 2 \frac{\rho}{l}} \sin\varphi \right).
\end{split}
\]
Taking into account that $1+h_l=(1+\frac{\rho}{l}i\e{-i\varphi})(1+\frac{\rho}{l}i\e{i\varphi})$, and writing 
$2i \sin\varphi = \e{i \varphi} - \e{- i \varphi}$, we find 
\[
\e{f} = \frac{1}{\rho^2} 
\e{2 \rho \ln(\rho) \sin\varphi} \left| \rho \e{ \caC \rho \,
i\e{-\ii \varphi}} 
\prod_{j=1}^\infty \left( 1 + \frac{\rho}{l} i\e{-\ii \varphi} \right) \e{\frac{\rho}{l} i\e{-\ii \phi}} 
\right|^2.
\]
Using the representation 
\[
\Gamma(z)^{-1}=z\e{\mathcal{C}z}\prod_{l=1}^\infty(1+\frac{z}{l})\e{-\frac{z}{l}} \quad  \text{ for } \operatorname{Re}\,z>0,
\]
for the Gamma function on the right half-plane, we arrive at (\ref{UVsing}).

\section{The effective measure of states}

In this section, we investigate the inverse Laplace transform of $Z(\beta)$. As a first step, we prove the 
second part of Theorem \ref{Z(beta) UV singular}.

\subsection{Complete monotonicity} \label{sec complete monotonicity}
By Bernstein's theorem, a function $f$ on $[0,\infty[$ is the Laplace transform of a positive Borel measure if and only if
it is completely monotone (c.m.), i.e.\ $(-1)^n \partial_x^n f(x) \geq 0$ for all $x \geq 0$. To show that 
$Z(\beta)$ is c.m., we use (\ref{UVsing}) in order to get  
\[
\frac{1}{3}\ln Z(\beta) = \ln \rho - \ln (2 \pi) - 2 \sin \varphi\; \rho \ln \rho + \ln \Gamma(\rho\, i\e{-\ii \varphi})
+ \ln \overline{\Gamma(\rho\, i\e{-\ii \varphi})}.
\]
We now use  Binet`s formula 
\[
\ln \Gamma (z) = \int_{0}^\infty \frac{\e{- t z}}{t} \left( \frac{1}{1 - \e{-t}} - \frac{1}{t} - \frac{1}{2}
\right) \, \dd t + \frac{\ln (2 \pi)}{2} + (z-\tfrac{1}{2}) \ln (z) - z,
\]
valid for $\Re(z) > 0$. Since
$\ln (\bar z) = \overline{\ln(z)}$ for all $z \in \bbC$, we obtain  
\be\label{lnZ}
\begin{split}
\ln Z(\beta)
 =   \int_0^\infty \e{-t \rho}\,g(t)\, \dd t 
\,\; -  \,t_\varphi \,\rho
\end{split}
\ee
with 
\[
g(t):= \frac{6}{t}\, \Re\left( \frac{1}{1 - \e{-\tau}} - \frac{1}{\tau} - \frac{1}{2} \right),\quad \tau: =  i\e{-i\varphi}t,\quad t_\varphi:=6  \big( \sin \varphi+(\frac{\pi}{2}-\varphi) \cos \varphi \big).
\]
Below we will show that $g(t) \geq 0$ for all $t \geq 0$. This immediately implies that the first term 
on the right hand side of \eqref{lnZ} is c.m. Since the product of two c.m.\ functions is c.m.\, also
$\exp(f)$ is c.m.\ if $f$ is. Thus $Z(\beta)$ is c.m.\ as the product of 
$\exp (\int_0^\infty \e{-t\rho} g(t) \, \dd t)$ and the the clearly c.m. function $\exp(-t_\varphi \rho)$. 
The claim  $g(t) \geq 0$ follows from 
\begin{lemma}\label{pos}
For all  $w \in \bbC$ with $\Re(w) > 0$ we have
\[
h(w) := \Re \left( \frac{1}{1 - \e{-w}} - \frac{1}{w} - \frac{1}{2} \right) \geq 0.
\]
Equality holds if an only if $\Re(w) = 0$. 
\end{lemma}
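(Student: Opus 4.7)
The strategy is to identify $h(w)$ as a sum of manifestly non-negative real parts via the partial fraction expansion of $\coth$. A short calculation gives
\[
\frac{1}{1-\e{-w}} - \frac{1}{2} = \frac{\e{w}+1}{2(\e{w}-1)} = \frac{1}{2}\coth(w/2),
\]
so that $h(w) = \Re\bigl(\tfrac{1}{2}\coth(w/2) - \tfrac{1}{w}\bigr)$. Substituting $z = w/2$ into the classical partial fraction expansion $\coth(z) = \tfrac{1}{z} + \sum_{k \geq 1} \tfrac{2z}{z^2 + k^2\pi^2}$ and simplifying yields
\[
\frac{1}{1-\e{-w}} - \frac{1}{w} - \frac{1}{2} = \sum_{k=1}^\infty \frac{2w}{w^2 + 4k^2\pi^2}.
\]
The series converges absolutely on compact subsets of the open right half-plane, since each summand is $O(1/k^2)$ there.

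Next, I would compute the real part of a single summand for $w = u+\ii v$ with $u > 0$. A direct algebraic identity gives $w\,\bar w^2 = \bar w\,|w|^2$, and hence $\Re(w\,\bar w^2) = u|w|^2$. Multiplying numerator and denominator by $\overline{w^2 + 4k^2\pi^2}$, one obtains
\[
\Re \frac{2w}{w^2 + 4k^2\pi^2} = \frac{2u\bigl(|w|^2 + 4k^2\pi^2\bigr)}{|w^2 + 4k^2\pi^2|^2},
\]
which is strictly positive for every $k$ whenever $u > 0$ (the denominator cannot vanish since $w^2 = -4k^2\pi^2$ would force $\Re(w)=0$). Summing over $k$ then gives $h(w) > 0$ on the open right half-plane.

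For the equality statement, suppose $w = \ii v$ is purely imaginary and distinct from the poles $\{2\pi \ii k : k \in \bbZ \setminus \{0\}\}$ of the original function. Then for each $k$ the summand $2w/(w^2+4k^2\pi^2)$ is purely imaginary, so its real part vanishes, and absolute convergence gives $h(\ii v) = 0$. Combined with the strict positivity obtained above, this establishes the ``if and only if'' claim.

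The only subtlety I anticipate is interchanging $\Re$ with the infinite sum, which is handled by the uniform $O(1/k^2)$ bound noted above; thus no single step looks hard, and the only conceptual input beyond bookkeeping is recalling the partial fraction expansion of $\coth$.
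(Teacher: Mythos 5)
Your proof is correct, but it takes a genuinely different route from the paper. The paper computes $h(u+\ii v)$ explicitly as a single real fraction, namely $\bigl[(u^2+v^2)\sinh u-2u(\cosh u-\cos v)\bigr]\big/\bigl[2(u^2+v^2)(\cosh u-\cos v)\bigr]$, and then proves positivity of the numerator for $u>0$ by a short two-variable calculus argument (checking $v=0$ via second derivatives and then using $\partial_v r=2(v\sinh u-u\sin v)\geq 0$, which follows from $\sinh(u)/u>1>\sin(v)/v$). You instead invoke the Mittag--Leffler expansion of $\coth$, which after the identity $\frac{1}{1-\e{-w}}-\frac12=\frac12\coth(w/2)$ exhibits the whole expression as $\sum_{k\geq 1}2w/(w^2+4k^2\pi^2)$, a sum of terms whose real parts are individually positive for $\Re(w)>0$ and vanish on the imaginary axis; your computation of $\Re\bigl(2w/(w^2+4k^2\pi^2)\bigr)$ and the $O(1/k^2)$ justification for interchanging $\Re$ with the sum are both fine. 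The trade-off: the paper's argument is elementary and self-contained, needing no classical expansion, while yours is structurally cleaner (positivity is manifest termwise, and the equality case on $\Re(w)=0$ is immediate) and, as a bonus, makes explicit the simple poles at $w=2\pi\ii k$ with their residues, which is exactly the information the paper re-derives separately in Lemma \ref{g complex}. One cosmetic point: the lemma's ``equality iff $\Re(w)=0$'' clause only makes sense if the hypothesis is read as $\Re(w)\geq 0$ (away from the poles), and like the paper you correctly treat both the open half-plane and the imaginary axis.
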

\begin{proof}
We write $w = u + \ii v$.
If  $u = 0$, then $h(\ii v) = \frac{1 - \cos v}{1 - 2 \cos v + 1} - 
\frac{1}{2} = 0$. For $u>0$, a direct calculation yields 
\[
\begin{split}
h(z) &= \frac{(u^2 + v^2) (1 - \e{-2 u}) - 2 u (1 + \e{-2 u} - 2  \e{-u} \cos v)}
{2 (1 - 2 \e{-u} \cos v + \e{-2 u}) (u^2 + v^2)} =\\
& = \frac{(u^2 + v^2) \sinh(u) - 2 u (\cosh(u) -\cos(v))}
{2 (u^2+v^2) (\cosh(u) - \cos(v))}
\end{split}
\]
with  nonnegative denominator. We need to show that the numerator  $r(u,v)$  is positive 
if $u>0$.
By symmetry it suffices to treat $v \geq 0$. 
Take first $v=0$. Then $\partial_u (r(u,0)/u) = u \cosh u - \sinh u$, which at $u=0$ equals $0$. Since 
$\partial_u^2 (r(u,0)/u) = u \sinh u>0$, $u \mapsto r(u,0) > 0$ for $u>0$. Thus the result holds if 
$\partial_v r(u,v) = 2 (v \sinh u -u \sin v)$ is nonnegative   for all $u >0, \,v>0$. But this is clearly true, since $(\sinh u) / u >1 > (\sin v) / v$ for all $u, v \neq 0$.
\end{proof}


\subsection{Analytic continuation} \label{sec complex structure 1}
Here we prove Theorem \ref{ems structure 1}. Let us define
\[
Y(\rho) := Z(\beta) \e{t_\varphi \rho}.
\]
Since $\rho = \frac{\eta \beta}{2 \pi}$, the inverse Laplace transforms of $Y(\rho)$ and $Z(\beta)$ are related by translation and 
scaling. 
The key observation is that, due to (\ref{lnZ}), $Y$ is itself the exponential of a Laplace transform, 
\be\label{part}
Y(\rho)=  \exp \mathfrak{L} g (\rho) =  1 + \sum_{n=1}^\infty \frac
{1}{n!} \mathfrak{L}  g^{\ast n}(\rho).
\ee
Above, $\mathfrak{L}$ denotes the Laplace transformation,  and
\be \label{convol}
h \ast k\,(x) = \int_0^x h(t) k(x-t) \, \dd t
\ee
is the convolution of functions supported on $[0,\infty[$. Since $g$ is bounded on compact intervals and nonnegative, 
the final sum in \eqref{part} converges uniformly on compact intervals. 
Let us define 
\be \label{dens}
\varrho := \sum_{n=1}^\infty  \frac{1}{n!}\, g^{\ast n}.
\ee
By Lemma \ref{pos}, all terms of the above sum are positive. Thus $\mathfrak{L} \varrho = Y-1$ holds 
by monotone convergence, and the uniqueness of Laplace transforms shows 
$\mathfrak{L}^{-1} Y = \delta_0 + \varrho$. 
Moreover, a direct  calculation shows that $g(t) \to \frac{1}{2} \sin \varphi$ as $t \to 0$. Thus the same
holds for $\varrho$, and we have shown the first part of Theorem \ref{ems structure 1}, with 
$\phi(\omega) =  \frac{2 \pi}{\eta} \varrho \big( \tfrac{2 \pi}{\eta} (\omega -\omega_\varphi)\big)$ for $\omega \ge \omega_\varphi$ and zero else.

We now investigate the analytic continuation of $\varrho$. First note that 
the analytic continuation of $g$ is given by
\[
g(z) = \frac{3}{z}\left[ \frac{1}{1-\exp(-\ii\e{-\ii\varphi}z)}+ \frac{1}{1-\exp(\ii\e{\ii\varphi}z)}-\frac{2\sin \varphi}
{z}-1\right].
\]
\begin{lemma} \label{g complex}
$g$ is a meromorphic function on $\bbC$ with simple poles at the zeros of the denominators
\[
q_j = 2\pi j \e{-\ii \varphi} \quad \text{ and } \quad  \overline{q_j} = 2\pi j \e{\ii \varphi}, \qquad j \in \bbZ \setminus \{0\}.
\]
The residues are \,$\operatorname{Res}(g;q_j)=\frac{3\ii}{2\pi j}$, \,$\operatorname{Res}(g;\overline{q_j})=\frac{-3\ii}{2\pi j}
$.
\end{lemma}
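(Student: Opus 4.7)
\medskip
\noindent\textbf{Proof plan.}
The plan is to verify that the displayed formula indeed defines the analytic continuation of $g$, to locate the singularities of the right-hand side, to show that the apparent singularity at $z=0$ is removable, and finally to read off the residues at the remaining poles by linearising the denominators.

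First, I would justify the formula itself. The defining expression is $g(t)=\frac{6}{t}\Re(h(\tau))$ with $\tau=\ii \e{-\ii\varphi}t$ and $h(w)=\frac{1}{1-\e{-w}}-\frac{1}{w}-\frac{1}{2}$, and since $h$ has real Taylor coefficients, for $t>0$ one has $\Re(h(\tau))=\frac{1}{2}(h(\tau)+h(\bar\tau))$. Using $\bar\tau=-\ii \e{\ii\varphi}t$ and the elementary identity $\frac{1}{\tau}+\frac{1}{\bar\tau}=\frac{2\sin\varphi}{t}$, the bracket in the statement emerges. The right-hand side is meromorphic on $\bbC$, so by the identity theorem it is the unique meromorphic extension of $g$.

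Second, I would determine the potential singular set. The factor $1-\exp(-\ii \e{-\ii\varphi}z)$ vanishes exactly when $z\in 2\pi \e{\ii\varphi}\bbZ=\{\overline{q_k}:k\in\bbZ\}$, and $1-\exp(\ii \e{\ii\varphi}z)$ vanishes exactly when $z\in 2\pi \e{-\ii\varphi}\bbZ=\{q_k:k\in\bbZ\}$. Since $0<\varphi<\pi/2$, the two lattices meet only at $0$, so for $j\ne 0$ each $q_j$ (respectively $\overline{q_j}$) is an isolated simple zero of exactly one of the denominators; by the factor $\frac{3}{z}$ in front, these give simple poles of $g$. The point $z=0$ requires separate treatment.

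Third, I would show $z=0$ is removable. Using $\frac{1}{1-\e{-w}}=\frac{1}{w}+\frac{1}{2}+\frac{w}{12}+O(w^3)$, substitute $w=\ii \e{-\ii\varphi}z$ and $w=-\ii \e{\ii\varphi}z$, add the two expansions, and observe that the $\frac{1}{z}$-coefficient equals $-\ii \e{\ii\varphi}+\ii \e{-\ii\varphi}=2\sin\varphi$ while the constant term equals $1$. Thus the bracket in the lemma becomes $O(z)$ near $0$, and multiplication by $\frac{3}{z}$ leaves a holomorphic germ at the origin.

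Fourth, I would compute the residues. Near $q_j=2\pi j\e{-\ii\varphi}$, write $\ii \e{\ii\varphi}z=2\pi\ii j+\ii \e{\ii\varphi}(z-q_j)$ and expand to obtain
\[
1-\exp(\ii \e{\ii\varphi}z)=-\ii \e{\ii\varphi}(z-q_j)+O\bigl((z-q_j)^2\bigr),
\]
so the second bracket-term contributes residue $\ii \e{-\ii\varphi}$; multiplying by the holomorphic prefactor $\frac{3}{q_j}=\frac{3\e{\ii\varphi}}{2\pi j}$ yields $\operatorname{Res}(g;q_j)=\frac{3\ii}{2\pi j}$. An entirely analogous expansion at $\overline{q_j}=2\pi j\e{\ii\varphi}$ using the first bracket-term gives $\operatorname{Res}(g;\overline{q_j})=-\frac{3\ii}{2\pi j}$. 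The only real obstacle is bookkeeping of the signs and the factor $\e{\pm\ii\varphi}$ throughout; everything else is a routine expansion.
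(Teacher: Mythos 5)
Your proof is correct and fills in exactly the computation the paper declares ``routine'': the verification of the continuation formula, the removability at $z=0$ (which the paper also singles out as the only point worth noting), and the linearisation of the denominators at $q_j$, $\overline{q_j}$ with the sign and $\e{\pm\ii\varphi}$ bookkeeping done correctly. No gaps.
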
 
The proof is routine. Note only that we have seen already above 
that the common zero $z=0$ of the denominators actually is a regular point of $g$.

 Let us now consider the analytic continuation of the convolutions. Define  
$\caP :=\{z: \, z=q_j \mbox{ or } z=\overline{q_j} \text{ for all } j \in \bbZ \setminus \{0\}\}$ and
$\caS :=\{z:\,z=s\e{\pm i\varphi} \mbox{ \;for real\, } s \mbox{\, with\, } \vert s\vert\ge 2\pi\}$.
For brevity use $A(\caS,\caP)$ to denote the set of functions $f$ that are analytic on $\bbC \setminus \caS$ 
such that the components of $\caS\setminus\caP$ are cuts for $f$ emanating from the points of $\caP$. Analytic 
functions on $\bbC\setminus \caP$ are elements of  $A(\caS,\caP)$. ---  For any function $f$  and any subset $K$ of $\bbC
$ let $\Vert f\Vert_K$ denote the supremum of $\vert f\vert$ on $K$.

\begin{lemma} \label{anal cont}
Let $h\in A(\caS,\caP)$ and let $k$ be analytic on $\bbC\setminus\caP$. For $z\in\bbC\setminus \caS$ set
\[
F(z) := \int_0^z h(\zeta) k(z-\zeta) \, \dd \zeta
\]
integrating along the straight line joining $0$ and $z$. Then $F\in A(\caS,\caP)$  and $F$ is an analytic continuation of 
$h \ast k$ as defined in \eqref{convol}.
Let $K$ be any compact subset of $\bbC\setminus \caS$ and let $\tilde{K}$ denote the union of all straight lines joining 
the origin with some point of $K$. 
Then $\Vert F\Vert_K \le \Vert id\Vert_K\,\Vert h\Vert_{\tilde{K}}\,\Vert k\Vert_{\tilde{K}}$.
\end{lemma}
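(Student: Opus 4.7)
The plan is to substitute $\zeta = tz$ with $t \in [0,1]$, rewriting
\[
F(z) = z \int_0^1 h(tz)\,k((1-t)z)\,\dd t,
\]
and then to verify that this representation makes sense and is holomorphic for every $z \in \bbC \setminus \caS$. The reason for the substitution is that it converts the line integral over a moving contour into an ordinary parameter integral over the fixed interval $[0,1]$, to which standard tools (differentiation under the integral, or Morera together with Fubini) apply directly.

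First I would verify that the integrand is well-defined on $[0,1]$ whenever $z \in \bbC \setminus \caS$. The set $\caS$ is a union of four closed half-rays emanating from the four points $\pm 2\pi\e{\pm \ii\varphi}$, all at distance $2\pi$ from the origin; in particular $0\notin\caS$. If $z = r\e{\ii\theta}$ with $\theta \not\equiv \pm\varphi \,(\operatorname{mod}\pi)$, the whole ray $\{tz:t>0\}$ misses $\caS$. If $\theta$ does coincide with one of those angles, then the hypothesis $z\notin\caS$ forces $|z|<2\pi$, so $|tz|<2\pi$ for all $t\in[0,1]$ and again the segment avoids $\caS$. By the same argument applied to $(1-t)z$, both factors in the integrand are defined, and since $\caP\subset\caS$, $k((1-t)z)$ is in the domain of $k$ as well.

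Second, to obtain analyticity of $F$ at a fixed $z_0 \in \bbC\setminus\caS$, I would take a closed disk $D\ni z_0$ small enough that, by compactness of $[0,1]$ and positive distance of the segment $[0,z_0]$ from $\caS$, the sets $tD$ and $(1-t)D$ lie in $\bbC\setminus\caS$ for every $t\in[0,1]$. On such $D$ the integrand $(t,z)\mapsto h(tz)\,k((1-t)z)$ is jointly continuous and, for each fixed $t$, holomorphic in $z$. Applying Morera's theorem, with Fubini to interchange a closed loop integral in $z$ with the integral in $t$, yields analyticity of the integral on $D$, and multiplication by the entire factor $z$ preserves this. Hence $F$ is holomorphic on $\bbC\setminus\caS$. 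Since along the positive real axis (which is contained in $\bbC\setminus\caS$) the defining formula is precisely \eqref{convol}, $F$ agrees with $h\ast k$ there and is therefore the claimed analytic continuation, so $F\in A(\caS,\caP)$.

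The norm bound is then immediate from the reparameterization: for $z\in K$ the segment $\{tz:t\in[0,1]\}$ is by definition contained in $\tilde K$, so $|h(tz)|\leq \Vert h\Vert_{\tilde K}$ and $|k((1-t)z)|\leq \Vert k\Vert_{\tilde K}$ uniformly in $t$, giving
\[
|F(z)| \leq |z|\,\Vert h\Vert_{\tilde K}\,\Vert k\Vert_{\tilde K} \leq \Vert \mathrm{id}\Vert_K\,\Vert h\Vert_{\tilde K}\,\Vert k\Vert_{\tilde K}.
\]
The only real subtlety is the geometric observation that the straight segment from $0$ to $z$ stays in $\bbC\setminus\caS$ whenever $z$ does; this rests on the specific structure of $\caS$ as half-rays that are bounded away from the origin, and once it is in hand the rest of the argument is standard.
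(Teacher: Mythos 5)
Your argument correctly establishes three of the lemma's claims: that the straight segment $[0,z]$ stays in $\bbC\setminus\caS$ whenever $z$ does (the geometric observation about the four half-rays bounded away from the origin is right), that $F$ is holomorphic on $\bbC\setminus\caS$ and agrees with $h\ast k$ on $[0,\infty[$, and the norm bound via the reparameterization $\zeta=tz$. This part is essentially the paper's first step, which it dispatches with "plainly".

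However, there is a genuine gap: you have not proved $F\in A(\caS,\caP)$. By the paper's definition, membership in $A(\caS,\caP)$ requires more than analyticity on $\bbC\setminus\caS$; it requires that the components of $\caS\setminus\caP$ be \emph{cuts} for $F$ emanating from the points of $\caP$, i.e.\ that $F$ admit analytic continuations across each such component from either side, with genuine singularities confined to $\caP$. This is the harder half of the lemma and the part that is actually used later (the estimates for the continuations $\tilde F$, the jump formula $F_+-F_-$, and the convergence of the series defining $\varrho$ in $A(\caS,\caP)$ all rely on it). Your representation $F(z)=z\int_0^1 h(tz)\,k((1-t)z)\,\dd t$ cannot be pushed across a cut directly, because as $z$ crosses, say, the segment of $\caS$ between $q_j$ and $q_{j+1}$, the points $tz$ sweep through the singularities $q_1,\dots,q_j$ of $h$ and the points $(1-t)z$ through the singularities $z-q_1,\dots,z-q_j$ of $k(z-\cdot)$. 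The paper handles this by deforming the integration path to a contour $\gamma_z$ that detours around these points (above the first family, below the second) and by replacing $h$ with an analytic continuation $\tilde h$ across its own cuts; the resulting $\tilde F(z)=\int_{\gamma_z}\tilde h(\zeta)k(z-\zeta)\,\dd\zeta$ is then shown to extend $F$ analytically on a disc straddling the cut. Without some version of this contour-deformation argument your proof establishes only analyticity off $\caS$, not the full conclusion $F\in A(\caS,\caP)$.
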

\begin{proof}
Plainly, $F$ agrees with $h \ast k$ on the real axis  and is differentiable  at all 
$z \in \bbC\setminus \caS$.  Hence F is an analytic continuation of $h \ast k$ on $\bbC\setminus \caS$.  --- We turn to 
the analytic continuation  $\tilde{F}$ of $F$, e.g., from above across  the cut  between the points $q_j$  and $q_{j+1}$ 
with $j\in \bbN$. Let $z\in \caS_1:=\{z:\,z=s\e{-\ii \varphi} \mbox{ \;for  }s\ge 2\pi\}$ be between  $q_j$  and $q_{j+1}$. 
Join the points $0$  and $z$ along 
$\caS_1$ but avoiding $q_l$, respectively $z-q_l$,  for $l=1,\dots,j$, by making a small detour above, respectively 
below, $\caS_1$. By  analogous curves $\gamma_z$ one joins $0$ to $z$  for every $z$  in  the open disk $D$ centered at $
\frac{1}{2}(q_j+q_{j+1})$ with radius $\pi$. Let $\tilde{h}$ be an analytic continuation  of $h$ from above $\caS_1$ 
across the cuts along $\caS_1$. Then
\[
\tilde{F}(z) := \int_{\gamma_z} \tilde{h}(\zeta) k(z-\zeta) \, \dd \zeta
\]
defines an analytic function  on $D$. It  extends $F$, since for $z\in D$ above $\caS_1$ the closed curve composed by $
\gamma_z$ and the straight line from $z$ to $0$ does not contain any singularity of $\zeta\to \tilde{h}(\zeta) k(z-\zeta)
$. --- The remainder is obvious.
\end{proof} 

By the lemma, $g^{\ast n}\in A(\caS, \caP)$ for all $n \in \bbN$. Moreover, $\Vert g^{\ast 
n}\Vert_K\le \Vert \rm{id} \Vert_K^{n-1}\,\Vert g\Vert_{\tilde{K}}^n$. Obviously, a similar estimate holds more generally for 
the analytic continuations $\tilde{F}$ of $F$ on compact $K\subset \operatorname{dom}\tilde{F}$. Hence the series in 
(\ref{dens}) converges uniformly on compact sets implying that the limiting function belongs to $A(\caS, \caP)$.  
This proves the second part of Theorem \ref{ems structure 1} when taking into account that the translation and scaling 
takes $q_j$  into $p_j$. 

Let us comment on the cuts of $F$ from  Lemma \ref{anal cont}. Even if $h$ and $k$ are analytic on 
$\bbC\setminus \caP$ with poles  at points of $\caP$, the convolution $F$ may have cuts. More precisely, e.g., if $z$ 
lies on the cut between $q_j$ and $q_{j+1}$  with $j\in \bbN$ then 
\[
F_+(z)-F_-(z) = -2\pi\,\ii \sum_{l=1}^j\left[k(z-q_l)\,\operatorname{Res}(h;q_l)+h(z-q_l)\,\operatorname{Res}(k;q_l)\right]
\]
where $F_+(z)$ and $F_-(z)$ denote  the limit values of $F$ at $z$ approaching $z$ from above and from below the cut, respectively. This is an immediate consequence of  Residue Theorem integrating the meromorphic function $\zeta\to M_z(\zeta):=h(\zeta)k(z-\zeta)$ along the simply closed curve, which is symmetric with respect to $\caS_1$ and which joins $0$ to $z$ by $\gamma_z$. --- In case of $F=g\ast g$ the above formula yields the jump function 
\[
F_+(z)-F_-(z) = \sum_{l=1}^j\,\frac{6}{l}g(z-2\pi\,l\e{-i\varphi}).
\]


\subsection{Analysis of the singularities} \label{sec complex structure 2}
Here we prove Theorem \ref{ems structure 2}. Again it suffices to analyse $\varrho$ (\ref{dens}) instead of $\phi$ as $\phi(\omega) =  \frac{2 \pi}{\eta} \varrho \big( \tfrac{2 \pi}{\eta} (\omega -\omega_\varphi)\big)$ for $\omega \ge \omega_\varphi$. We refer to the analytic continuations of the convolutions $g^{\ast n}$ and of $\varrho$ defined in Section \ref{sec complex structure 1}. 

First note the following formula for $z =  s \e{\ii \chi}$ with $s>0$ and $\vert \chi \vert<\varphi$

\be \label{integral}
\int_0^z \frac{1}{\zeta -q_j} \frac{1}{z-\zeta -q_k} \, \dd \zeta = 
\frac{-2\pi \ii}{z-q_j-q_k} +\frac{\ln(z-q_j) -\ln q_j +\ln(z-q_k) -\ln q_k}{z-q_j-q_k},
\ee\\

\noindent which follows from the partial fraction expansion $ \frac{1}{\zeta -q_j} \frac{1}{z-\zeta -q_k} = \frac{1}{z-q_j-q_k} \left( \frac{1}{\zeta-q_j} + 
\frac{1}{z-\zeta-q_k}\right)$  evaluating the primitives $\ln(\zeta-q_j)$ and $-\ln(z-\zeta-q_k)$ of 
$(\zeta-q_j)^{-1}$ and $(z-\zeta-q_k)^{-1}$, respectively. Note that the second term in \eqref{integral} is regular at 
$q_j+q_k=q_{j+k}$. Next let us define  recursively the coefficients

\be \label{def cjn}
c_{j1} = \tilde c_{j1} = \frac{3}{j}, \qquad c_{j,n+1} = \sum_{k=1}^{j-1} c_{k1} c_{j-k,n}, 
\quad \tilde c_{j,n+1} = - \sum_{k=1}^{j-1} \tilde c_{k1} \tilde c_{j-k,n}
\ee
for all $j,n \in \bbN$, where the void sums for $j=1$ are zero. Set
\[
s_{jn}(z) := \frac{-c_{jn}}{2 \pi\ii (z-q_j)}, \qquad \tilde s_{jn}(z) :=  \frac{- \tilde c_{jn}}{2 \pi\ii  (z-q_j)}.
\]

\bigskip
In the following, we will concentrate on the forth quadrant  of $\bbC$. Subsequently it will be easy to extend the result to the  right half-plane. Fix $N \in \bbN$ and  let $z =  s \e{\ii \chi}$ with $0\le s\le2\pi N$ and\, $-\frac{\pi}{2} \le \chi \le 0$. 

\begin{proposition} \label{main step prop}
Then
\[
g^{\ast n}(z) = \sum_{j=1}^N s_{jn}(z) + L_n(z) \qquad \text{for } \, \vert\chi\vert   < \varphi,
\] 
\[
g^{\ast n}(z) = \sum_{j=1}^N \tilde s_{jn}(z) + \tilde{L}_n(z) \qquad \text{for }\vert \chi\vert>\varphi
\]
hold with $\vert L_n(z)\vert$ and $\vert \tilde L_n(z)\vert$  bounded  by 
$A^n (1+\big\vert \ln\sin | \varphi - \vert\chi\vert \vert \big\vert^s)$ for some constant $A$.
\end{proposition}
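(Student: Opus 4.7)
The proof proceeds by induction on $n$, using the explicit partial-fraction identity \eqref{integral} to extract the principal parts at each step.

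For the base case $n=1$, Lemma \ref{g complex} identifies the principal part of $g$ at $q_j$ as $s_{j,1}$ (note $c_{j,1}=3/j$ matches $\operatorname{Res}(g;q_j)=3\ii/(2\pi j)$). Subtracting the first $N$ such principal parts yields $g=\sum_{j=1}^N s_{j,1}+L_1$, where $L_1\in A(\caS,\caP)$ is holomorphic in a neighbourhood of the compact region $\{z:|z|\le 2\pi N\}$ away from the cuts, hence bounded there by a constant depending only on $N$ and $\varphi$.

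For the inductive step, substitute the decompositions of $g^{*n}$ and $g$ into
\[
g^{*(n+1)}(z)=\int_0^z g^{*n}(\zeta)\,g(z-\zeta)\,\dd\zeta
\]
and expand, producing four types of terms. The "pole--pole" contributions $s_{k,n}*s_{j,1}$ are computed directly from \eqref{integral}: the $-2\pi\ii/(z-q_{j+k})$ piece, multiplied by the prefactors $-c_{k,n}/(2\pi\ii)$ and $-c_{j,1}/(2\pi\ii)$, gives $-c_{k,n}c_{j,1}/(2\pi\ii(z-q_{j+k}))$, while the logarithmic remainder in \eqref{integral} is regular at $q_{j+k}$. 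Summing over pairs $(j,k)$ with $j+k=m\le N$ and invoking the recursion \eqref{def cjn} yields exactly $s_{m,n+1}(z)$. Pairs with $j+k>N$ have their pole outside the disk $|z|\le 2\pi N$ and are therefore absorbed into $L_{n+1}$.

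The three cross terms $s_{k,n}*L_1$, $L_n*s_{j,1}$, and $L_n*L_1$ are controlled via the supremum bound of Lemma \ref{anal cont}. The geometric key is that for $|\chi|<\varphi$ the straight segment $[0,z]$ has distance from $q_j$ comparable to $|q_j|\sin(\varphi-|\chi|)$, so integrating a factor $1/(\zeta-q_j)$ along the segment produces one logarithmic factor $|\ln\sin(\varphi-|\chi|)|$. Each iteration of the convolution therefore increases the power of this logarithm by a bounded amount; combined with the fact that the combinatorial sum $c_{j,n}=3^n\sum_{j_1+\cdots+j_n=j}\prod(1/j_i)$ grows at most exponentially in $n$ for fixed $N$, this yields the bound $|L_n(z)|\le A^n(1+|\ln\sin(\varphi-|\chi|)|^s)$ with $s$ growing linearly in $n$.

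For the case $|\chi|>\varphi$, the straight line $[0,z]$ crosses the ray $\{se^{-\ii\varphi}:s>0\}$, so the deformed-contour prescription in Lemma \ref{anal cont} is required. As one descends past $\chi=-\varphi$, the contour is detoured around successive poles $q_j$ with $j\le |z|\cos(\chi+\varphi)/(2\pi)$, each detour contributing $-2\pi\ii$ times the corresponding residue. These pickups effectively change the sign of each convolution factor in the recursive structure, giving rise to the coefficients $\tilde c_{j,n}$ defined by the alternating-sign recursion in \eqref{def cjn}; the remainder $\tilde L_n$ is bounded analogously.

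The main obstacle will be the bookkeeping of the logarithmic powers through the iterated convolutions, ensuring that the exponent $s$ in $|\ln\sin(\varphi-|\chi|)|^s$ grows only polynomially in $n$ (so that the resulting estimate on $\varrho=\sum_n g^{*n}/n!$ in \eqref{dens} remains meaningful) and that the combinatorial factors in the recursion can be dominated by $A^n$ uniformly in $N$. A secondary technical point is verifying that the residue pickups in the passage from $|\chi|<\varphi$ to $|\chi|>\varphi$ combine consistently with the recursion across all $n$ simultaneously.
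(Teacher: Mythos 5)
Your overall architecture is the right one and matches the paper's: induction on $n$, extraction of the pole at $q_{j+k}$ from $s_{j1}\ast s_{kn}$ via the partial-fraction identity \eqref{integral} to produce the recursion \eqref{def cjn}, and the observation that the sign flip for $\vert\chi\vert>\varphi$ comes from crossing the ray at angle $-\varphi$ (the paper phrases this as the two $2\pi\ii$ jumps of the logarithms $\ln(z-q_j)$, $\ln(z-q_k)$ in \eqref{integral}, turning the residue $-2\pi\ii$ into $+2\pi\ii$; your contour-deformation picture is essentially the same fact, just stated less precisely).

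The genuine gap is exactly the point you defer as "the main obstacle": you have misread the exponent $s$ in the claimed bound. In the Proposition, $s=\vert z\vert$ is the \emph{fixed} radial coordinate of $z$ (recall $z=s\e{\ii\chi}$ with $0\le s\le 2\pi N$); the exponent of $\vert\ln\sin\vert\varphi-\vert\chi\vert\vert\vert$ does \emph{not} grow with $n$. Your scheme, in which "each iteration of the convolution increases the power of this logarithm by a bounded amount," proves a strictly weaker statement with exponent $\sim n$, and that weaker statement does not yield the bound $\vert h_N(z)\vert\le C(1+\vert\ln(\varphi-\vert\chi\vert)\vert^{\eta N})$ of Theorem \ref{ems structure 2} (summing $A^n h^{cn}/n!$ over $n$ gives $\e{Ah^c}$, i.e.\ a power of $1/\sin\delta$, not a power of $\ln$). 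The paper's mechanism for keeping the exponent at $s$ uniformly in $n$ is the estimate of $I_{jn}=\vert\int_0^z s_{j1}(\zeta)L_n(z-\zeta)\,\dd\zeta\vert$: along the segment $\zeta=y\e{\ii\chi}$ one has $\vert z-\zeta\vert=s-y$, so the induction hypothesis gives $\vert L_n(z-\zeta)\vert\le A^n(1+h(\delta)^{s-y})=A^n(1+h(\delta)^{s}\,h(\delta)^{-y})$; splitting at $y=\pi$, the factor $h(\delta)^{-\pi}$ on $y>\pi$ absorbs the single extra logarithm $E(1+h(\delta))$ produced by integrating $\vert s_{j1}\vert$ past its pole, since $h(\delta)^{1-\pi}\le 1$ once $h(\delta)\ge1$. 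Without this cancellation the induction does not close on the stated bound. (Two smaller points: the uniformity in $n$ of the constants controlling $\sum_{j,k}\vert L_{jkn}\vert$ and $\sum_j\vert s_{jn}\ast L_1\vert$ rests on $c_{jn}=0$ for $j<n$, which you should invoke explicitly; and your expansion correctly includes the $L_n\ast L_1$ term, which must also be bounded, e.g.\ by the same splitting.)
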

\begin{proof}
We proceed by induction. The statement for $n=1$ follows from Lemma \ref{g complex}; indeed, subtracting from $g$ the  first order poles  at $q_1,\dots,q_N$ leaves a 
bounded function $L_{1}$. Let us now assume that 
$g^{\ast n}$ has the asserted decomposition and consider  the case $\vert \chi\vert <\varphi$.
Then
\be \label{g n+1}
\begin{split}
g^{\ast( n+1)}(z)
&=  \sum_{j,k=1}^N \int_0^z s_{j1}(\zeta) s_{kn}(z-\zeta) \, \dd \zeta \\
& +  \sum_{j=1}^N \int_0^z \left( s_{j1}(\zeta) 
L_n(z-\zeta) + s_{jn}(\zeta) L_1(z-\zeta)\right) \, \dd \zeta .
\end{split}
\ee
By  (\ref{integral}) 
we find 
\be \label{large formula}
  \int_0^z s_{j1}(\zeta)
s_{kn}(z-\zeta) \, \dd \zeta =\frac{-c_{j1}c_{kn}}{2\pi \ii(z-q_{j+k})} + L_{jkn}(z),
\ee
where $L_{jkn}$  is regular at $q_{j+k}$.
The first term above contributes to $s_{j+k,n+1}$. 
We will show below that none of the remaining terms entering $g^{\ast (n+1)}$ has any first order 
poles, and thus by collecting all terms with $j+k=m$ we obtain the recursive equation for $c_{m,n+1}$.
The calculation for $\tilde s_{jn}$, i.e.\ for $\vert \chi\vert >\varphi$, is very similar. The only difference is that the residue  of the pole at $q_{j+k}$ in (\ref{integral}) is $2\pi \ii$ instead of $-2\pi \ii$, a difference which is due to two jumps of $2\pi \ii$ of the logarithmic terms at the cut along the negative real axis. This gives the additional 
minus sign in the recursion for $\tilde c_{jn}$.

We turn to $L_{jkn}$. 
As mentioned above, there is no singularity at $q_{j+k}$. There are two
logarithmic singularities at $q_j$ and $q_k$. Other than that, $L_{jkn}$ is bounded. Set $\delta:=\vert \varphi-\vert \chi\vert \vert$ and $h(\delta):=\vert\ln(\sin\delta)\vert$.
Then $|L_{jkn}(z)| \leq c_{j1} c_{kn}\, K(1+ h( \delta))$ for some constant $K$. It is immediate from (\ref{def cjn})  that $c_{jn} = 0$ for $j < n$. Thus there exists some constant $B$, independent of $n$, such that
\[
\sum_{j,k=1}^N |L_{jkn}(z)| \leq B\,(1+h(\delta)).
\]
Now we tackle the second line of \eqref{g n+1}. By the induction hypothesis, 
$|L_n(z)| \leq A^n (1+h(\delta)^s)$. Thus
\[
\begin{split}
I_{jn}(z) & := \left| \int_0^z s_{j1}(\zeta) L_n(z-\zeta) \, \dd \zeta \right| \leq \\ 
& \leq A^n h(\delta)^s \int_0^s | s_{j1}(y\e{\ii\chi}) | \,h(\delta)^{-y} \, \dd y+A^n\int_0^s | s_{j1}(y\e{\ii\chi}) | 
\, \dd y.
\end{split}
\]
For $s \le\pi$, the integrands on the right hand side above are bounded, and hence $I_{jn}(z) \leq$
$ DA^n(1+h(\delta)^s)$ for some constant $D$. For $s > \pi$, we decompose the domain of integration into $y\le \pi$ and $\pi < y \le s$. On the first interval,
the integrands are bounded with the same result as above. On the second interval, we replace $h(\delta)^{-y}$
by $h( \delta)^{-\pi}$. The integral over $s_{1j}$ alone is clearly bounded by $E(1+ h(\delta))$ for some constant $E$,
and we estimate crudely 
\[
\sum_{j=1}^N
I_{jn}(z) \leq 3N A^n(D+E)(1+ h(\delta)^s).
\]
Finally, since $L_1$ is bounded, we have 
\[
\sum_{j=1}^N \left| \int_{0}^z s_{jn}(\zeta) L_1(z-\zeta) \, \dd \zeta \right|  \leq F (1+h(\delta)),
\]
where the constant $F$  does not depend on $n$. Altogether, we find 
$|L_{n+1}(z)| \leq 3NA^n(D+E+B+F) (1+h(\delta)^s)$.
Clearly, setting $A=3N(1+D+E+B+F)$, this is bounded by $A^{n+1}(1+ h(\delta)^s)$.  
\end{proof}

In order to extend this result to the right half-plane one has to take account of the poles $\overline{q}_j$ of $g$, too. This amounts in replacing $s_{jn}(z)$ by $s_{jn}(z)+\overline{s_{jn}(\overline{z})}$. Let $M_n$ denote the remainder in place of $L_n$. It satisfies the same kind of estimate with some new constant $A$. Using \eqref{dens} and the fact that $c_{jn} = 0$ for $j < n$, we now have 
\[
\varrho(z) = \sum_{j=1}^N \left(\sum_{n=1}^j \frac{c_{jn}}{n!}\right)\left(\frac{-1}{2 \pi \ii (z - q_j)}+\frac{1}{2\pi\ii(z-\overline{q}_j)}\right) + M(z)
\]
with $M:=\sum_{n=1}^\infty\frac{1}{n!} M_n$ and $\vert M(z)\vert\le \e{A}(1+h(\delta)^s)$ for $-\varphi<\chi<\varphi$.
 The same formula with $\tilde c_{jn}$ and 
$\tilde M$ holds for $\vert\chi\vert >\varphi$. Obviously $h(\delta)^s$ can be replaced by $\vert \ln\delta\vert^{2\pi N}$.
\begin{proposition}
Define $c_{jn}$ and $\tilde c_{jn}$ as in \eqref{def cjn}. Then, for all $j\in\bbN$, 
\[
\sum_{n=1}^j \frac{c_{jn}}{n!} = \binom{j+2}{2}, \qquad \text{and} \qquad \sum_{n=1}^j \frac{\tilde c_{jn}}{n!} =(-1)^{j+1} \binom{3}{j} 
\quad \text{ if } j \leq 3, \text{and } 0 \text{ otherwise.}
\]  
\end{proposition}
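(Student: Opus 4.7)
The natural tool here is ordinary generating functions in the index $j$. Define
\[
G(z) := \sum_{j=1}^{\infty} c_{j1}\, z^{j} = 3\sum_{j=1}^{\infty}\frac{z^{j}}{j} = -3\ln(1-z),
\]
and, for each fixed $n$, set $G_n(z) := \sum_{j\geq 1} c_{jn}\, z^j$ and $\tilde G_n(z) := \sum_{j\geq 1} \tilde c_{jn}\, z^j$. The recursion $c_{j,n+1}=\sum_{k=1}^{j-1}c_{k1}c_{j-k,n}$ is simply the Cauchy product rule for coefficients, so $G_{n+1}=G\cdot G_n$ and inductively $G_n = G^n$. Similarly $\tilde G_{n+1} = -G\cdot \tilde G_n$ gives $\tilde G_n = (-1)^{n-1} G^n$. (One should check that these formal power series manipulations are legitimate, which follows at once from the fact that $G$ has no constant term, so that each coefficient of $G^n$ is a finite sum.)

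Summing over $n$ with the weights $1/n!$ and using $c_{jn}=0=\tilde c_{jn}$ for $n>j$ (which is immediate from the recursions) one may freely interchange the sums, obtaining
\[
\sum_{j=1}^{\infty}\left(\sum_{n=1}^{j}\frac{c_{jn}}{n!}\right) z^{j}
= \sum_{n=1}^{\infty}\frac{G(z)^{n}}{n!}
= \e{G(z)} - 1 = (1-z)^{-3} - 1,
\]
and analogously
\[
\sum_{j=1}^{\infty}\left(\sum_{n=1}^{j}\frac{\tilde c_{jn}}{n!}\right) z^{j}
= \sum_{n=1}^{\infty}\frac{(-1)^{n-1} G(z)^{n}}{n!}
= 1 - \e{-G(z)} = 1 - (1-z)^{3}.
\]

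Extracting coefficients from the binomial expansion $(1-z)^{-3}=\sum_{j\geq 0}\binom{j+2}{2} z^{j}$ yields the first identity, and the polynomial $1-(1-z)^{3}=3z-3z^{2}+z^{3}$ gives the second, since $(-1)^{j+1}\binom{3}{j}$ equals $3,-3,1$ for $j=1,2,3$ and is zero for $j\geq 4$. The only non-routine point is bookkeeping of the interchange of the two sums, and this is trivial thanks to the vanishing $c_{jn}=0$ for $n>j$; so I do not expect any real obstacle in this proof.
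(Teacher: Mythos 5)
Your proof is correct and follows essentially the same route as the paper: the same generating function $-3\ln(1-z)$, the identity $G_n=G^n$ (resp.\ $\tilde G_n=(-1)^{n-1}G^n$) from the Cauchy product, exponentiation to $(1-z)^{-3}-1$ and $1-(1-z)^3$, and coefficient extraction. The paper phrases the last step via $\frac{1}{j!}\partial_x^j$ at $x=0$ rather than the binomial expansion, but this is only a cosmetic difference.
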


\begin{proof}
We introduce the generating functions $F_n(x) = \sum_{j=1}^\infty c_{jn} x^j$. Then, 
$F_1(x) = \sum_{j=1}^\infty \frac{3}{j} x^j = - 3 \ln(1-x)$, and $F_n = F_1^{\,n}$. The last statement 
follows from 
\[
F_1(x) F_n(x) = \sum_{j=1}^\infty \left( \sum_{k=1}^{j-1} c_{k1} c_{j-k,n} \right) x^j = 
\sum_{j=1}^\infty c_{j,n+1} x^{j} = F_{n+1}(x).
\]
Now recall  that $c_{jn} = 0$ for $j < n$. Then
\[
\begin{split}
\sum_{n=1}^\infty \frac{c_{jn}}{n!} & = \frac{1}{j!} \partial_x^j \sum_{n=1}^\infty \frac{F_1(x)^n}{n!} \Big\vert_{x=0}
= \frac{1}{j!} \partial_x^j \left( \e{F_1(x)}-1 \right) \Big \vert_{x=0} = \\
& = \frac{1}{j!} \partial_x^j \e{- 3 \ln(1-x)} \Big\vert_{x=0} = \frac{1}{j!} \partial_x^j \frac{1}{(1-x)^3} 
\Big\vert_{x=0} = \binom{j+2}{2}.
\end{split}
\]
For the generating function $\tilde F_n$ of $(\tilde c_{jn})$, a similar calculation leads to 
$\tilde F_n = (-1)^{n+1} \tilde F_1^{\,n}$ with $\tilde F_1=F_1$. Thus, as above, 
\[
\sum_{n=1}^\infty \frac{\tilde c_{jn}}{n!} = - \frac{1}{j!}\partial_x^j  \e{3 \ln (1-x)} \Big\vert_{x=0} = 
- \frac{1}{j!} \partial_x^j (1-x)^3 \Big\vert_{x=0}.
\]
This equals $3$ for $j=1$, $-3$ for $j=2$, $1$ for $j=3$, and zero otherwise, as was claimed. 
\end{proof}

The stated analyticity properties of $h_N(z)$ and $\tilde h_N(z)$ from Theorem \ref{ems structure 2} 
are  true by the fact that both $\phi$ 
(cf. Theorem \ref{ems structure 1}) and $\ell_j$ possess them.  This concludes the proof of Theorem \ref{ems structure 2}. 
%


\section{Appendix}

Here we show that Definition \ref{reso} of a resonance  is equivalent to that given e.g.\ in 
\cite[XII.6]{RS}. We start with a Lemma that connects the analyticity of a measure's density with 
properties of its Stieltjes transform.

Let $\mu$ be a finite Borel measure on $[0,\infty[$ and $f:=\tilde{\mu}$  its Stieltjes transform. Note that $f$ is holomorphic on $\bbC\setminus[0,\infty[$ and satisfies $\overline{f(z)}=f(\overline{z})$.  We recall the classical inversion formula valid for $t\ge 0$:
\be\label{stieltjes}
\mu([0,t])=\operatorname{lim}_{\delta\to 0+}\operatorname{lim}_{\epsilon\to 0+}\frac{1}{\pi}\int_0^{t+\delta}\operatorname{Im}\,f(s+\ii \epsilon)\,\dd s.
\ee
\begin{lemma}\label{analcont}
Let $t_0>0$ and $U\subset\bbC$ an open disc centered at $t_0$ with radius smaller than $t_0$. Then the following two 
statements are equivalent: 
\begin{itemize}
\item[(i)] there is an holomorphic function $F$ on $U$ which equals $f$ on $\{z\in U:\operatorname{Im} z>0\}$.
\item[(ii)] $\mu$ on $U\cap \bbR$ is absolutely continuous with respect to Lebesgue measure, 
and its density  is the restriction on $U\cap \bbR$ of a holomorphic function $\phi$ on $U$.
\end{itemize}
If (i) and (ii) hold, then obviously $f(t):=\operatorname{lim}_{\epsilon\to 0+}f(t+\ii \epsilon)$ converges uniformly on compact subsets of $U\cap \bbR$, and $F = f + 2\pi \ii \phi $ holds on $\{z\in U: \operatorname{Im} \,z<0\}$.
\end{lemma}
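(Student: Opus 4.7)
The strategy is to exploit a Plemelj--Sokhotski type construction in one direction and the reflection symmetry of the Stieltjes transform in the other. Throughout, the hypothesis that $U$ is centred at $t_0>0$ with radius $r<t_0$ ensures $U\cap[0,\infty[\,=\,U\cap\bbR=:I$, a single bounded open interval disjoint from $0$. I would split $\mu=\mu_1+\mu_2$ where $\mu_1:=\mu\vert_I$ and $\mu_2$ is supported in $[0,\infty[\setminus I$; correspondingly $f=\tilde\mu_1+\tilde\mu_2$. Since the support of $\mu_2$ does not meet $U$, the Stieltjes transform $\tilde\mu_2$ is holomorphic on $U$ unconditionally, and everything that follows concerns $\tilde\mu_1$.

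For (ii) $\Rightarrow$ (i), assume $\phi$ is holomorphic on $U$, so that $\mu_1$ has the $C^\infty$ density $\phi\vert_I$ on $I$. By the classical Plemelj--Sokhotski formula the Cauchy transform $\tilde\mu_1$ extends continuously from each side of $I$, with boundary values $\tilde\mu_1(t\pm\ii 0)$ satisfying the jump relation $\tilde\mu_1(t+\ii 0)-\tilde\mu_1(t-\ii 0)=2\pi\ii\,\phi(t)$. I would then define
\[
G(z):=\tilde\mu_1(z)\ \text{for }z\in U\cap\{\Im z>0\},\qquad G(z):=\tilde\mu_1(z)+2\pi\ii\,\phi(z)\ \text{for }z\in U\cap\{\Im z<0\},
\]
and $G(t):=\tilde\mu_1(t+\ii 0)$ for $t\in I$. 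The jump formula makes $G$ continuous across $I$, while $G$ is holomorphic on $U\setminus I$ by construction; Morera's theorem applied to triangles split along $I$ then gives $G$ holomorphic on all of $U$. Setting $F:=G+\tilde\mu_2$ produces a function holomorphic on $U$ which on $U\cap\{\Im z>0\}$ equals $\tilde\mu_1+\tilde\mu_2=f$, establishing (i); and on $U\cap\{\Im z<0\}$ the construction directly yields $F=f+2\pi\ii\,\phi$.

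For (i) $\Rightarrow$ (ii), let $F$ be the holomorphic extension of $f$ from $U\cap\{\Im z>0\}$ to $U$ and define $\tilde F(z):=\overline{F(\bar z)}$. This is holomorphic on $U$ (which is symmetric about $\bbR$). For $z\in U\cap\{\Im z<0\}$ one has $\bar z$ in the upper half of $U$, so $F(\bar z)=f(\bar z)$; combined with $\overline{f(\bar z)}=f(z)$ this gives $\tilde F(z)=f(z)$. Thus $\tilde F$ automatically provides a holomorphic extension of $f\vert_{U\cap\{\Im z<0\}}$ to $U$, and
\[
\phi(z):=\frac{1}{2\pi\ii}\bigl(F(z)-\tilde F(z)\bigr)
\]
is holomorphic on $U$ with $\phi(t)=\tfrac1\pi\Im F(t)$ on $I$. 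To identify $\phi\vert_I$ with the density of $\mu_1$, note first that $F$ is continuous up to $I$, so $f(t+\ii\eps)\to F(t)$ uniformly on compact subsets of $I$ as $\eps\to 0^+$. Applying the inversion formula \eqref{stieltjes} to two continuity points $a,t\in I$ of $\mu$ with $a<t$ and splitting $\int_0^{t+\delta}=\int_0^{a}+\int_a^{t+\delta}$, the first piece yields $\mu([0,a])$ by classical Stieltjes inversion on the fixed interval $[0,a]$, and the second piece yields $\tfrac1\pi\int_a^{t+\delta}\Im F(s)\,\dd s$ by the uniform convergence just established. Letting $\delta\to0^+$ and using continuity of $\Im F$ gives $\mu([a,t])=\int_a^t\phi(s)\,\dd s$, and a density argument (continuity points of $\mu$ are dense in $I$) promotes this to absolute continuity of $\mu\vert_I$ with density $\phi\vert_I$, concluding (ii).

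The supplementary assertions then follow with no additional work: the uniform convergence $f(t+\ii\eps)\to F(t)$ on compact subsets of $I$ is exactly the continuity of $F$ up to $I$ used in (i)$\Rightarrow$(ii); and the identity $F=f+2\pi\ii\,\phi$ on $U\cap\{\Im z<0\}$ is either the direct reading-off $G+\tilde\mu_2=\tilde\mu_1+\tilde\mu_2+2\pi\ii\,\phi$ from the first direction, or the combination $F-\tilde F=2\pi\ii\,\phi$ with $\tilde F=f$ on $U\cap\{\Im z<0\}$ from the second. The main technical obstacle I expect is the careful handling of the double limit in \eqref{stieltjes}: the piece $\int_0^{a}$ sits largely outside $U$, where the uniform boundary behaviour from $F$ is unavailable, so one must invoke classical Stieltjes inversion on $[0,a]$, which forces $a$ to be chosen as a continuity point of $\mu$; the outer $\delta$-limit must then be removed by a density-of-continuity-points argument in $I$ before one can cleanly conclude absolute continuity of $\mu\vert_I$.
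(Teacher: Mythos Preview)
Your argument is correct, and the overall architecture---reflection $\tilde F(z)=\overline{F(\bar z)}$ to produce $\phi$, piecewise definition plus Morera for the converse, and the Stieltjes inversion formula to identify the density---matches the paper. Where you diverge is in the technical engine behind (ii)$\Rightarrow$(i): you localize by splitting $\mu=\mu_1+\mu_2$ so that $\tilde\mu_2$ is manifestly holomorphic on $U$, and then invoke the Plemelj--Sokhotski jump relation for $\tilde\mu_1$ as a black box. The paper instead keeps the full measure and obtains the existence of the boundary values $f(t\pm\ii 0)$ by an explicit contour deformation: it replaces the straight path $[0,\infty[$ by one with a small semicircular detour around $t$, so that the deformed integral has a harmless $\eps\to 0^+$ limit, while the difference from the original integral is computed by the residue theorem as $2\pi\ii\,\phi(t+\ii\eps)$. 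The paper then appeals to a separately stated lemma (the one immediately following) to show that $\tfrac1\pi\Im f(t+\ii\eps)\to\phi(t)$ uniformly on compacta, which is what feeds Morera.

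The trade-off is clear: your route is shorter and leans on a standard named result, but that result is usually stated for H\"older densities on closed arcs, so strictly speaking you are implicitly re-doing your own splitting trick once more (restrict to a compact subinterval of $I$ where $\phi$ is smooth, and push the rest into the already-holomorphic part). The paper's route is longer but entirely self-contained, needing only the residue theorem and an elementary uniform-convergence estimate. For (i)$\Rightarrow$(ii) you are actually more careful than the paper, which passes from \eqref{stieltjes} to $\mu(]t_1,t_2])=\tfrac1\pi\int_{t_1}^{t_2}\Im F$ in one line; your device of anchoring at continuity points $a,t$ and then using density is exactly the justification that line deserves.
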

\begin{proof}
Assume first that $F$ exists. Then (\ref{stieltjes}) yields immediately for $t_1,\,t_2$ in $U\cap\bbR,\,t_1<t_2$: $\mu(]t_1,t_2])=\frac{1}{\pi}\int_{t_1}^{t_2}\operatorname{Im} F(s)\dd s$. This implies that $\mu$ on $U\cap \bbR$  is absolutely continuous with respect to Lebesgue measure and that the density  is given by $t\to \frac{1}{\pi}\operatorname{Im}F(t)$. Then $\phi(z):=\frac{1}{2\pi \ii}(F(z)-\overline{F(\overline{z})})$ is its analytic continuation  on $U$. For $z\in U$ with $\operatorname{Im}\,z<0$ one has $\overline{F(\overline{z})}=\overline{f(\overline{z})}=f(z)$, whence $F(z)=f(z)+2\pi \ii \phi(z)$.

Now assume the existence of $\phi$. For $z\in U$ set $F(z):=f(z)$ if $\operatorname{Im} z>0$ and $F(z):=f(z)+2\pi \ii 
\phi(z)$ if $\operatorname{Im} z<0$. Then $F$ is holomorphic on $U\setminus \bbR$. 
Fix $t\in U\cap \bbR$. Let   $\delta>0$ such that $\{t-\delta,\,t+\delta\}\subset U$. We define three paths in $\bbC$. 
First $\gamma_1(s):=s$ for $s\in [0,\infty[$. Then $\gamma_2$ differs from $\gamma_1$ only in that it joins the point $t-
\delta$ to $t+\delta$ not by the straight line but by the semi-circle through $t+i\delta$.  Finally the closed path $
\gamma_3$ joins  $t-\delta$ to $t+\delta$ forwards by the straight line and backwards by the semi-circle through $t+\ii
\delta$. Then for $0<\epsilon<\delta$ one has
\[
\int_{\gamma_1}\frac{\dd\mu(z)}{z-(t+\ii\epsilon)}-\int_{\gamma_2}\frac{\dd\mu(z)}{z-(t+\ii\epsilon)} =\int_{\gamma_3}\frac
{\phi(z)\dd z}{z-(t+\ii\epsilon)}=2\pi \ii \,\phi (t+\ii\epsilon)
\]
by the residue theorem. Therefore
\[
\operatorname{lim}_{\epsilon\to 0+}f(t+\ii\epsilon)=\int_{\gamma_2}\frac{\dd\mu(z)}{z-t} +2\pi \ii\,\phi(t)
\]
exists.  Similarly $\operatorname{lim}_{\epsilon\to 0+}f(t-\ii\epsilon)$ is shown to exist. 

Set $F(t):=\operatorname{lim}_{\epsilon\to 0+}f(t+\ii\epsilon)$. Thus $F$ is defined on the whole of $U$. It remains to 
show that $F$ stays holomorphic. By the following lemma
$F(t+\ii\epsilon)-F(t-\ii\epsilon)=f(t+\ii\epsilon)-f(t-\ii\epsilon)-2\pi \ii \, \phi(t)+2\pi \ii\left(\phi(t)-\phi(t-\ii
\epsilon)\right)\to 0$ as $\epsilon\to 0+$ uniformly on compact subsets of $U\cap\bbR$. From this the premises on $F$ of Morera's theorem easily follow, whence the result.
\end{proof}

Now the equivalence of Definition \ref{reso} with the traditional definition of resonances follows by applying 
Lemma \ref{analcont}  to $\mu=\mu^{(u)}$ and $\mu =\mu_0^{(u)}$: choosing $U$ such that 
the density $\phi$ of $\mu^{(u)}$ or $\mu_0^{(u)}$ is analytic on $U$, we find that the continuation $F$ on $U$ of the 
resolvent to the second Riemann sheet  is given by $f + 2 \pi \ii \phi$.

We close with a lemma showing that a strong version of \eqref{stieltjes} holds if the  density of $\mu$ is continuously differentiable.

\begin{lemma}\label{uniconv}
Let $\mu$ be absolutely continuous with respect to Lebesgue measure on $]A,B[$ for $0\le A<B$ and let the density $\phi$ be continuously differentiable.  Then $\phi(t)=\operatorname{lim}_{\epsilon\to 0+}$ $\frac{1}{\pi}\operatorname{Im}f(t+i\epsilon)$ holds uniformly on every compact subset of  $]A,B[$.
\end{lemma}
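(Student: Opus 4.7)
The plan is to recognize that $\frac{1}{\pi}\operatorname{Im} f(t+\ii\epsilon)$ is a Poisson integral of $\mu$ and then to run the standard approximate-identity argument, with care taken to control the contribution of $\mu$ outside $]A,B[$. Concretely, writing $P_\epsilon(x):=\epsilon/(x^2+\epsilon^2)$, we have
\[
\frac{1}{\pi}\operatorname{Im} f(t+\ii\epsilon)=\frac{1}{\pi}\int P_\epsilon(s-t)\,\dd\mu(s).
\]
Fix a compact $K\subset\,]A,B[$ and choose $\eta>0$ with $K\subset[A+2\eta,B-2\eta]$; set $I:=[A+\eta,B-\eta]$. Split the integral into the contribution from $I$ and the contribution from $\bbR\setminus I$.

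For $s\notin I$ and $t\in K$ we have $|s-t|\geq\eta$, hence $P_\epsilon(s-t)\leq\epsilon/\eta^2$, and therefore
\[
\frac{1}{\pi}\int_{\bbR\setminus I}P_\epsilon(s-t)\,\dd\mu(s)\leq\frac{\epsilon\,\mu(\bbR)}{\pi\eta^2}\xrightarrow[\epsilon\to0+]{}0
\]
uniformly in $t\in K$. On $I$, $\mu$ has density $\phi$, so we split
\[
\frac{1}{\pi}\int_I P_\epsilon(s-t)\phi(s)\,\dd s=\phi(t)\cdot\frac{1}{\pi}\int_I P_\epsilon(s-t)\,\dd s+\frac{1}{\pi}\int_I P_\epsilon(s-t)\bigl(\phi(s)-\phi(t)\bigr)\,\dd s.
\]
The first integral equals $\frac{1}{\pi}\bigl[\arctan\tfrac{B-\eta-t}{\epsilon}-\arctan\tfrac{A+\eta-t}{\epsilon}\bigr]$; since for $t\in K$ both $(B-\eta-t)/\epsilon$ and $-(A+\eta-t)/\epsilon$ are bounded below by $\eta/\epsilon$, this tends to $1$ uniformly on $K$, and so the first term tends to $\phi(t)$ uniformly.

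The remaining term is where the hypothesis of continuous differentiability is used. Since $\phi'$ is continuous on $]A,B[$ and $I$ is compact there, $M:=\sup_I|\phi'|<\infty$, so $|\phi(s)-\phi(t)|\leq M|s-t|$ for $s,t\in I$. Hence
\[
\left|\frac{1}{\pi}\int_I P_\epsilon(s-t)\bigl(\phi(s)-\phi(t)\bigr)\,\dd s\right|\leq\frac{M\epsilon}{\pi}\int_{-(B-A)}^{B-A}\frac{|u|}{u^2+\epsilon^2}\,\dd u=\frac{M\epsilon}{\pi}\ln\!\left(1+\frac{(B-A)^2}{\epsilon^2}\right),
\]
which is $O(\epsilon\ln(1/\epsilon))$, and hence tends to $0$ uniformly in $t\in K$. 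Combining the three pieces yields the claimed uniform convergence. The only real obstacle is the non-integrable singularity of $P_\epsilon$ near $s=t$; it is tamed precisely by the Lipschitz bound coming from $C^1$-regularity, which turns the potentially divergent integral into the harmless logarithmic factor above.
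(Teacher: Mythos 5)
Your proof is correct and follows essentially the same route as the paper's: write $\frac{1}{\pi}\operatorname{Im}f(t+\ii\epsilon)$ as a Poisson integral, kill the far part using only the finiteness of $\mu$, use the normalization of the Poisson kernel for the main term, and control the error via the Lipschitz bound $|\phi(s)-\phi(t)|\le M|s-t|$ coming from $C^1$-regularity, which yields the harmless $O(\epsilon\ln(1/\epsilon))$ factor. If anything, your insertion of the intermediate interval $I=[A+\eta,B-\eta]$ is slightly more careful than the paper's version, which takes the Lipschitz constant from $[A_1,B_1]$ but integrates over all of $[A,B]$.
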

\noindent{\it Proof.} Put $\chi(s;t,\epsilon):=\frac{\epsilon/\pi}{(s-t)^2+\epsilon^2}$. 
Then $\frac{1}{\pi}\operatorname{Im}f(t+\ii\epsilon)=\int_{[0,\infty[} \chi(s;t,\epsilon)\dd \mu(s)$.
 Let $A<A_1<B_1<B$ and $t\in[A_1,B_1]$. 
 
i) First $\int_{[B,\infty[} \chi(s;t,\epsilon)\dd \mu(s) \le \frac{\epsilon/\pi}{(B-B_1)^2}\int_{[0,\infty[}\dd  \mu(s)\to 0$ uniformly  as $\epsilon\to 0+$. Similarly this holds for 
$\int_{[0,A] }\chi(s;t,\epsilon)\dd \mu(s)$.

ii) Next 
\[ 
\begin{split}
1\ge \int_A^B \chi(s;t,\epsilon)\,\dd s & = 
\frac{1}{\pi}(\operatorname{arctan}\frac{B-t}{\epsilon}-\operatorname{arctan}\frac{t-A}{\epsilon})\ge \\
& \geq \frac{1}{\pi}(\operatorname{arctan}\frac{B-B_1}{\epsilon}-\operatorname{arctan}\frac{A_1-A}{\epsilon}) \to 1
\end{split}
\]
as $\epsilon\to 0+$. This implies that $\int_A^B \chi(s;t,\epsilon)\,\dd s$ tends uniformly to $1$ as $\epsilon\to 0+$.

iii) Because of i) it remains to show that $\vert\int_A^B\phi(s) \chi(s;t,\epsilon) \dd s-\phi(t)\vert\le\vert\int_A^B(\phi(s)-\phi(t))\chi(s;t,\epsilon)\, \dd s\vert+\phi(t)\vert\int_A^B\chi(s;t,\epsilon)\dd s-1\vert$ tends uniformly to $0$ as $\epsilon\to 0+$. This holds true for the second summand because of ii). As to the first summand  the mean value theorem yields $\vert\phi(s)-\phi(t)\vert\le c\vert s-t\vert$ with $c:=\operatorname{sup}\{\phi'(\tau):\tau\in [A_1,B_1]\}$. This finishes the proof since $\int_A^B\vert s-t\vert \chi(s;t,\epsilon)\,\dd s=\frac{\epsilon}{\pi}\int_{A-t}^{B-t}\vert r\vert(r^2+\epsilon^2)^{-1}\dd r\le \frac{2\epsilon}{\pi}\int_0^Br(r^2+\epsilon^2)^{-1}\dd r =$ $\frac{\epsilon}{\pi} \operatorname{ln}\frac{B^2+\epsilon^2}{\epsilon^2}\to 0$ as $\epsilon \to 0+$.\hfill{$\Box$}


\begin{thebibliography}{99}
\bibitem{BFS98} V.\ Bach, J.\ Fr\"ohlich, I.M.\ Sigal: {\em Quantum Electrodynamics  of Confined  Nonrelativistic Particles.} Adv. in Math. {\bf 137}, 299-395 (1998). 
\bibitem{BFS00} V.\ Bach, J.\ Fr\"ohlich, I.M.\ Sigal: {\em Return to Equilibrium.} Jounr.\ Math.\ Phys.\ 
{\bf 41}, 3985-4060 (2000). 
\bibitem{BHLMS02} V.\ Betz, F.\ Hiroshima, J.\ L\H{o}rinczi, R.\ A.\ Minlos, H.\ Spohn: 
{\em Ground state properties of the Nelson Hamiltonian - A Gibbs 
measure-based approach.} Rev.\ Math.\ Phys.\ {\bf 14}, 173-198, 
(2002).
\bibitem{CK87} D.P.L.\ Castrigiano, N.\ Kokiantonis: {\em Quantum oscillator in a non-self-interacting radiation field: Exact calculation of the partition function}. Phys.\ Rev.\ A {\bf 35}, 10, 4122--4128 (1987).
\bibitem{CKS93} D.P.L.\ Castrigiano, N.\ Kokiantonis, H.\ Stiersdorfer: {\em Natural Spectrum of a Charged Quantum Oscillator}. Il Nuovo Cimento {\bf 108 B}, 7, 765--777 (1993).
\bibitem{CDG87} C.\,Cohen-Tannoudji,  J.\,Dupont-Roc, G.\,Grynberg: Photons and Atoms, John Wiley 1987.
\bibitem{CDG98} C.\,Cohen-Tannoudji,  J.\,Dupont-Roc, G.\,Grynberg: Atom-Photon Interactions, John Wiley 1998.
\bibitem{F} R.P.\,Feynman: Statistical Mechanics, Benjamin 1972.
\bibitem{GLL} M.\ Griesemer, E.\ Lieb, M.\ Loss: {\em Ground states in non-relativistic quantum electrodynamics}, 
Invent.\ Math.\ {\bf 145 (3)}, 557-595 (2001).
\bibitem{HS01} F.\ Hiroshima, H.\ Spohn: {\em Enhanced binding through coupling to a quantum field}, 
Ann. Henri Poincar{\'e} {\bf  2}, 1159--1187 (2001).
\bibitem{H90} W.\,Hunziker:{\em Resonances, Metastable States and Exponential Decay Laws in Perturbation Theory}, Commun. Math. Phys. \textbf{132}, 177-188 (1990).
\bibitem{KM07} W.\ Kirsch, B.\ Metzger: {\em The integrated density of states for random Schr\"odinger operators}, 
Proceedings of Symposia in Pure Mathematics {\bf 76(2)}, 649 (2007).
\bibitem{LM}  W.H. Louisell: Quantum Statistical Properties of Radiation, John Wiley 1973, Chap. 5, and
D. Marcuse: Principles of Quantum Electronics, Academic Press 1980, Chap. 5.
\bibitem{MSB08-1} M.\ Merkli, I.M.\ Sigal, G.P.\ Berman: {\em Decoherence and thermalization}, Phys.\ Rev.\ Lett.\ 
{\bf 98}, 130401 (2007).
\bibitem{MSB08-2} M.\ Merkli, I.M.\ Sigal, G.P.\ Berman: {\em Resonance theory of decoherence and thermalization}, 
Ann.\ Phys.\  
{\bf 323}, 373-412 (2009).
\bibitem{RS} M. Reed, B. Simon: Methods of Modern Mathematical Physics IV:  Analysis of Operators, Academic Press 1978.
\bibitem{Si79} B.\,Simon: Functional Integration and Quantum Physics, Academic Press 1979.
\bibitem{Sp04} H.\ Spohn: Dynamics of  charged particles and their radiation fields, Cambridge University Press 2004. 


\end{thebibliography}
\end{document}